\newcommand\ignore[1]{}
\newcommand{\define}[1]{\textbf{#1}}
\newcommand{\lm}{\vec{\mathcal  Q}}
\newcommand{\domlm}{|\vec {\mathcal  Q}|}
\newcommand{\iiff}{\mathbin \Leftrightarrow}
\newcommand{\imp}{\mathbin \Rightarrow}
\newcommand{\dimp}{\mathbin \Leftarrow}
\newcommand{\dneg}{{\sim}}
\newcommand{\Om}{\Omega}
\newcommand{\om}{\omega}
\newcommand{\gtl}{\mathsf{GTL}}
\newcommand{\gtlrel}{\gtl_\mathrm{rel}}
\newcommand{\gtlreal}{\gtl_\mathbb R}
\newcommand{\ltl}{\mathsf{LTL}}
\newcommand{\lanfull}{{\mathcal L}}
\newcommand{\lgt}[1]{|#1|}
\newcommand{\nx}{\mathsf X}
\newcommand{\y}{\mathsf Y}
\newcommand{\nec}{\mathsf G}
\newcommand{\has}{\mathsf H}
\newcommand\until{\mathbin \mathsf  U}
\newcommand\since{\mathbin \mathsf  S}
\newcommand\release{\mathbin \mathsf R}
\newcommand{\ps}{\mathsf F}
\newcommand{\past}{\mathsf P}
\newcommand{\val}[1]{\lb #1 \rb}
\newcommand{\moments}[1]{\mathbb M_{ #1 }}
\newcommand{\type}[1]{\mathbb T_{ #1 }}
\newcommand{\peq}{\leq}
\newcommand\lb{\left\llbracket}
\newcommand\rb{\right\rrbracket}
\newcommand\<{\left (}
\renewcommand\>{\right )}
\newcommand\cbra{\left \{}
\newcommand\cket{\right \}}
\DeclareSymbolFont{AMSb}{U}{msb}{m}{n}
\newcommand{\compa}{\lesseqgtr}
\newcommand{\ptype}{\type\infty}
\newcommand\eqdef{\mathrel{\mathop:}=}
\newcommand{\circop}{\nx^{-1}}
\newcommand{\yop}{\y^{-1}}
\newcommand{\CMod}{\mathfrak C}
\newcommand{\CIcon }{\CMod}
\newcommand\cqm[1]{\nicefrac{\CMod}{#1}}
\newcommand\powerset{\mathcal P}
\newtheorem{theorem}{Theorem}[section]
\newtheorem{corollary}[theorem]{Corollary}
\newtheorem{lemma}[theorem]{Lemma}
\newtheorem{proposition}[theorem]{Proposition}
\theoremstyle{definition}
\newtheorem{definition}[theorem]{Definition}
\newtheorem{example}[theorem]{Example}
\theoremstyle{remark}
\newtheorem{remark}[theorem]{Remark}
\title{G\"odel--Dummett linear temporal logic}
\author[1,2]{Juan Pablo Aguilera\footnote{\href{mailto:aguilera@logic.at}{\tt aguilera@logic.at}}}
\author[3]{Mart\'in Di\'eguez\footnote{\href{mailto:martin.dieguezlodeiro@univ-angers.fr}{\tt martin.dieguezlodeiro@univ-angers.fr}}}
\author[1,4]{David Fern\'andez-Duque\footnote{\href{mailto:fernandez-duque@ub.edu}{\tt fernandez-duque@ub.edu}}}
\author[1]{Brett McLean\footnote{\href{mailto:brett.mclean@ugent.be}{\tt brett.mclean@ugent.be}}}
\affil[1]{Department of Mathematics WE16, Ghent University,Ghent, Belgium}
\affil[2]{Institute of Discrete Mathematics and Geometry, Vienna University of Technology, Vienna, Austria}
\affil[3]{LERIA, University of Angers, Angers, France}
\affil[4]{Department of Philosophy, University of Barcelona, Barcelona, Spain}
\begin{document}




%
%
%
%
%

\maketitle

\begin{abstract}We investigate a version of linear temporal logic whose propositional fragment is G\"odel--Dummett logic (which is well known both as a superintuitionistic logic and a t-norm fuzzy logic). We define the logic using two natural semantics: first a real-valued semantics, where statements have a degree of truth in the real unit interval and second a `bi-relational' semantics. We then show that these two semantics indeed define one and the same logic: the statements that are valid for the real-valued semantics are the same as those that are valid for the bi-relational semantics. This G\"odel temporal logic does not have any form of the finite model property for these two semantics: there are non-valid statements that can only be falsified on an infinite model. However, by using the technical notion of a quasimodel, we show that every falsifiable statement is falsifiable on a finite quasimodel, yielding an algorithm for deciding if a statement is valid or not. Later, we strengthen this decidability result by giving an algorithm that uses only a polynomial amount of memory, proving that G\"odel temporal logic is PSPACE-complete. We also provide a deductive calculus for G\"odel temporal logic, and show this calculus to be sound and complete for the above-mentioned semantics, so that all (and only) the valid statements can be proved with this calculus.  
\end{abstract}



%
%
%



\section{Introduction}

The importance of temporal logics and, independently, of fuzzy logics in computing is well established \cite{degola16a, 364485}. 
The potential usefulness of their combination is clear: for instance, it would provide a natural framework, in symbolic artificial intelligence, for the specification and verification of systems dealing with vague or incomplete data~\cite{Kruse1991}.

One of the most thoroughly studied fuzzy logics is \emph{G\"odel logic} (also called G\"odel--Dummett logic) \cite{10.1007/978-3-642-16242-8_4,BPZ07}.  G\"odel logic was introduced by G\"odel \cite{Godel} in his proof that intuitionistic logic is not finite-valued and later axiomatized by Dummett \cite{10.2307/2964753}. It is an extremely useful and natural framework because it is both a t-norm fuzzy logic and an extension of intuitionistic logic, thus capturing both the former's approach to reasoning with vagueness as well as the latter's approach to evidence-based reasoning. G\"odel logic has been considered as a foundation for logic programming~\cite{AlsinetCGSS08,duboisLP91}, answer set programming~\cite{pearce96,Pearce06}, and parallel $\uplambda$-calculus~\cite{Avron91,aschieriCG17}. 

Applications of fuzzy temporal reasoning in computer science, engineering, and artificial intelligence are numerous~\cite{DuboisP89,lamineK00,MUKHERJEE20131452, GERBER2008351}.
One very high-profile and promising current focus of artificial intelligence research is
\emph{autonomous driving}, where fuzzy logic is often the underlying framework for the controllers used to operate vehicles~\cite{4515884,Shukla}. The use of fuzzy logic allows for the specification of control rules in natural language that can be written and understood by non-specialists. Rules applied to autonomous vehicles include if--then rules that are triggered under certain preconditions. For instance, these rules could include the following:
\begin{enumerate}[label=(\alph*)]
	\item\label{statement1} If the vehicle ahead is \emph{close} and the car's speed is \emph{high}, then the car must \emph{decelerate}.  
\end{enumerate}	
 Usually the type of rules declared are reactive, i.e.~the trigger depends only on the current state, and the action is implemented at the immediately following instant. 
Expressing rules whose evaluation implicates a possible infinite number of time instants is very complicated to do with existing approaches. Thus a more powerful framework is needed.

Perhaps the best-known and most successful formalism for temporal reasoning is \textit{linear-time temporal logic} (or propositional linear-temporal logic; $\ltl$)~\cite{pnueli}: the extension of propositional logic with a variety of temporal modalities such as `eventually', `henceforth', and `until'. Indeed the success of $\ltl$ in program and systems verification resulted in the 1996 Turing Award being conferred on Pnueli.

Several variants of $\ltl$ lacking the classical, binary conception of truth have been investigated, with a variety of application domains. In logic programming and non-monotonic reasoning, equilibrium logic~\cite{Pearce06} was given an $\ltl$ extension in~\cite{taspa,taspb}, based on the G\"odel logic with three truth values $\mathsf G_3$ (also known as the logic of here-and-there~\cite{Hey30}). Such an $\ltl$ extension of $\mathsf G_3$ is axiomatised in~\cite{BalbianiDieguezJelia}. 
Combinations of $\ltl$ with intuitionistic propositional logic have been considered as a framework for reasoning about dynamical systems \cite{F-D18}.
 Combinations of fuzzy logic with $\ltl$ have been used to enable vague time references, for example in~\cite{327373,10.1145/2629606}. 
Those approaches have many fields of application: disaster management, robot control, or smart home systems; that is, to dynamic systems whose evolution depends on uncertain conditions~\cite{Lu2010}. 

Motivated by the many applications of G\"odel logic and its temporal extensions, particularly $\ltl$ extensions, in this paper we provide a detailed investigation of a linear-temporal extension of the real-valued G\"odel logic (where degrees of truth can take any value in the unit interval $[0,1]$). 
Specifically, we investigate a temporal extension of propositional G\"odel logic whose syntax can express all the familiar $\ltl$ \emph{future} and \emph{past} modalities, `until', `henceforth', `since', and so on. We call this logic $\gtl$ (short for \emph{G\"odel temporal logic}). This investigation extends that of \cite{10.1007/978-3-031-15298-6_2,gtlkr}, whose results apply to the language with only the \emph{unary}, \emph{future} modalities `eventually' and `henceforth'.

This language allows us to express statements whose satisfaction may require arbitrarily large amounts of time. For instance, we can express rules like: 
	\begin{enumerate}[label=(\alph*), start=2]
		\item\label{tstatement1} If the petrol tank is \emph{low}, it will remain low \emph{until} either the tank is refilled or it becomes empty.
		\item\label{tstatement2} If \emph{rockfall} is observed near Mo\^{u}tiers, then \emph{henceforth} drive with caution  in the vicinity of Mo\^{u}tiers.
	\end{enumerate}
	Unlike example \ref{statement1}, expressing \ref{tstatement1} and \ref{tstatement2} requires a language with both a vague component and an elaborate temporal component capable of referring to actions potentially in the distant future.


In fact, $\gtl$ possesses two natural semantics, corresponding to whether it is viewed as a fuzzy logic or a superintuitionistic logic.
As a fuzzy logic, propositions take values in $[0,1]$, and truth values of compound propositions are defined using standard operations on the reals.
As a superintuitionistic logic, models consist of \emph{bi-relational structures}: a set equipped with a partial order to interpret implication as on an intuitionistic Kripke frame and a function to interpret the $\ltl$ tenses.
 However, notably, the two semantics give rise to the same set of valid formulas (\Cref{thm:equal}).

One may be concerned that the passage from two-valued to infinite-valued $\ltl$ could lead to an explosion in computational complexity, as has been known to happen when combining sub-classical logics with transitive modal logics: for these combinations it is often algorithmically undecidable if a formula is valid~\cite{Vidal21}, or decidable with only a superexponential space upper bound being known \cite{BalbianiDF21}. 
As we will see, this need not be the case: our combination of G\"odel--Dummett logic with linear temporal logic remains \textsc{pspace}-complete (\Cref{thmPSPACE}), the minimal possible complexity given that classical $\ltl$ embeds into it.
This is true even when the syntax is enriched with the dual  implication connective~\cite{CecyliaRauszer1980}, which has been argued in \cite{BilkovaFK21} to be useful for reasoning with incomplete or inconsistent information. In contrast, the $\ltl$ extension of equilibrium logic mentioned earlier is \textsc{expspace}-complete~\cite{10.1109/LICS.2015.65}.

Even the decidability of $\gtl$ is quite surprising, since its `modal companion' $\mathsf{S4.3} \times \ltl$ is not recursively axiomatisable~\cite{10.1093/logcom/11.6.909}, and, as we show, $\gtl$ does not enjoy the finite model property for either of its two semantics (i.e.~there are non-valid formulas whose falsifiability can only be seen using an infinite model).

For G\"odel logics, it is often possible to prove decidability despite the lack of finite model property by considering alternative semantics (see e.g.~\cite{CaicedoMRR17}).
For example, the logic $\mathsf{GS4}$ does not have the real-valued finite model property, but it does have the bi-relational finite model property~\cite{BalbianiDF21}.
Since $\gtl$ does not enjoy \emph{either} version of the finite model property, we instead introduce \emph{quasimodels}, which do satisfy their own version of the finite model property.
Quasimodels are not `true' models, because the relation indicating the `next' instant of time is no longer necessarily a function, but quasimodels give rise to standard bi-relational models by unwinding.
Similar structures were used to prove upper complexity bounds for dynamic topological logic \cite{Fernandez09,dtlaxiom} and intuitionistic temporal logic \cite{F-D18}, but they are particularly effective in the setting of G\"odel temporal logic, yielding the optimal \textsc{pspace} upper bound.

A natural deductive calculus for $\gtl$ can readily be defined by combining axioms for G\"odel logic (with dual implication) together with axioms for linear temporal logic. However, as in the classical setting, a standard attempt to prove the completeness of these axioms via a \emph{canonical model} will not yield a structure where the tenses are interpreted correctly. For $\ltl$, one may use the technique of \emph{filtration} to obtain a finite model and then use the method of associating a \emph{characteristic formula} to each world to characterise it up to bisimulation, using these to reason that infinitary tenses (such as `henceforth') are indeed interpreted correctly~(see e.g.~\cite{DemriGLBook}).
For $\gtl$, we are able to prove completeness by using quasimodels to provide an analogue of the finite filtrated model (\Cref{theocomp}), with the additional complication that in this sub-classical setting it is necessary to assign \emph{two} characteristic formulas to each world.



\subsubsection*{Related work}\smallskip

As already mentioned, many different extensions of temporal reasoning beyond the classical Boolean propositional framework have been studied, for many different reasons. Here we briefly describe the work most closely related to our own, and highlight the relationships with our approach and results.

In \cite{LPAR-21:Godel_logics_and_fully}, Baaz and Preining consider a fragment of monadic \emph{first-order} $\ltl$. By a reduction from monadic first-order G\"odel logic, they show that the valid sentences of this monadic first-order $\ltl$ are not computably enumerable. This makes our own result all the more surprising: in the classical setting, monadic first-order (indeed, second-order) logic is already decidable, from which decidability for $\ltl$ follows, although a \textsc{pspace} complexity bound requires additional proof. In the G\"odel setting, we lose the decidability of monadic first-order logic, suggesting that G\"odel $\ltl$ should at the very least also suffer in terms of complexity. Yet this is not the case, and the \textsc{pspace} upper bound is preserved.

\emph{Temporal equilibrium logic} \cite{taspb}  is a non-monotonic $\ltl$ that is one of the most popular semantics for temporal answer set programming.
Its definition consists of two main components:
a monotonic basis in terms of an $\ltl$ extension of the G\"odel logic $\mathsf G_3$ and a selection criterion to select minimal models.
Such a selection criterion can be seen as a second-order quantification over the set of temporal $\mathsf G_3$ models.
As a result, the complexity of the satisfiability problem of temporal equilibrium logic turns out to be \textsc{expspace} \cite{10.1109/LICS.2015.65}.
There exists, nevertheless, an interesting fragment, called splittable temporal logic programs \cite{10.1007/978-3-642-20895-9_9}, whose complexity remains open. The approach presented in this paper could help to determine the complexity of the satisfiability problem for the case of such programs.

Finally, there have been several investigations into fuzzy \emph{interval temporal logics}, for example \cite{BADALONI2006872, SCHOCKAERT20081158}. This work focuses on the study of fuzzy time periods and reasoning about them (e.g. ``$X$ ends before $Y$ begins''). They do not work with a non-classical implication, but rather focus on disjunctions of atomic expressions. Accordingly, the complexity of the tasks they consider is \textsc{np}. In principle, the relations they consider can be represented in a language such as ours, although they work in continuous time. Indeed, a continuous-time analogue of our framework would be a natural avenue for future exploration.

\subsubsection*{Structure of the article}\smallskip
 \noindent\Cref{SecBasic}: \textbf{Syntax and semantics}. We introduce the temporal language that we work with, and then introduce both the real-valued semantics and bi-relational semantics for G\"odel temporal logic.

\smallskip

\noindent \Cref{sec:real}: \textbf{Real-valued versus~bi-relational validity}.  We prove  the equivalence of these two semantics, that is, that they yield the same validities (\Cref{thm:equal}). 

\smallskip

\noindent \Cref{SecNDQ}: \textbf{Labelled systems and quasimodels}. We first note that we do not have a finite model property for either of these semantics. But then we define quasimodels (\Cref{compatible}), and in later sections show that our G\"odel temporal logic is sound and complete for the class of finite quasimodels. 

\smallskip

\noindent \Cref{SecGen}: \textbf{From quasimodels to bi-relational models}.  We show that G\"odel temporal logic is sound for (all) quasimodels, constructing a bi-relational model from an arbitrary quasimodel by unwinding selected paths within the quasimodel.

\smallskip

\noindent \Cref{Sec:quotient}: \textbf{From bi-relational models to finite quasimodels}. Given a bi-relational model falsifying a formula, we use a quotient construction to produce a finite (exponential in the length of the formula) quasimodel also falsifying the formula. This completes the proof that the semantics of G\"odel temporal logic can be reduced to finite quasimodels, yielding the decidability of G\"odel temporal logic (\Cref{theorem:decide}).

\smallskip

\noindent \Cref{secAx}: \textbf{A Hilbert-style deductive calculus}. Introduces our Hilbert calculus for the valid formulas of G\"odel temporal logic.

\smallskip

\noindent \Cref{secCan}: \textbf{The canonical system and canonical quasimodel}. Introduces the canonical system, essentially a standard canonical model familiar from modal logic.
However, this system is not a `true' model, so we proceed to construct the canonical quasimodel by re-using the quotient construction from \Cref{Sec:quotient}.

\smallskip

\noindent \Cref{SecChar}: \textbf{Characteristic formulas}. For the purpose of proving that the canonical quasi\-model is indeed a quasimodel, we define characteristic formulas, which define models locally up to bisimulation.

\smallskip

\noindent \Cref{SecComp}: \textbf{Completeness}. Characterisistic formulas are used to prove that the canonical quasimodel satisfies all required properties of a quasimodel, from which completeness of our axiomatisation follows (\Cref{theocomp}).

\smallskip

\noindent \Cref{Sec:PSPACE}: \textbf{PSPACE-complete complexity}. We refine our decidability result, showing that G\"odel temporal logic is in fact \textsc{pspace}-complete (\Cref{thmPSPACE}).
\smallskip

\noindent \Cref{SecConc}: \textbf{Concluding remarks}. We summarise our results, contrast them with the properties of similar logics, and suggest some directions for future work.

\smallskip

There are several routes through the paper for readers interested only in a portion of the results.
Section~\ref{SecBasic} is common background needed for the rest of the paper.
Section~\ref{sec:real} may be skipped by readers not interested in real-valued semantics, or alternatively read after the rest of the paper.
Sections~\ref{SecGen}--\ref{Sec:quotient} provide a full proof of decidability, and readers interested only in complexity may then skip to Section~\ref{Sec:PSPACE}.
Readers interested in completeness of the Hilbert calculus should read Sections~\ref{SecGen}--\ref{SecComp}; note that the proof of completeness builds on the proof of decidability.

\section{Syntax and semantics}\label{SecBasic}

In this section we first introduce the temporal language we work with and then two possible semantics for this language: real-valued semantics and bi-relational semantics. 

Fix a countably infinite set $\mathbb P$ of propositional variables. Then the \define{G\"odel temporal language} $\lanfull$ is defined by the grammar (in Backus--Naur form):
\[\varphi,\psi :=   p  \ |\ \varphi\wedge\psi \ |\ \varphi\vee\psi  \ |\ \varphi\imp \psi  \ |\  \varphi\dimp \psi  \ |  \ \nx\varphi \ |\ \y \varphi \  |  \ \nec \varphi \ |\ \has\varphi \  | \  \varphi \until \psi \ |\ \varphi \since \psi , \]
where $p\in \mathbb P$. Here, $\nx$ is read as `ne\textbf xt', $\y$ as `\textbf yesterday, $\nec$ as `\textbf going (to always be)', $\has$ as `\textbf has (always been)', $\until$ as `\textbf until', and $\since$ as `\textbf since'. The connective $\dimp$ is \emph{co-implication} and represents the operator that is dual to implication \cite{Wolter1998}.

 We also use the following abbreviations: 
 \begin{itemize}
 \item
 $\top$ abbreviates $p\imp p$, for some fixed, but unspecified, $p \in \mathbb P$;
 \item
 $\bot$ abbreviates $p\dimp p$;
 \item
 $\neg\varphi$ abbreviates $\varphi\imp \bot$;
 \item
 $\varphi\iiff \psi$ abbreviates $(\varphi \imp \psi) \wedge (\psi \imp \varphi)$ (not the formula $(\varphi \imp \psi) \wedge (\varphi \dimp \psi)$);
 \item $\ps\varphi$, read as `\textbf future', abbreviates $\top \until {\varphi}$;
 \item $\past\varphi$, read as `\textbf past', abbreviates $\top \since {\varphi}$.
 \end{itemize}
Although we will not need them, note that we can also define `\textbf release' as $\varphi \release \psi \coloneqq (\psi \until \varphi) \vee \nec \psi$ and the past analog of release similarly. Often the alternative names `eventually' and `henceforth' are used for `future' and `going' respectively.

The rules presented in the introduction correspond to the following temporal formulas:

\begin{description}
	\item[\ref{statement1}]   $\nec(\mathtt{aheadClose} \wedge \mathtt{speedHigh} \imp \mathtt{decelerate}) $;
	\item[\ref{tstatement1}]  $\nec(\mathtt{tankLow} \imp (\mathtt{tankLow} \until (\mathtt{refill} \vee \mathtt{tankEmpty})))$;
	\item[\ref{tstatement2}]  $\nec(\mathtt{rockfall} \wedge \mathtt{moutiers} \imp \nec(\mathtt{moutiers} \imp \mathtt{caution}))$.
\end{description}

We now introduce the first of our semantics for the G\"odel temporal language: \emph{real-valued semantics}, which views $\lanfull$ as a \emph{fuzzy logic} (enriched with temporal modalities). In the definition, $[0,1]$ denotes the real unit interval.

\begin{definition}[real-valued semantics]\label{DefRSem}
A \define{flow} is a pair $\mathcal T = (T,S)$,
where $ T $ is a set and $S \colon T \to T$ is a bijection.
A \define{real valuation} on $\mathcal T$ is a function $V \colon \lanfull \times T \to  [0,1]$ such that, for all $t\in T$, the following equalities hold.
\begin{center}
\begin{tabular}{rclrcl}
$V(\varphi\wedge\psi,t) $&$=$&$\min \{ V(\varphi ,t), V( \psi,t) \}$
&
$V(\varphi \vee \psi,t) $&$=$& $\max \{ V(\varphi ,t) , V( \psi,t) \}$\\\\
$V(\varphi \imp \psi,t) $&$=$& $
\begin{cases}
1\text{ if }V(\varphi, t){\leq} V(\psi, t)\\
V(\psi,t )\text{ otherwise}
\end{cases}$
&
$V(\varphi \dimp \psi,t) $&$=$&$
\begin{cases}
0\text{ if }V(\varphi, t){\leq} V(\psi, t)\\
V(\varphi,t )\text{ otherwise}
\end{cases}$\\\\
$V(\nx\varphi, t)$&$=$&$ V( \varphi,S(t))$&
$V(\y\varphi, t)$&$=$&$ V( \varphi,S^{-1}(t))$\\\\
$V(\nec \varphi,t) $&$=$&$ \inf_{n<\infty}   V(\varphi,S^n(t))$
&
$V(\has \varphi,t) $&$=$&$ \inf_{n<\infty}   V(\varphi,S^{-n}(t))$
\\\\
\multicolumn{6}{c}{
$V(\varphi \until \psi,t) = \sup_{n<\infty}  \min\{ V(\varphi,S^0(t)) , \dots, V(\varphi,S^{n-1}(t)),  V(\psi,S^n(t))\} $}\\\\
\multicolumn{6}{c}{
$V(\varphi \since \psi,t) = \sup_{n<\infty}  \min\{ V(\varphi,S^0(t)) , \dots, V(\varphi,S^{-(n-1)}(t)),  V(\psi,S^{-n}(t))\} $}\\
\end{tabular}
\end{center}
A flow $\mathcal  T$ equipped with a valuation $V$ is a \define{real-valued (G\"odel temporal) model}.
\end{definition}
 Informally, the real number $V(\varphi, t)$ records the degree of truth of $\varphi$ at time $t$.

The second semantics, \emph{bi-relational} semantics, views $\lanfull$ as a (consistent) \emph{superintuitionistic logic}, temporally enriched.
Below, define $\vec S(w,t) = (w,S(t))$, and if $X$ is a subset of the domain of $\vec S$, then $\vec SX = \{\vec S(x) \mid x \in X\}$.

\begin{definition}[bi-relational semantics]\label{DefKSem}
A \define{(G\"odel temporal) bi-relational frame} is a quadruple $\mathcal  F=(W,T,{\leq},S)$ where $(W,{\leq})$ is a linearly ordered set and $(T,S)$ is a flow.
A \define{bi-relational valuation} on $\mathcal  F$ is a function $\lb\,\cdot\,\rb\colon\lanfull \to 2^{W\times T}$ such that, for each $p \in \mathbb P$, the set $\lb p \rb$ is \emph{downward closed} in its first coordinate, and the following equalities hold.

\begin{center}
\begin{tabular}{rclrcl}
 $\lb\varphi\wedge\psi\rb$&$=$&$\lb\varphi\rb\cap \lb\psi\rb$&
 $\lb\varphi\vee\psi\rb $&$=$&$ \lb\varphi\rb\cup \lb\psi\rb$\\\\
\multicolumn{6}{l}{$\lb\varphi\imp\psi\rb = \{ (w,t) \in W\times T \mid  \forall v \leq  w\ ((v,t) \in \lb\varphi \rb  
\text{ implies } (v,t)\in \lb\psi \rb ) \}$}\\\\
\multicolumn{6}{l}{$\lb\varphi\dimp\psi\rb  =  \{ (w,t) \in W\times T \mid \exists v \geq  w\ ((v,t) \in \lb\varphi \rb  
\text{ and } (v,t)\notin \lb\psi \rb ) \}$} \\\\
$\val{\nx\varphi}  $&$=$&$   \vec S^{-1} \val\varphi$
&
$\val{\y\varphi}  $&$=$&$   \vec S \val\varphi$
\\\\
$\val{\nec \varphi} $&$=$&$  \bigcap_{n<\infty} \vec S^{-n}\val\varphi $   &
$\val{\has \varphi} $&$=$&$  \bigcap_{n<\infty} \vec S^{n}\val\varphi $   \\\\
\multicolumn{6}{l}{$\val{\varphi \until \psi} = \bigcup_{n<\infty} ( \vec S^{0} \val\varphi \cap   \dots \cap  \vec S^{-(n-1)} \val\varphi \cap \vec S^{-n} \val\psi)$}
\\\\
\multicolumn{6}{l}{$\val{\varphi \since \psi} = \bigcup_{n<\infty} ( \vec S^{0}   \val\varphi \cap   \dots \cap  \vec S^{n-1} \val\varphi \cap \vec S^{n} \val\psi)$}
\\
\end{tabular}
\end{center}
A bi-relational frame $\mathcal  F$ equipped with a valuation $\lb\,\cdot\,\rb$ is a \define{(G\"odel temporal) bi-relational model}.
\end{definition}

Informally, the set $\{w \in W \mid (w, t) \in \val \varphi\}$ records the degree of truth of $\varphi$ at time $t$. This semantics combines standard intuitionistic Kripke frame semantics for the implications based on the order $\leq$ (read downward
) and standard semantics for the tenses based on $S$: for example, $(w,t)\in \val{\varphi \until \psi}$ if and only if there exists $n\geq 0$ such that $(w, S^i(t)) \in \val\varphi$ for all $0 \leq i < n$, and $(w,S^n(t)) \in \val\psi$.
Note that by structural induction, the valuation of \emph{any} formula $\varphi$ in $\lanfull$ is downward closed in its first coordinate, in the sense that if $(w,t)\in \val\varphi$ and $v\leq w$, then $(v,t)\in \val \varphi$.

For both real-valued semantics and bi-relational semantics, validity of $\lanfull$-formulas is defined in the usual way.

\begin{definition}[validity]
Given a real-valued model $\mathcal  X = (T, S, V)$ and a formula $\varphi\in \lanfull$, we say that $\varphi$ is \define{globally true} on $\mathcal  X$, written $\mathcal  X\models\varphi$, if for all $t \in T$ we have $V(\varphi, t) = 1$.
Given a bi-relational model $\mathcal  X = (\mathcal F, \lb\,\cdot\,\rb)$ and a formula $\varphi\in \lanfull$, we say that $\varphi$ is \define{globally true} on $\mathcal  X$, written $\mathcal  X\models\varphi$, if $\val\varphi =W \times T$. 

If $\mathcal  X$ is a flow or a bi-relational frame, we write $\mathcal  X\models\varphi$  and say $\varphi$ is \define{valid} on $\mathcal X$, if $\varphi$ is globally true for every valuation on $\mathcal  X$. If $\Om$ is a class of flows, frames, or models, we say that $\varphi\in\lanfull$ is \define{valid} on $\Om$ if, for every $\mathcal  X\in \Omega$, we have $\mathcal  X\models\varphi$. If $\varphi$ is not valid on $\Om$, it is \define{falsifiable} on $\Om$.
\end{definition} 

We define:
\begin{itemize}
\item
 the logic $\gtlreal$ to be the set of  $\lanfull$-formulas that are valid over the class of all \emph{flows};
 \item
 the logic $\gtlrel$ to be the set of $\lanfull$-formulas that are valid over the class of all \emph{bi-relational frames}.
 \end{itemize}

\section{Real-valued versus~bi-relational validity}\label{sec:real}

In this section, we show that an arbitrary $\lanfull$-formula is real valid if and only if it is bi-relationally valid. That is, $\gtlreal = \gtlrel$. The two directions of this equivalence are \Cref{LemmaRealToKripke} and \Cref{LemmaKripkeToReal} below.

\begin{lemma}\label{LemmaRealToKripke}
Suppose that $\varphi$ is an $\lanfull$-formula that is not real valid. Then $\varphi$ is not bi-relationally valid.
\end{lemma}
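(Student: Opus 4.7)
The plan is to transform a falsifying real-valued model into a falsifying bi-relational one. Suppose $\mathcal X = (T, S, V)$ is a real-valued model with $V(\varphi, t_0) < 1$ for some $t_0 \in T$. Since the computation of $V(\varphi, t_0)$ depends only on the $S$-orbit of $t_0$, I may restrict $T$ to that orbit and assume it is at most countable. I will then construct a bi-relational model on the same flow $(T, S)$, with world set $W \subseteq (0, 1]$ linearly ordered by the usual $\leq$ (the specific choice is deferred), and atomic valuation $\val{p} := \{(w, t) \in W \times T : w \leq V(p, t)\}$, which is automatically downward closed in $w$.

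The heart of the argument is a structural induction showing $\val{\psi} = \{(w, t) \in W \times T : w \leq V(\psi, t)\}$ for every subformula $\psi$ of $\varphi$. The conjunction and disjunction cases are immediate from the min/max rules. For implication (and symmetrically for co-implication) I split on whether $V(\varphi, t) \leq V(\psi, t)$: in the strict case, the required separating witness $v \in W$ with $V(\psi, t) < v \leq \min\{w, V(\varphi, t)\}$ exists by density of $W$ in $(0, 1]$. The shift tenses $\nx$ and $\y$ are immediate from the time-shift clauses, and the henceforth/hitherto tenses $\nec$ and $\has$ match exactly because taking $\inf$ over $n$ corresponds to the intersection $\bigcap_n$.

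I expect the main obstacle to be the $\until$ and $\since$ cases. The bi-relational semantics defines these by $\bigcup_n$, so membership requires a finite witness $n$, whereas the real-valued $\sup_n$ may fail to be attained by any finite $n$; at such a non-attained supremum $w^{*} = V(\zeta \until \eta, t)$ the set $\{w \leq V(\zeta \until \eta, t)\}$ contains $w^{*}$ but $\val{\zeta \until \eta}$ does not. To remove this obstruction I would take $W := (0, 1] \setminus B$, where $B$ is the (countable) set of all non-attained supremum values $V(\zeta \until \eta, t)$ (and the analogous $\since$ values) ranging over subformulas of $\varphi$ and over $t \in T$. Then $W$ is still dense in $(0, 1]$, the induction closes exactly, and any $w \in W$ with $w > V(\varphi, t_0)$ — which exists by density — satisfies $(w, t_0) \notin \val{\varphi}$, giving the desired bi-relational falsification.
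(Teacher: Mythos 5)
Your proposal is correct and is essentially the paper's own argument: the paper likewise builds the bi-relational model on a co-countable subset of the unit interval with the cut-style valuation and closes the induction by deleting countably many problematic truth values (it removes the set of \emph{all} realised values $V(\psi,t)$ and uses the strict condition $V(p,t)>x$, whereas you use $w\leq V(p,t)$ and remove only the non-attained suprema for $\until$/$\since$ subformulas). The two variants differ only in convention, and your density argument for the $\imp$/$\dimp$ witnesses and the final choice of $w>V(\varphi,t_0)$ matches the paper's use of countability of the excluded set.
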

\begin{proof}
Let $(T, S)$ be a flow, $V$ a real valuation on $(T,S)$, and $t_0 \in T$ be such that $V(\varphi, t_0) < 1$. Since we are only concerned with the valuation at $t_0$, we may assume without loss of generality that $T = \mathbb{Z}$, $t_0 = 0$, and $S$ is the successor function; in particular, that $T$ is countable. 
Let $X_0$ be the set of all real numbers $x$ such that $V(\psi, t) = x$ for some $\lanfull$-formula $\psi$ and some $t\in T$. Thus, $X_0$ is a countable subset of $[0,1]$. Let $X = (0,1) \setminus X_0$. 

We consider the bi-relational frame $\mathcal{F} = (X, T, {\leq}, S)$, where $\leq$ is the usual order on real numbers, and the bi-relational valuation $\lb\,\cdot\,\rb$ given by 
\begin{equation} \label{eqInductionRealToKripke}
(x, t) \in \lb  p \rb \text{ if and only if } V(p, t)> x.
\end{equation}
We may then prove by induction that \eqref{eqInductionRealToKripke} holds for arbitrary $\lanfull$-formulas $\psi$ and arbitrary $x \in X$ and $t \in T$. Then letting $x\in X$ be such that $V(\varphi, t_0)< x$ (this exists because $X_0$ is countable), we have $(x, t_0) \not\in \lb \varphi\rb$, so that $(\mathcal{F}, \lb\,\cdot\,\rb)$ is a bi-relational countermodel for $\varphi$, as needed.
\end{proof}

\begin{lemma}\label{LemmaKripkeToReal}
Suppose that $\varphi$ is an $\lanfull$-formula that is not bi-relationally valid. Then $\varphi$ is not real valid.
\end{lemma}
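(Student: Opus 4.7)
The plan is to transform the bi-relational countermodel into a real-valued one by representing each set of worlds $\lb \psi \rb_t = \{w \in W \mid (w,t) \in \lb \psi \rb\}$ by a real number via a suitably chosen probability measure $\mu$ on $W$. Given $\mathcal{M} = (W,T,{\leq},S,\lb \cdot \rb)$ together with $(w_0,t_0) \notin \lb \varphi \rb$, I first reduce to the case $T = \mathbb{Z}$, $t_0 = 0$, $S$ the successor function, exactly as in the proof of \Cref{LemmaRealToKripke}; in particular $T$ becomes countable.

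Let $\mathrm{Sub}(\varphi)$ denote the finite set of subformulas of $\varphi$ and set $\mathcal{C} = \{\lb \psi \rb_t : \psi \in \mathrm{Sub}(\varphi),\ t \in T\}$. This is a countable family of downward-closed subsets of $W$, and since $(W,\leq)$ is linear, its elements are totally ordered by inclusion. I then choose a countable $W_0 \subseteq W$ containing $w_0$ that meets every nonempty difference $A \setminus B$ with $B \subsetneq A$ in $\mathcal{C}$, and fix a probability measure $\mu$ on $W$ supported on $W_0$ with $\mu(\{w\}) > 0$ for each $w \in W_0$ (e.g.\ $\mu(\{w_n\}) = 2^{-n-1}$ after enumeration). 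Such a $\mu$ is $\sigma$-continuous and \emph{strictly} monotone on pairs from $\mathcal{C}$. Setting $V(p,t) \eqdef \mu(\lb p \rb_t)$ on atoms and extending to all formulas via the real-valued rules of \Cref{DefRSem} then defines the candidate real-valued model.

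The core step is to prove by induction on $\psi \in \mathrm{Sub}(\varphi)$ that $V(\psi,t) = \mu(\lb \psi \rb_t)$ for every $t \in T$. The inductive step exploits that in a linear order any two downward-closed sets are comparable by inclusion, so their intersection is the smaller and their union the larger; consequently $\mu$ of an intersection or union reduces to the minimum or maximum of the component $\mu$-values, matching the real-valued clauses for $\wedge$ and $\vee$. For implication, either $\lb \psi \rb_t \subseteq \lb \chi \rb_t$, in which case $\lb \psi \imp \chi \rb_t = W$ with $\mu(W) = 1$; or $\lb \chi \rb_t \subsetneq \lb \psi \rb_t$, in which case a direct analysis of the bi-relational clause in a linear frame gives $\lb \psi \imp \chi \rb_t = \lb \chi \rb_t$, whereupon strict monotonicity of $\mu$ yields $V(\psi,t) > V(\chi,t)$ and the clause for $\imp$ in \Cref{DefRSem} matches. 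The case of $\dimp$ is dual. For $\nec, \has, \until, \since$ I pass from an arbitrary sequence of sets to the decreasing sequence of its initial intersections (again using the linear-order collapse of intersections to minima) and invoke $\sigma$-continuity of $\mu$ from above to obtain $\mu(\bigcap_n A_n) = \inf_n \mu(A_n)$; unions are handled dually from below.

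The hard part will be controlling the choice of $\mu$: arranging that a \emph{single} measure witnesses $V(\psi,t) = \mu(\lb \psi \rb_t)$ simultaneously for every subformula and every $t$ reduces to securing strict monotonicity on all of $\mathcal{C}$, which is delicate when $W$ is uncountable but is handled by the carefully chosen countable support $W_0$. Once the induction is complete, $w_0 \in W_0$, $\mu(\{w_0\}) > 0$, and $w_0 \notin \lb \varphi \rb_{t_0}$ together yield $V(\varphi,t_0) = \mu(\lb \varphi \rb_{t_0}) \leq 1 - \mu(\{w_0\}) < 1$, refuting the real validity of $\varphi$.
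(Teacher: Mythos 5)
Your argument is correct, but it takes a genuinely different route from the paper's. The paper quotients the set of pairs $(\psi,t)$ by mutual inclusion of their valuations to obtain a countable linear order $L$ with endpoints, verifies that $L$ respects all the connective clauses, and then continuously (for the order topology) embeds $L$ into $[0,1]$; the continuity of that embedding is what makes the $\inf$/$\sup$ clauses for $\nec$, $\has$, $\until$, $\since$ come out right, and the countability needed for the embedding is secured by a downward L\"owenheim--Skolem argument on $W$. You instead realise the truth degrees directly as measures $\mu(\lb\psi\rb_t)$ of the (downward-closed, hence inclusion-linearly-ordered) truth sets: finite meets and joins collapse to min and max because the sets are nested, the infinitary clauses follow from $\sigma$-continuity of $\mu$ from above and below along the chain of initial intersections/unions, and the $\imp$/$\dimp$ clauses follow from the strict monotonicity you engineer by placing an atom of $\mu$ in each nonempty difference $A\setminus B$ with $A,B$ among the countably many relevant truth sets. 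Your version buys two things: $\sigma$-additivity replaces the appeal to the existence of a continuous order embedding (the paper outsources this to \cite[Theorem 5.1]{BPZ07}), and you only need a countable set of separating points rather than a countable elementary substructure of $(W,{\leq})$. The price is that you must compute $\lb\psi\imp\chi\rb_t$ and $\lb\psi\dimp\chi\rb_t$ explicitly in a linear frame (each equals one of its two component sets, or $W$, or $\varnothing$), but the paper's proof needs the same facts implicitly when verifying its list of properties of $L$, so the two proofs are of comparable weight.
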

\begin{proof}
Suppose that there is a bi-relational frame $\mathcal{F} = (W, T, \allowbreak{\leq}, S)$ and a valuation $\lb\,\cdot\,\rb$ such that $(w,t_0) \not\in \lb\varphi\rb$. As above, we may assume that $T = \mathbb{Z}$, that $S$ is the successor function, and that $t_0 = 0$.
By a routine downward L\"owenheim--Skolem-type argument, we may assume that $W$ is countable.\footnote{Build a suborder $(W^*,{\leq}^*)$ of $(W,{\leq})$ by induction on the structure of $\varphi$ as follows: start with $\{w\}$ and inductively decompose $\varphi$ according to its outermost connective. When considering subformulas $\psi$ of the form $\varphi_0 \imp \varphi_1$ or $\varphi_0 \dimp \varphi_1$, we add to $W^*$ (if necessary), for each world $w^*$ in the suborder being built and for each $s \in T$, a new element of $W$  witnessing the quantifier in the definition of $\lb\psi\rb$ in a way that ensures that `$(w^*, s) \in \lb\psi\rb$' holds in $(W,T,{\leq}, S)$ if and only if it holds in $(W^*, T, \leq^*, S)$.}

We define a binary relation on $\lanfull \times T$ by 
\[(\psi, t_1) \leq (\chi, t_2) \text{ if and only if } \forall w \in W\, ((w, t_1) \in \lb \psi \rb \implies (w, t_2) \in \lb \chi \rb),\]
that is, if the valuation of $\psi$ at $t_1$ is contained in that of $\chi$ at $t_2$.
Since valuations of formulas are downward closed in their first coordinates, this reflexive transitive relation is total (any two elements are comparable). Let $(L, \leq)$ denote the linear order of all equivalence classes $[\psi, t]$ under $\leq$, and let $0$ and $1$ denote the equivalence classes of $[\bot, t]$ and $[\top, t]$, respectively (these are independent of the choice of $t$).

We claim that this linear order respects valuations, in the sense that it satisfies the following properties for each $t \in T$:
\begin{itemize}
\item $0$ is the least element and $1$ is the greatest element of $L$;
\item $[\psi\wedge\chi, t] = \min\{[\psi, t], [\chi, t]\}$
and $[\psi\vee\chi, t] = \max\{[\psi, t], [\chi, t]\}$;
\item $[\chi \imp \psi, t] = 1$ if $[\chi, t] \leq [\psi, t]$, and $[\chi \imp \psi, t] = [\psi, t]$ otherwise;
\item $[\chi \dimp \psi, t] = 0$ if $[\chi, t] \leq [\psi, t]$, and $[\chi \dimp \psi, t] = [\chi, t]$ otherwise;
\item $[\nx \psi, t] = [\psi, t+1]$ and $[\y \psi, t] = [\psi, t-1]$;
\item
 $[\nec \psi, t] = \inf_{n<\infty} [\psi, t+n]$ and $[\has \psi, t] = \inf_{n<\infty} [\psi, t-n]$;
\item $[\varphi \until \psi, t] = \sup_{n<\infty}\min\{ [\varphi, t], \dots, [\varphi, t+(n-1)], [\psi, t+n]\}$;
\item $[\varphi \since \psi, t] = \sup_{n<\infty}\min\{ [\varphi, t], \dots, [\varphi, t-(n-1)], [\psi, t-n]\}$.
\end{itemize}
These properties easily follow from the definitions.

Now, $(L, \leq)$ is a countable linear order with endpoints, so it can be continuously (with respect to the order topology) embedded into the interval $[0,1]$ in such a way that the images of $0$ and $1$ are, respectively, $0$ and $1$. 
Let $\rho$ be such an embedding. (We refer the reader to for example the proof of \cite[Theorem 5.1]{BPZ07} for an explicit construction of such a $\rho$.) We define a real valuation $V$ by setting $V(\psi, t) = \rho([\psi, t])$. By the properties above, $V$ is indeed a real valuation and $V(\varphi, t_0) = \rho([\varphi,t_0]) < 1$.
\end{proof}

\begin{theorem}\label{thm:equal}
$\gtlreal = \gtlrel$.
\end{theorem}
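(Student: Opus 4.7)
The theorem $\gtlreal = \gtlrel$ is an equality of two sets of formulas, so the plan is simply to establish the two inclusions, each of which is an immediate consequence of one of the preceding lemmas via contraposition. Unwinding the definitions, $\gtlreal$ is the set of $\lanfull$-formulas valid on every flow, while $\gtlrel$ is the set valid on every bi-relational frame; a formula $\varphi$ fails to belong to $\gtlreal$ (resp.~$\gtlrel$) precisely when $\varphi$ is falsifiable in some real-valued (resp.~bi-relational) model, i.e.~is not real valid (resp.~not bi-relationally valid).

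For the inclusion $\gtlreal \subseteq \gtlrel$, I take the contrapositive of \Cref{LemmaKripkeToReal}: if $\varphi \notin \gtlrel$, then $\varphi$ is not bi-relationally valid, so by \Cref{LemmaKripkeToReal} it is not real valid, hence $\varphi \notin \gtlreal$. For the reverse inclusion $\gtlrel \subseteq \gtlreal$, I take the contrapositive of \Cref{LemmaRealToKripke}: if $\varphi \notin \gtlreal$, then $\varphi$ is not real valid, so by \Cref{LemmaRealToKripke} it is not bi-relationally valid, hence $\varphi \notin \gtlrel$. Combining the two inclusions gives the desired equality.

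There is no real obstacle at this stage: all the substantive work lies in the two lemmas themselves, namely the extraction of a countable linear order of truth values from a bi-relational model and its continuous embedding into $[0,1]$, and conversely the choice of an appropriate countable subset of $(0,1)$ (avoiding the countably many truth values actually taken by the valuation) as the first-coordinate domain of a bi-relational countermodel. The theorem itself is simply the packaging of these two reductions.
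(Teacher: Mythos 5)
Your proof is correct and matches the paper exactly: the paper states at the start of Section~\ref{sec:real} that the two directions of the equivalence are \Cref{LemmaRealToKripke} and \Cref{LemmaKripkeToReal}, and gives no further argument for \Cref{thm:equal} beyond the contrapositive packaging you spell out. The directions of your two inclusions are also matched correctly to the respective lemmas.
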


\section{Labelled systems and quasimodels}\label{SecNDQ}

Our decidability proof for the set of G\"odel temporal validities $\gtlreal$ is based on (nondeterministic) quasimodels, originally introduced in \cite{Fernandez09} for \emph{dynamic topological logic,} a classical predecessor of \emph{intuitionistic temporal logic,} for which quasimodels were also used in \cite{F-D18}.
As the bi-relational semantics makes evident, G\"odel temporal logic is closely related to intuitionistic temporal logic. In this section we will introduce labelled spaces, labelled systems, and finally, quasimodels. Quasimodels can be viewed as a sort of temporally nondeterministic generalisation of bi-relational models.

Of course, many decidability proofs for classical modal logics are obtained via the finite model property, so it is worthwhile to first note that this strategy cannot work for $\gtlreal$ because finite model properties do not hold, in either semantics. The finite model properties we define are of the form: \emph{falsifiable} implies \emph{falsifiable in a finite model}.
It is worth remarking that in sub-classical logics, it is indeed the notion of falsifiability that is relevant, as it is falsifiability that is (by definition) the complement to validity.
However, in view of our inclusion of co-implication, we may define $\dneg\varphi\equiv \top\dimp\varphi$, and then it is not hard to check that $\dneg\varphi$ is \emph{satisfiable} (in the sense of having a non-zero truth value) if and only if $\varphi$ is falsifiable.
Thus in view of the fact that our logic is (as we will see) \textsc{pspace}-complete, validity, satisfiability, and falsifiability are all inter-reducible.

\begin{definition}\label{def:fmp}
The \define{strong finite model property} for $\gtlrel$ is the statement that if $\varphi \in \lanfull$ is falsifiable on a bi-relational model, then it is falsifiable on a bi-relational model $\mathcal  F=(W,T,{\leq},S, \lb\,\cdot\,\rb)$ where both $W$ and $T$ are finite.

The \define{order finite model property} for $\gtlrel$ is the statement that if $\varphi \in \lanfull$ is falsifiable on a bi-relational model, then it is falsifiable on a bi-relational model $\mathcal  F=(W,T,{\leq},S, \lb\,\cdot\,\rb)$ where $W$ is finite.

The \define{temporal finite model property} for $\gtlrel$ is the statement that if $\varphi \in \lanfull$ is falsifiable on a bi-relational model, then it is falsifiable on a bi-relational model $\mathcal  F=(W,T,{\leq},S, \lb\,\cdot\,\rb)$ where $T$ is finite.
\end{definition}

\begin{proposition}\label{no_finite}
None of the finite model properties for $\gtlrel$ listed in \Cref{def:fmp} hold.
In particular, $\ps (p \imp \nx p)$ is falsifiable, yet it is valid over the class of finite bi-relational models.
\end{proposition}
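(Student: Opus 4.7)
The proof has two parts: constructing an infinite bi-relational model that falsifies $\ps(p \imp \nx p)$, and showing that every finite bi-relational model globally validates it.

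For the falsifying model, I would take $W$ to be an $\omega$-chain $w_0 > w_1 > w_2 > \cdots$, the flow $T = \mathbb Z$ with $S$ the successor function, and set
\[\val p = \{(w_j, t) \in W \times T : j \geq t\}.\]
Downward-closure in the first coordinate follows because $w_k \leq w_j$ means $k \geq j$, so $k \geq j \geq t$. At each time $n \geq 0$, the world $w_n$ witnesses the failure of $p \imp \nx p$ at $(w_0, n)$: we have $w_n \leq w_0$, $(w_n, n) \in \val p$, and $(w_n, n+1) \notin \val p$. Hence $(w_0, S^n(0)) \notin \val{p \imp \nx p}$ for every $n \geq 0$, and consequently $(w_0, 0) \notin \val{\ps(p \imp \nx p)}$.

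The more interesting half is the validity claim. Fix a finite bi-relational model and an arbitrary pair $(w, t_0)$. For each $t \in T$, associate the set $A_t \coloneqq \{v \leq w : (v,t) \in \val p\}$, which by downward-closure of $\val p$ is a downset of the finite linear order $W_w \coloneqq \{v \in W : v \leq w\}$. The key structural observation is that downsets of a linear order are themselves \emph{linearly ordered by inclusion}, so the whole sequence $A_{t_0}, A_{S(t_0)}, A_{S^2(t_0)}, \dots$ takes values in a finite chain. A pigeonhole then yields $m < n$ with $A_{S^m(t_0)} = A_{S^n(t_0)}$, and because strict descent throughout the interval would contradict that equality, some consecutive pair satisfies $A_{S^i(t_0)} \subseteq A_{S^{i+1}(t_0)}$. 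Unpacking the definitions, this is exactly the assertion that $(w, S^i(t_0)) \in \val{p \imp \nx p}$, so $(w, t_0) \in \val{\ps(p \imp \nx p)}$.

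The main obstacle — indeed essentially the only one — is identifying the right invariant to which to apply pigeonhole: comparing valuations globally across $W$ gives nothing, but restricting to the downset below $w$ and exploiting linearity of $\leq$ to chain these downsets under inclusion makes the argument succeed. I would also note in passing that the same argument uses only finiteness of $W$, and an analogous periodicity argument handles the case when only $T$ is finite, so this single counterexample simultaneously refutes all three finite model properties of \Cref{def:fmp}.
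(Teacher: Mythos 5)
Your proposal is correct. The falsifying model is the same staircase construction as the paper's (the paper takes $W=\mathbb Z$ where you take an $\omega$-chain, which is immaterial). For the second half the routes differ in packaging: the paper argues the contrapositive, showing that any falsification of $\ps(p\imp\nx p)$ at $(w,t)$ forces an infinite strictly descending chain $w\geq w_0>w_1>\cdots$ of witnesses (hence $W$ infinite) and then uses downward closure to conclude $T$ is infinite as well, killing all three finite model properties with one argument. You instead prove validity on finite models directly, by isolating the invariant $A_t=\{v\leq w:(v,t)\in\val p\}$, observing that these downsets form a chain under inclusion by linearity of $\leq$, and applying pigeonhole; you then handle the case of finite $T$ separately by periodicity of the orbit of $t_0$. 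The two arguments are logically dual — your pigeonhole rules out exactly the infinite strict descent the paper exhibits — but your version makes the combinatorial invariant explicit, which is arguably cleaner for the reader, at the cost of needing two cases (finite $W$, finite $T$) where the paper needs one. Your periodicity case is only sketched, but it is routine to fill in: if $S^k(t_0)=t_0$ then $A_{S^0(t_0)}=A_{S^k(t_0)}$ holds outright without any finiteness assumption on $W$, and the same descent argument applies.
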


\begin{proof}
To see that $\ps (p \imp \nx p)$ is falsifiable, take the flow $(\mathbb Z, S)$, where $S$ is successor. Consider the model $(\mathbb Z,\mathbb Z,{\leq},S, \lb\,\cdot\,\rb)$, where $\lb p\rb = \{(n, t) \in \mathbb Z \times \mathbb Z \mid n \leq -t\}$. (See \Cref{fig:1} (left).) Then at each $(-i, i)$, the formula $p$ holds but $\nx p$ does not. Hence at each $(0, i)$ with $i \in \mathbb Z_{\geq 0}$, the formula $p \imp \nx p$ is falsified. Thus $\ps (p \imp \nx p)$ is falsified at $(0, 0)$.

To see that $\ps (p \imp \nx p)$ can only be falsified on a model $(W,T,{\leq},S, \lb\,\cdot\,\rb)$ for which both $W$ and $T$ are infinite, suppose $\ps (p \imp \nx p)$ is falsified at $(w, t)$. Then there must be a sequence $w\geq w_0 > w_1> \dots$ such that for each $i \in \mathbb Z_{\geq 0}$, the formula $p$ holds on each $(w_i, S^i(t))$ but not on $(w_i, S^{i+1}(t))$. This clearly forces $W$ to be infinite, and by downward closure of $\lb\,\cdot\,\rb$, it forces $T$ to be infinite too. (See \Cref{fig:1} (right).)
\end{proof}

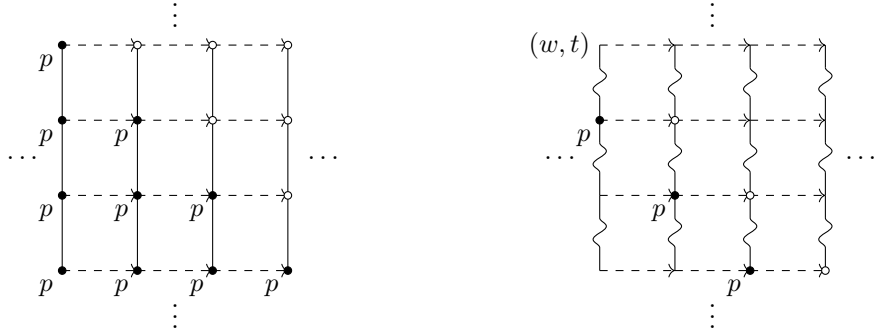
\begin{figure}\centering
\begin{tikzpicture}
\node[left] at (0,-0) {\phantom{$(w,t)$}};
\foreach \i in {3}
{
\draw (\i,0) -- (\i,-3);
\foreach \j in {\i,...,3}
{
\node[below left] at (\i,-\j) {$p$};
\draw[fill] (\i,-\j) circle (.05);
}
}
\foreach \i in {0, 1, 2}
{
\draw (\i,0) -- (\i,-3);
\foreach \j in {0,...,3}
{
\draw[dashed, ->] (\i,-\j) -- (\i+1, -\j);
}
\foreach \j in {\i,...,3}
{
\node[below left] at (\i,-\j) {$p$};
\draw[fill] (\i,-\j) circle (.05);
}
}
\foreach \i in {0,1,2}
\foreach \j in {0,...,\i}
\draw[fill=white] (\i+1,-\j) circle (.05);

\node at (3.5, -1.5) {\dots};
\node at (1.5, -3.5) {\vdots};
\node at (-.5, -1.5) {\dots};
\node at (1.5, .5) {\vdots};
\end{tikzpicture}
\hspace{2cm}
\begin{tikzpicture}
\foreach \i in {0, 1, 2}
{
\foreach \j in {0,1,2}
{
\draw (\i,-\j+0) -- (\i,-\j-.3);
\draw[decorate, decoration=snake] (\i,-\j-.3) -- (\i,-\j-.7);
\draw (\i,-\j-.7) -- (\i,-\j-1);
}
\foreach \j in {0,...,3}
{
\draw[dashed, ->] (\i,-\j) -- (\i+1, -\j);
}
}
\foreach \i in {3}
{
\foreach \j in {0,1,2}
{
\draw (\i,-\j+0) -- (\i,-\j-.3);
\draw[decorate, decoration=snake] (\i,-\j-.3) -- (\i,-\j-.7);
\draw (\i,-\j-.7) -- (\i,-\j-1);
}
}
\node at (3.5, -1.5) {\dots};
\node at (1.5, -3.5) {\vdots};
\node at (-.5, -1.5) {\dots};
\node at (1.5, .5) {\vdots};

\node[left] at (0,-0) {$(w,t)$};
\node[below left] at (0,-1) {$p$};
\draw[fill] (0,-1) circle (.05);
\draw[fill=white] (1,-1) circle (.05);
\node[below left] at (1,-2) {$p$};
\draw[fill] (1,-2) circle (.05);
\draw[fill=white] (2,-2) circle (.05);
\node[below left] at (2,-3) {$p$};
\draw[fill] (2,-3) circle (.05);
\draw[fill=white] (3,-3) circle (.05);
\end{tikzpicture}
\caption{Left: A bi-relational model falsifying $\ps (p \imp \nx p)$; right: $W$ and $T$ are necessarily infinite. Time is represented by the \emph{horizontal} axes and truth by the \emph{vertical}.}
\label{fig:1}
\end{figure}

The same example as in \Cref{no_finite} shows that under real-valued semantics it is also the case that some formulas can only be falsified on an infinite flow with infinitely many realised truth values, as falsification of $\ps (p \imp \nx p)$ forces $V(p,t ) >V(p ,S(t) )$ for all $t$.

Note that we have refuted all these finite model properties without using many of the connectives (in particular without $\dimp$ and without any past connectives), thus in fact proving the stronger result that the finite model properties fail for the fragment standardly used for (future time) linear temporal logic.

We now begin to introduce the structures we will use to mitigate the failure of these finite model properties.

\begin{definition}\label{definition:type}
Given a set $\Sigma\subseteq\lanfull$ that is closed under subformulas, we say that $\Phi\subseteq \Sigma$ is a \define{$\Sigma$-type} if the following occur.
\begin{enumerate}


\item\label{type1} If $\varphi\wedge\psi\in \Sigma$, then $\varphi\wedge\psi\in \Phi$ if and only if $\varphi \in \Phi$ and $\psi\in \Phi$.

\item\label{type2} If $\varphi\vee\psi\in \Sigma$, then $\varphi\vee\psi\in \Phi$ if and only if either $\varphi\in\Psi$ or $\psi\in \Phi$.

\item\label{type3} If $\varphi\imp \psi\in \Sigma$, then

\begin{enumerate}

\item\label{type3a}  $\varphi\imp \psi\in\Phi$ implies either $\varphi\not \in \Phi$ or $\psi\in\Phi$,

\item\label{type3b} $\psi \in \Phi$ implies  $\varphi\imp \psi \in\Phi$.

\end{enumerate}

\item\label{type4} If $\varphi\dimp \psi\in \Sigma$, then

\begin{enumerate}

\item\label{type4a} $\varphi\dimp \psi\in\Phi$ implies  $\varphi  \in \Phi$, 

\item  $\varphi\in\Phi$ and $\psi \notin \Phi$ implies  $\varphi\dimp \psi \in\Phi$.

\end{enumerate}



\end{enumerate}


The set of $\Sigma$-types will be denoted by $\type\Sigma$.
\end{definition}

A partially ordered set $(A,\leq)$ is \define{locally linear} if it is a disjoint union of linear posets.
If $a,b\in A$, we write $a\compa b$ if $a\leq b$ or $b\leq a$.
We call the set $\{b\in A:b\compa a\}$ the \define{linear component} of $a$; by assumption, the linear components partition $A$.
 

\begin{definition}\label{frame}
Let $\Sigma\subseteq\lanfull$ be closed under subformulas.
A \define{$\Sigma$-labelled space} is a triple $\mathcal  W= ( W,{\leq}_\mathcal  W ,\ell_\mathcal  W )$, where $( W ,{\leq}_\mathcal  W )$ is a locally linear poset and $\ell\colon W \to \type\Sigma$ an inversely monotone function, in the sense that \[w\leq_\mathcal W  v \text{ implies } \ell_\mathcal  W(w) \supseteq \ell_\mathcal  W(v),\] and such that for all $w\in W$:
\begin{itemize}
\item whenever $\varphi\imp \psi\in \Sigma \setminus \ell_\mathcal  W(w)$, there is $v\leq w$ such that $\varphi\in \ell_\mathcal  W(v)$ and $\psi \not \in \ell_\mathcal  W(v)$;
\item
whenever $\varphi\dimp \psi\in  \ell_\mathcal  W(w)$, there is $v\geq w$ such that $\varphi\in \ell_\mathcal  W(v)$ and $\psi\not\in \ell_\mathcal  W (v)$.
\end{itemize}

The $\Sigma$-labelled space $\mathcal  W$ \define{falsifies} $\varphi\in\mathcal L$ if $\varphi\in \Sigma\setminus \ell_\mathcal  W(w)$ for some $w\in W$.
The \define{height} of $\mathcal W$ is the supremum of all $n$ such that there is a chain $w_1 <_\mathcal  W w_2 <_\mathcal  W \dots <_\mathcal  W  w_n$.
\end{definition}

If $ \mathcal W$ is a labelled space, elements of its domain $W$ will sometimes be called \define{worlds}.\footnote{Note that the conditions on a labelled space ensure that every type $\ell_\mathcal W(w)$ used to label a world is necessarily `consistent' in the sense that it cannot contain $\bot$ (which implies further that it can never contain both $\varphi$ and $\neg \varphi$).}
When clear from the context we will omit subscripts and write, for example,~$\leq$ instead of $\leq_\mathcal  W$.

We now enrich labelled spaces with a relation capturing \emph{temporal} information.

\begin{definition}
Recall that a subset $S$ of a poset $(P,\leq)$ is \define{convex} if $s \in S$ whenever $a,b \in S$ and $a\leq s\leq b$. 
In this article, a \define{convex relation} 
 between posets $(A,{\leq}_A)$ and $(B,{\leq}_B)$ is a binary relation $R\subseteq A\times B$ such that for each $x \in A$ the image set $\{y \in B \mid x \mathrel R y\}$ is convex with respect to $\leq_B$, and for each $y \in B$ the preimage set $\{x \in A \mid x \mathrel R y\}$ is convex with respect to $\leq_A$.
 
 The relation $R$ is \define{fully confluent} if it validates the four following conditions:
\begin{description}

\item[forth--down]\label{forward--down}\phantom{.}\newline if $x \leq _A x' \mathrel R y'$ there is $y$ such that $x \mathrel R y \leq_B y'$;

\item[forth--up]\label{forward--up}\phantom{.}\newline if $x' \geq _A x \mathrel R y$ there is $y'$ such that $x' \mathrel R y' \geq_B y$; 

\item[back--down]\label{backward--down}\phantom{.}\newline if $x' \mathrel R y' \geq_B y$ there is $x$ such that $x'  \geq _A x \mathrel R y$;

\item[back--up]\phantom{.}\newline if $x \mathrel R y \leq_B y'$ there is $x'$ such that $x \leq _A x' \mathrel R y'$.

\end{description}
\end{definition}

\begin{figure}
\centering
\begin{tikzpicture}
\draw[line width=.25mm,-Straight Barb] (0,0) -- (0,2) -- (2,2);
\draw[line width=.25mm,dashed,-Straight Barb] (0,0) -- (2,0);
\draw[line width=.25mm,dashed](2,0)  -- (2,2);
\node at (0,0)[circle, fill, inner sep = .8pt]{};
\node at (0,2)[circle, fill, inner sep = .8pt]{};
\node at (2,0)[circle, fill, inner sep = .8pt]{};
\node at (2,2)[circle, fill, inner sep = .8pt]{};
\node at (0,1)[anchor=east]{$\leq$};
\node at (2,1)[anchor=west]{$\leq$};
\node at (1,0)[anchor=north, align=center,minimum size=3em]{$R$\\forth--down};
\node at (1,2)[anchor=south]{$R$};
\end{tikzpicture}
\hspace{.3cm}
\begin{tikzpicture}
\draw[line width=.25mm,-Straight Barb] (0,2) -- (0,0) -- (2,0);
\draw[line width=.25mm,dashed] (2,0) -- (2,2);
\draw[line width=.25mm,-Straight Barb, dashed] (0,2) -- (2,2);
\node at (0,0)[circle, fill, inner sep = .8pt]{};
\node at (0,2)[circle, fill, inner sep = .8pt]{};
\node at (2,0)[circle, fill, inner sep = .8pt]{};
\node at (2,2)[circle, fill, inner sep = .8pt]{};
\node at (0,1)[anchor=east]{$\leq$};
\node at (2,1)[anchor=west]{$\leq$};
\node at (1,0)[anchor=north, align=center,minimum size=3em]{$R$\\forth--up};
\node at (1,2)[anchor=south]{$R$};
\end{tikzpicture}
\hspace{.3cm}
\begin{tikzpicture}
\draw[line width=.25mm,dashed,-Straight Barb] (0,2) -- (0,0) -- (2,0);
\draw[line width=.25mm,-Straight Barb] (0,2) -- (2,2);
\draw[line width=.25mm](2,0)  -- (2,2);
\node at (0,0)[circle, fill, inner sep = .8pt]{};
\node at (0,2)[circle, fill, inner sep = .8pt]{};
\node at (2,0)[circle, fill, inner sep = .8pt]{};
\node at (2,2)[circle, fill, inner sep = .8pt]{};
\node at (0,1)[anchor=east]{$\leq$};
\node at (2,1)[anchor=west]{$\leq$};
\node at (1,0)[anchor=north, align=center,minimum size=3em]{$R$\\back--down};
\node at (1,2)[anchor=south]{$R$};
\end{tikzpicture}
\hspace{.3cm}
\begin{tikzpicture}
\draw[line width=.25mm,-Straight Barb] (0,0) -- (2,0);
\draw[line width=.25mm](2,0) -- (2,2);
\draw[line width=.25mm,dashed,-Straight Barb] (0,0) -- (0,2)  -- (2,2);
\node at (0,0)[circle, fill, inner sep = .8pt]{};
\node at (0,2)[circle, fill, inner sep = .8pt]{};
\node at (2,0)[circle, fill, inner sep = .8pt]{};
\node at (2,2)[circle, fill, inner sep = .8pt]{};
\node at (0,1)[anchor=east]{$\leq$};
\node at (2,1)[anchor=west]{$\leq$};
\node at (1,0)[anchor=north, align=center,minimum size=3em]{$R$\\back--up};
\node at (1,2)[anchor=south]{$R$};
\end{tikzpicture}
\caption{Confluence conditions}\label{figure:confluence}
\end{figure}
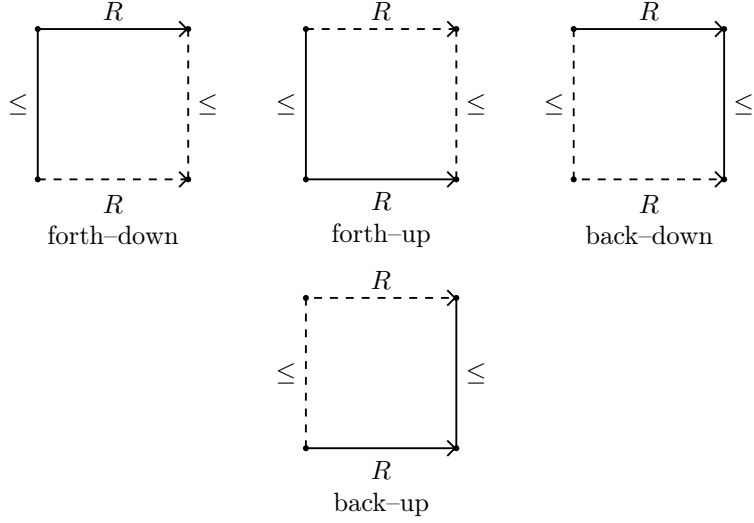
\Cref{figure:confluence} depicts the four confluence conditions that a fully confluent relation must validate.

\begin{definition}\label{compatible}
Let $\Sigma\subseteq\lanfull$ be closed under subformulas. Suppose that $\Phi,\Psi\in\type\Sigma$. The ordered pair $(\Phi,\Psi)$ is \define{sensible} if
\begin{enumerate}[label=(\arabic*)]
\item for all $\nx\varphi\in \Sigma$, we have $\nx\varphi\in \Phi$ if and only if $ \varphi\in \Psi$;
\item for all $\y\varphi\in \Sigma$, we have $\y\varphi\in \Psi$ if and only if $ \varphi\in \Phi$;
\item for all $\nec \varphi\in \Sigma$, we have $\nec \varphi\in \Phi$ if and only if $\varphi\in \Phi$ and $\nec \varphi\in \Psi$;
\item for all $\has \varphi\in \Sigma$, we have $\has \varphi\in \Psi$ if and only if $\varphi\in \Psi$ and $\has \varphi\in \Phi$;
\item for all $\varphi \until \psi \in \Sigma$, we have $\varphi \until \psi\in \Phi$ if and only if either $\psi\in\Phi$ or ($\varphi\in \Phi$ and $\varphi \until \psi \in \Psi$);
\item for all $\varphi \since \psi \in \Sigma$, we have $\varphi \since \psi\in \Psi$ if and only if either $\psi\in\Psi$ or ($\varphi\in \Psi$ and $\varphi \since \psi \in \Phi$).
\end{enumerate}
A pair $(w,v)$ of worlds in a $\Sigma$-labelled space $\mathcal  W$ is \define{sensible} if $(\ell (w),\ell (v))$ is sensible.
A \emph{relation}
$R\subseteq   W\times   W$
is \define{sensible} if every pair in $R$ is sensible. It is \define{bi-serial}, if for each $w \in   W$ there exist both $v_1 \in   W$ with $w \mathrel R  v_1$ \emph{and} $v_{-1} \in   W$ with $v_{-1} \mathrel R  w$.
Further, $R$ is \define{$\om$-sensible} if:
\begin{itemize}

\item whenever $\nec\varphi\in \Sigma\setminus\ell(w)$, there are $n\geq 0$ and $v$ such that $w \mathrel R^n v$ and $\varphi\in \Sigma\setminus\ell(v)$;
\item whenever $\has\varphi\in\Sigma\setminus\ell(w)$, there are $n\geq 0$ and $v$ such that $v \mathrel R^n w$ and $\varphi\in \Sigma\setminus \ell(v)$;
\item whenever $\varphi \until \psi \in \ell(w)$, there are $n\geq 0$ and 
$v$ such that $w \mathrel{R^n} v$ and $\psi \in \ell(v)$;\footnote{Note that because $R$ is assumed to be sensible, this condition is equivalent to the existence of $m\geq 0$ and  $v_0 \mathrel R v_1 \mathrel R\dots\mathrel R v_m$ such that $v_0 = w$ and $\varphi\in \ell(v_0), \dots, \varphi\in \ell(v_{m-1}), \psi\in \ell(v_m)$. Similarly for the following $\since$ condition.}

\item whenever $\varphi \since \psi \in \ell(w)$, there are $n\geq 0$ and 
$v$ such that $v \mathrel{R^n} w$ and $\psi \in \ell(v)$.
\end{itemize}

A \define{labelled system} is a labelled space $\mathcal  W$ equipped with a bi-serial, fully confluent, convex, sensible relation $R_\mathcal W\subseteq  W\times  W$. If moreover $R_\mathcal W$ is $\om$-sensible, we say that $\mathcal  W$ is a \define{$\Sigma$-quasimodel}.
\end{definition}

Any bi-relational model can be regarded as a $\Sigma$-quasimodel. If $\mathcal {X} = (W,T,{\leq},\allowbreak S, \val {\,\cdot\,})$ is a bi-relational model and $x\in W \times T$, we can assign a $\Sigma$-type $\ell_\mathcal  X(x)$ to $x$ given by
$\ell_\mathcal  X(x)=\cbra\psi\in \Sigma \mid x\in \val\psi \cket.$
We also set $R_\mathcal  X=\{((w, t) , (w, S(t))) \mid w \in W,\ t \in T\}$; it is obvious that $R_\mathcal  X$ is $\om$-sensible.
Henceforth we will tacitly identify $\mathcal  X$ with its associated $\Sigma$-quasimodel.

\section{From quasimodels to bi-relational models}\label{SecGen}

If $\mathcal  Q = (   Q,{\leq}_\mathcal  Q ,\ell_\mathcal  Q, R_\mathcal Q )$ is a quasimodel
, then we cannot necessarily view $\mathcal  Q$ directly as a bi-relational model for the primary reason that $R_\mathcal  Q$ is not necessarily a function. However, we can extract bi-relational models from quasimodels via an unwinding and selection construction. More precisely, given a $\Sigma \subseteq\lanfull$ that is finite and closed under subformulas, suppose $\varphi$ is falsified on the $\Sigma$-quasimodel $\mathcal Q$. In this section we show how to obtain from $\mathcal Q$ a bi-relational model $\lm_\varphi$ falsifying $\varphi$. We call the resulting bi-relational model a \emph{limit model} of $\mathcal Q$. This proves $\gtlrel$ is sound for the class of quasimodels.

 The general idea for determinising $\mathcal Q$ is to consider infinite  paths on $\mathcal  Q$ as points in the limit model. However, we will only select paths $\vec w$ with the property that, if $\varphi \until \psi$ occurs in $\vec w$, then $\psi$ must also occur at a later time, with similar conditions for $\since$, $\nec$, and $\has$. These are the \emph{realising} paths of $\mathcal  Q$.

\begin{definition}
A \define{path} in a $\Sigma$-quasimodel $\mathcal  Q = (   Q,{\leq}_\mathcal  Q ,\ell_\mathcal  Q, R_\mathcal Q )$ is any  sequence $\<w_i\>_{\alpha <i < \beta}$ with $-\infty \leq \alpha < \beta \leq \infty$ 
${\vec w}=\< w_i\>_{i\in\mathbb Z}$
is \define{realising} if for all $i\in\mathbb Z$:
\begin{itemize}
\item
 for all $\nec\psi\in \Sigma \setminus \ell(w_i)$, there exists $j\geq i$ such that $\psi\in \Sigma \setminus \ell  (w_j)$;
\item
 for all $\has\psi\in \Sigma \setminus \ell(w_i)$, there exists $j\leq i$ such that $\psi\in \Sigma \setminus \ell  (w_j)$;
\item
 for all $\varphi \until \psi\in \ell(w_i)$, there exists $j\geq i$ such that $\psi\in \ell  (w_j)$;
\item
 for all $\varphi \since \psi\in \ell(w_i)$, there exists $j\leq i$ such that $\psi\in \ell  (w_j)$.
\end{itemize}
\end{definition}

Denote the set of realising paths by $ \domlm $, and let $\< v_i\>_{i\in\mathbb Z}\leq \< w_i\>_{i\in\mathbb Z}$ if and only if $ v_i\leq w_i $ for all $i\in\mathbb Z$.
The worlds of the limit model will be a  subset of $\lm$ that is linearly ordered (with respect to $\leq$).

Given our $\Sigma \subseteq\lanfull$ that is finite and closed under subformulas and a formula $\varphi$ falsified in $\mathcal Q$, the limit model $\lm_\varphi$ will be of the form $(W, \mathbb Z, {\leq}, S, \allowbreak\val {\,\cdot\,})$, where $W \subseteq \domlm$, the flow function $S \colon \mathbb Z \to \mathbb Z$ is successor, and $\val p = \{(\vec w ,i) \mid p \in \ell(w_i)\}$ (extended to compound formulas in accordance with \Cref{DefKSem}). We now describe how to select the linearly ordered subset of realising paths $W$.

\begin{definition}
A \define{finite grid} is a finite set of paths in $\mathcal Q$ with uniform index bounds $ l < k \in \mathbb Z$ that is linearly ordered (by the pointwise ordering). 

A finite grid $P'$ of paths with bounds $l' < k'$ \define{extends} a finite grid $P$ of paths with bounds $k$, if $l' \leq l < k \leq k'$ and $P \subseteq P'|_{l,k}$, where $P'|_{l,k}$ is the restriction of the paths in $P'$ to their segments within the bounds $l$ and $k$.
\end{definition}

\begin{definition}[defects]
 Let $P$ be a finite grid.
\begin{itemize}
\item
A \define{$\nec$-defect} of $P$ is a pair $((w_i)_{l<i<k}, \nec\varphi) \in P \times \Sigma$ such that $\nec\varphi \not\in \ell(w_{k-1})$, but $\varphi \in \ell(w_{k-1})$.
\item
A \define{$\has$-defect} of $P$ is a pair $((w_i)_{l<i<k}, \has\varphi) \in P \times \Sigma$ such that $\has\varphi \not\in \ell(w_{l+1})$, but $\varphi \in \ell(w_{l+1})$.
\item
An \define{$\until$-defect} of $P$ is a pair $((w_i)_{l<i<k}, \varphi \until \psi) \in P \times \Sigma$ such that $\varphi \until \psi \in \ell(w_{k-1})$, but $\psi \not\in \ell(w_{k-1})$.
\item
A \define{$\since$-defect} of $P$ is a pair $((w_i)_{l<i<k}, \varphi \since \psi) \in P \times \Sigma$ such that $\varphi \since \psi \in \ell(w_{l+1})$, but $\psi \not\in \ell(w_{l+1})$.
\item
An \define{$\imp$-defect} of $P$ is a triple $((w_i)_{l<i<k}, \varphi \imp\psi, j ) \in P \times \Sigma \times \{l+1,\dots,k-1\}$ such that $\varphi \imp\psi \not\in \ell(w_{j})$, but there is no $(v_i)_{l<i<k} \leq (w_i)_{l<i<k}$ also in $P$ such that $\varphi \in \ell(v_{j})$, but $\psi \not\in \ell(v_{j})$.
\item
A \define{$\dimp$-defect} of $P$ is a triple $((w_i)_{l<i<k}, \varphi \dimp\psi, j ) \in P \times \Sigma \times \{l+1,\dots,k-1\}$ such that $\varphi \dimp\psi \in \ell(w_{j})$, but there is no $(v_i)_{l<i<k} \geq (w_i)_{l<i<k}$ also in $P$ such that $\varphi \in \ell(v_{j})$, but $\psi \not\in \ell(v_{j})$.
\item
A \define{seriality defect} is a pair $((w_i)_{l<i<k}, b) \in P \times \{-,+\}$.
\end{itemize}
\end{definition}

Note that because $\Sigma$ is finite, any finite grid has a finite number of defects. We select the set $W \subseteq \domlm$ as follows. We maintain a (finite) first-in-first-out queue $D$ of defects and a finite grid $P$ that is extended each time we process a defect. (We will ensure that all constituents of $D$ continue to be defects of the grid after each update.) Choose some $w \in Q$ falsifying $\varphi$. We initialise the grid to the single sequence $(w)$ (of length 1, using the index $0$) and add all defects of this grid to $D$.

Using the properties of quasimodels, we may extend $P$ to a new grid $P'$, so that the defect at the head of the queue $D$ is not a defect of $P'$. (Because of seriality defects, $D$ is never empty.)
The construction of $P'$ is realised via a case-by-case analysis. \begin{itemize}
\item
If the defect is an $\until$-defect $((w_i)_{l<i<k}, \varphi \until \psi)$ of $P$, then because $R=R_\mathcal  Q$ is $\om$-sensible  we know that there exist $j> 0$ and $v\in R^{j}(w_{k-1})$ such that $\varphi\in \ell(v)$.  We then define $k'=k+j$ and choose
\[w_{k -1}\mathrel R w_{k}\mathrel R \hdots \mathrel R  w_{k'-1}=v.\]
By the forth--up confluence property, we can extend every $ (v_i)_{l<i<k} > (w_i)_{l<i<k}$ in $P$ to a sequence with upper index bound $k'$ in a way that preserves the (linear) ordering on sequences. Similarly for every $ (v_i)_{l<i<k} < (w_i)_{l<i<k}$ in $P$ using the forth--down confluence property. 
\item
If the defect is a $\nec$-defect $((w_i)_{i<k}, \nec\varphi)$, we can find $j> 0$ and $v\in R^{j}(w_{k-1})$ such that $\varphi\not\in \ell(v)$, and proceed as for the $\until$ case.

\item
The cases of $\since$- and $\has$-defects are the temporal duals of $\until$- and $\nec$-defects, respectively.

\item
If the defect is an $\imp$-defect $((w_i)_{l<i<k}, \varphi \imp\psi, j )$, choose $v_j < w_j$ such that $\varphi \in \ell(v_j)$ and $\psi \not \in \ell(v_j)$. 
Let $(u_i)_{l<i<k}$ be the minimum sequence in $P$ with $u_j > v_j$ and $(t_i)_{l<i<k}$ be the maximum sequence in $P$ with $t_j < v_j$, if it exists.
We will assume that $(t_i)_{i<k}$ is defined, since the case where $v_j$ is not bounded below is similar but simpler.
We will complete the sequence $(v_i)_{l<i<k}$ so that $(t_i)_{l<i<k} < (v_i)_{l<i<k} < (u_i)_{l<i<k}$. Suppose we have defined $v_{j'}$ for $j\leq j' < k-1$. To define $v_{j'+1}$, choose $y$ with $v_{j'} \mathrel R y$ and $y \leq u_{j'+1}$, which exists by forth--up confluence. If $y \geq t_{j'+1}$ we can set $v_{j'+1} = y$ and we are done. Otherwise, by local linearity of $\leq$, we have $y < t_{j'+1}$. In this case, choose $z$ with $v_{j'} \mathrel R z$ and $z \geq t_{j'+1}$, which exists by forth--down confluence. Then $y < t_{j'+1} \leq z$, so as the image set of $v_{j'}$ under $R$ is convex (by convexity of $R$), we have $v_{j'} \mathrel R t_{j'+1}$ and we can set $v_{j'+1} = t_{j'+1}$. In this way we can define $(v_i)_{l<i<k}$ inductively for all indices greater than $j$. The process for indices less than $j$ is similar, using back--up and back--down confluence and the convexity of preimage sets under $R$. By construction, $(v_i)_{l<i<k}$ sits strictly between  $(t_i)_{l<i<k}$ and $(u_i)_{l<i<k}$.

\item 

The case of a $\dimp$-defect is similar to that of an $\imp$-defect, except that we choose $v_j>w_j$.

\item
If the defect is a seriality defect $((w_i)_{l<i<k},+)$, we extend $(w_i)_{l<i<k}$ to $(w_i)_{l<i<k+1}$, relying on the fact that $R$ is serial, and then we extend the other sequences to length $k+1$ using  forth--up and forth--down confluence. A seriality defect $((w_i)_{l<i<k},-)$ is analogous: we extend $(w_i)_{l<i<k}$ to $(w_i)_{l-1<i<k}$, relying on converse seriality of $R$.
\end{itemize}
Next we must update $D$ so that it contains all defects of $P'$, and all the elements of $D$ that are not defects of $P'$ (which by design includes the head of $D$) are removed.
Finally, we update $P:=P'$.

We set $W$ to be the limit of this sequence of finite grids. More precisely, let the sequence of grids be $P_0, P_1,\dots$, containing paths with bounds $(l_0,k_0), (l_1,k_1)\dots$, respectively. Then $W$ is the set of paths $(w_i)_{i\in\mathbb Z}$ for which there exists $N$ such that for all $n>N$ the finite segment $(w_i)_{l_n<i<k_n}$ is in $P_n$. (This gives us the limit we would expect and forbids `diagonal' sequences whose finite segments appear in $P_0, P_1,\dots$ merely infinitely often.)

By construction, $(W, \leq)$ is a linearly ordered set, and for each $p \in \mathbb P$, the set $\lb p \rb$ is downward closed in its first coordinate.  Thus the limit model $\lm_\varphi = (W, \mathbb Z, {\leq}, S, \val {\,\cdot\,})$ indeed defines a bi-relational model. Of course $\lm_\varphi$ is only useful if $\lb\,\cdot\,\rb$ `matches' $\ell$ on \emph{all} formulas in $\Sigma$, not just propositional variables. Fortunately, this turns out to be the case.

\begin{lemma}\label{sound}
Let $\Sigma \subseteq\lanfull$ be finite and closed under subformulas, $\mathcal  Q$ be a $\Sigma$-quasimodel, $\varphi, \psi\in \Sigma$, and $\lm_\varphi = (W, \mathbb Z, {\leq}, S,\allowbreak \val {\,\cdot\,})$ be as described above.
Then
$\lb\psi\rb=\{(\vec w, i) \mid \psi\in \ell(w_i)\}.$
\end{lemma}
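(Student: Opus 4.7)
The plan is to prove the equality by structural induction on $\psi$, with the base case $\psi = p \in \mathbb P$ holding immediately by the definition of $\val p$ on $\lm_\varphi$. The Boolean cases $\wedge$ and $\vee$ follow directly from conditions \ref{type1} and \ref{type2} of \Cref{definition:type} applied at the type $\ell(w_i)$. For the $\nx$ case I will use that $\vec S(\vec w, i) = (\vec w, i+1)$: the semantic clause gives $(\vec w, i) \in \val{\nx\chi}$ iff $(\vec w, i+1) \in \val{\chi}$, which by the induction hypothesis is $\chi \in \ell(w_{i+1})$, and this is equivalent to $\nx\chi \in \ell(w_i)$ by sensibility of the pair $(\ell(w_i), \ell(w_{i+1}))$; the $\y$ case is symmetric.

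For the infinitary temporal connectives $\nec, \has, \until, \since$ I would combine iterated sensibility in one direction with realisingness of $\vec w$ in the other. For example, if $\nec\chi \in \ell(w_i)$, iterating the $\nec$-clause of sensibility along $w_i \mathrel R w_{i+1} \mathrel R \cdots$ yields $\chi \in \ell(w_j)$ for every $j \geq i$, so $(\vec w, j) \in \val{\chi}$ by the induction hypothesis, giving $(\vec w, i) \in \val{\nec\chi}$; conversely, if $\nec\chi \notin \ell(w_i)$ then realisingness supplies some $j \geq i$ with $\chi \notin \ell(w_j)$, and induction yields $(\vec w, j) \notin \val{\chi}$. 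The $\until$ case additionally uses iterated sensibility to propagate the left operand $\varphi_1$ of $\varphi_1 \until \varphi_2$ across all intermediate indices between $i$ and the realising $j$ supplied by the path. The past analogues $\has$ and $\since$ are handled symmetrically along $R$-predecessors.

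The hard part will be the $\imp$ case (and dually $\dimp$). For the easy direction, if $\varphi_1 \imp \varphi_2 \in \ell(w_i)$ and $\vec v \leq \vec w$ satisfies $\varphi_1 \in \ell(v_i)$, then $v_i \leq w_i$ and inverse monotonicity of $\ell$ give $\varphi_1 \imp \varphi_2 \in \ell(v_i)$; condition \ref{type3a} then forces $\varphi_2 \in \ell(v_i)$, so $(\vec v, i) \in \val{\varphi_2}$ by induction. For the harder converse, given $\varphi_1 \imp \varphi_2 \notin \ell(w_i)$, I must exhibit a realising path $\vec v \in W$ with $\vec v \leq \vec w$, $\varphi_1 \in \ell(v_i)$, and $\varphi_2 \notin \ell(v_i)$. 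The argument here exploits the FIFO discipline of the defect queue: once a grid $P_n$ contains a segment of $\vec w$ whose index range covers $i$, the triple $((w_j)_{l_n<j<k_n}, \varphi_1 \imp \varphi_2, i)$ is an $\imp$-defect that must eventually be dequeued, at which point the construction inserts a new segment $(v_j)_{l_n<j<k_n}$ strictly below $(w_j)_{l_n<j<k_n}$ satisfying $\varphi_1 \in \ell(v_i)$ and $\varphi_2 \notin \ell(v_i)$. All subsequent defects of this new segment (including the two-sided seriality defects) are themselves enqueued and eventually processed, so the segment extends to a bi-infinite realising path $\vec v$ belonging to $W$, and the pointwise linearity preserved by every grid extension yields $\vec v \leq \vec w$. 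The $\dimp$ case is strictly dual, using the upward witnesses produced when $\dimp$-defects are processed together with condition \ref{type4a}.
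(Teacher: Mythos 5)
Your proposal is correct and follows essentially the same route as the paper's (much terser) proof: a standard structural induction in which the non-immediate cases all reduce to the fact that every defect of the grid is eventually dequeued and eliminated, so that limit paths are realising and the required $\leq$-witnesses for $\imp$ and $\dimp$ appear in $W$. The paper states this in a few lines; your write-up simply makes the sensibility/realisingness bookkeeping for each connective explicit.
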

\begin{proof} The proof goes by standard induction of formulas. The induction steps for $\wedge$ and $\vee$ are immediate. The cases of the remaining connectives follow straightforwardly from the construction of $W$, because every defect is eventually eliminated. Indeed every $w_i$ in the limit is also in some $P_n$, at which point all the defects associated to $w_i$ are present in the finite first-in-first-out queue $D$; thus they will eventually reach the front of $D$ and be eliminated.
\end{proof}

We obtain the main result of this section, which in particular implies that $\gtlrel$ is sound for the class of quasimodels.

\begin{proposition}\label{second}
Let $\Sigma \subseteq\lanfull$ be finite and closed under subformulas and $\mathcal Q$ be any $\Sigma$-quasimodel, and suppose $\varphi$ is falsified on $\mathcal Q$. Then there exists  a bi-relational model $\lm_\varphi$ that falsifies $\varphi$.
\end{proposition}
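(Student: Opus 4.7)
The plan is to invoke the detailed construction developed earlier in the section, which associates to $\mathcal Q$ and the falsified formula $\varphi$ a candidate limit model $\lm_\varphi = (W, \mathbb Z, {\leq}, S, \lb\,\cdot\,\rb)$. My goal is then to check two things: that $\lm_\varphi$ is genuinely a bi-relational model in the sense of \Cref{DefKSem}, and that $\varphi$ is falsified somewhere in it.

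First I would verify that $(W, \leq)$ is linearly ordered. Each finite grid $P_n$ is linearly ordered by construction (the defect-elimination steps explicitly maintain this invariant, e.g.\ by inserting an $\imp$-witness sequence strictly between its upper and lower neighbours using forth/back confluence and convexity of $R_\mathcal Q$), and any two paths in the limit $W$ are, from some $P_n$ onward, comparable pointwise, so linearity passes to the limit. The map $S$ is already the successor bijection on $\mathbb Z$, and the downward closure of $\lb p\rb = \{(\vec w, i) \mid p \in \ell(w_i)\}$ in its first coordinate follows directly from the inverse monotonicity of $\ell_\mathcal Q$ along $\leq_\mathcal Q$.

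Next I would apply \Cref{sound} to conclude that $\lb \psi \rb = \{(\vec w, i) \mid \psi \in \ell(w_i)\}$ for every $\psi \in \Sigma$, not merely for propositional atoms. For this to be legitimate I need every path in $W$ to be realising, and this is the key substantive point I would highlight: the FIFO discipline on the queue $D$ ensures that a defect placed into $D$ rises to the front and gets eliminated after finitely many grid extensions, so no $\nec$, $\has$, $\until$, or $\since$ obligation attached to any $w_i$ in a limit path can persist unfulfilled forever.

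Finally, to exhibit an actual falsifying point, I would take the seed world $w \in Q$ with $\varphi \in \Sigma \setminus \ell_\mathcal Q(w)$ used to initialise $P_0 = \{(w)\}$ at index $0$; by construction, $w$ occurs as the zero-th coordinate of at least one realising path $\vec w \in W$. Applying \Cref{sound} to $\psi = \varphi$ at $(\vec w, 0)$ then gives $(\vec w, 0) \notin \lb\varphi\rb$, as required. The main obstacle, namely guaranteeing that limit paths are realising and that the confluence-based grid extensions genuinely preserve linearity and convexity, has already been discharged inside the construction itself; the proposition is then essentially a clean wrap-up, and I would keep the write-up correspondingly short.
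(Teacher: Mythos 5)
Your proposal is correct and follows the paper's intended argument exactly: Proposition~\ref{second} is stated in the paper as an immediate consequence of the limit-model construction together with Lemma~\ref{sound}, and you supply precisely the routine verifications (linearity of $(W,\leq)$, downward closure of $\lb p\rb$, realisability of limit paths via the FIFO queue, and the seed path at index $0$ as the falsifying point) that the paper leaves implicit.
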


It is interesting to note that although we assumed in this section that $\Sigma$ was finite, this restriction can be removed. Since $\lanfull$ is countable, for an arbitrary subformula-closed $\Sigma \subseteq \lanfull$, there can only be a countable number of defects in any finite grid. Thus with appropriate scheduling all defects can be eliminated in the limit.


\section{From bi-relational models to finite quasimodels}\label{Sec:quotient}

As we noted earlier, every bi-relational model can be naturally viewed as a quasimodel. However, we wish to show that, given a finite and subformula-closed $\Sigma$, we can from each bi-relational model $\mathcal X$  produce a \emph{finite} quasimodel falsifying exactly the same formulas from $\Sigma$ as $\mathcal X$. In this section we do just this by describing a quotient construction that given an arbitrary $\Sigma$-labelled system produces a new $\Sigma$-labelled system $\mathcal Q$. This construction does not require $\Sigma$ to be finite, but when it is, $\mathcal Q$ is finite. Since bi-relational models can be viewed as $\Sigma$-labelled systems, our quotient construction applies to them. We then observe that the construction preserves $\om$-sensibility, so that when applied to a bi-relational model, we do indeed obtain a quasimodel. An illustrative example will follow (\Cref{example:quotient}).

 The process of constructing of $\mathcal Q$ consists of two steps.	The first is to take a quotient to obtain a $\Sigma$-labelled space equipped with a bi-serial, fully confluent, sensible relation. The second step is to enlarge the equipped relation to make it also convex, so all the conditions for a $\Sigma$-labelled space hold.

Let $\Sigma$ be a subformula-closed subset of $\lanfull$, and let $\mathcal {W} = (W,{\leq},\ell, R)$ be a $\Sigma$-labelled system. 
For $w\in W$, define $L_\mathcal W(w) = \cbra \ell(v) \mid v \compa w \cket$. 
Define the equivalence relation $\sim$ on $W$ by \[w \sim v \iff (\ell(w), L(w)) = (\ell(v), L(v)).\]
If $\Sigma$ is finite, then clearly $W / {\sim}$ is finite.\footnote{Note that if $\mathcal W$ is a bi-relational model, then $\sim$ is the largest relation that is simultaneously a bisimulation with respect to the relations $\leq$ and $\geq$, with $\Sigma$ treated as the set of atomic propositions that bisimilar worlds must agree on.}

Now define a partial order $\leq_\mathcal Q$ on the equivalence classes $W /{\sim}$ of $\sim$ by
\[[w] \leq_\mathcal Q [v] \iff L(w) = L(v)\text{ and }\ell(w) \supseteq \ell(v),\]
noting that this is well-defined and is indeed a partial order.

Since each set $L(w)$ can be linearly ordered by inclusion and $\ell(w) \in L(w)$, the poset $(W / {\sim}, \leq_\mathcal Q)$ is a disjoint union of linear posets. By defining $\ell_\mathcal Q$ by 
$\ell_\mathcal Q([w]) = \ell(w)$
we obtain a $\Sigma$-labelled space $(W/ {\sim}, \leq_\mathcal Q, \ell_\mathcal Q)$; it is not hard to check that this labelling is inversely monotone and that the clauses for $\imp$ and $\dimp$ in the definition of a labelled space hold with this labelling.

Now define the binary relation $R_\mathcal Q$ on $W / {\sim}$ to be the smallest relation such that $w\mathrel R v $ implies $[w] \mathrel R_\mathcal Q [v]$. 

\begin{lemma}
The relation $R_\mathcal Q$ is bi-serial, sensible, and fully confluent.
\end{lemma}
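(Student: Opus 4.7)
The plan is to use the fact that $[w] \mathrel{R_\mathcal Q} [v]$ holds precisely when there exist representatives $w' \sim w$ and $v' \sim v$ with $w' \mathrel R v'$. \emph{Bi-seriality} is then immediate: given $[w]$, apply bi-seriality of $R$ to the representative $w$ itself to find $u_1, u_2 \in W$ with $u_1 \mathrel R w \mathrel R u_2$, giving $[u_1] \mathrel{R_\mathcal Q} [w] \mathrel{R_\mathcal Q} [u_2]$. \emph{Sensibility} is also immediate: the definition of a sensible pair depends only on the two $\Sigma$-types, and $\sim$-equivalent worlds carry the same label, so sensibility of $(w', v')$ under $R$ transfers to sensibility of $([w], [v])$ under $R_\mathcal Q$, using $\ell_\mathcal Q([w]) = \ell(w) = \ell(w')$.

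The substantive content is \emph{full confluence}, of which I will treat \textbf{forth--down} as the representative case. Suppose $[w] \leq_\mathcal Q [w'] \mathrel{R_\mathcal Q} [v']$. Unpack the latter to obtain $w'' \sim w'$ and $v'' \sim v'$ with $w'' \mathrel R v''$, and unpack the former as $L(w) = L(w')$ and $\ell(w) \supseteq \ell(w')$. Since $L(w'') = L(w') = L(w)$, the label $\ell(w)$ is realised somewhere in the linear component of $w''$: choose $w''' \compa w''$ with $\ell(w''') = \ell(w)$. Then $L(w''') = L(w'') = L(w)$, so $(\ell(w'''), L(w''')) = (\ell(w), L(w))$ and therefore $w''' \sim w$. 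Moreover $\ell(w''') = \ell(w) \supseteq \ell(w') = \ell(w'')$, so by inverse monotonicity combined with local linearity of the component of $w''$, we get $w''' \leq w''$. Now apply forth--down confluence of $R$ in $\mathcal W$ to $w''' \leq w'' \mathrel R v''$, obtaining $y \in W$ with $w''' \mathrel R y$ and $y \leq v''$. Then $[w] = [w'''] \mathrel{R_\mathcal Q} [y]$, and since $y$ lies in the linear component of $v''$, we have $L(y) = L(v'') = L(v')$ and $\ell(y) \supseteq \ell(v'') = \ell(v')$, so $[y] \leq_\mathcal Q [v']$, as required.

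The remaining three confluence conditions are handled by wholly analogous arguments: for \textbf{forth--up} one picks $w''' \compa w''$ with $\ell(w''') = \ell(w')$ so that $w''' \geq w''$, and invokes forth--up confluence of $R$; the two \textbf{back} clauses are obtained by applying the same trick to the $v$-side instead, invoking back--down and back--up confluence of $R$, respectively.

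The main obstacle, such as it is, lies in recognising that the definition of $\sim$ (which records the entire set $L(w)$ of labels in the linear component, not merely $\ell(w)$) is precisely what allows us to replace a representative $w''$ of $[w']$ by a $\sim$-equivalent $w'''$ lying in the \emph{same} linear component as $w''$ but with a prescribed label; without this, confluence of $R$ could not be invoked within a single linear component of $\mathcal W$, and the square witnessing confluence in the quotient would fail to close.
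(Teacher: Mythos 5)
Your proof is correct and follows essentially the same route as the paper: the key step in both is to use the equality of the $L$-sets to relocate a representative of the given equivalence class into the same linear component as the world where $R$ actually holds, and then invoke the corresponding confluence property of $R$ there. The paper treats forth--up as its representative case while you treat forth--down, but the arguments are interchangeable, and your observation about why recording all of $L(w)$ in the definition of $\sim$ is essential is exactly the point.
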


\begin{proof}
It is clear that $R_\mathcal Q$ is bi-serial and sensible.

For confluence, suppose $[w] \mathrel R_\mathcal Q [v]$, where we may assume that $w \mathrel R v$. To see that the forth--up condition holds, suppose further that $[w] \leq_\mathcal Q [u]$. Then as $l(u) \in L(u) = L(w)$ there is some $u' \geq w$ with $[u] = [u']$. By forth--up confluence of $R$, there exists $z$ with $u' \mathrel R z$ and $v \leq z$. Then we have $[u'] \mathrel R_\mathcal Q [z]$ and $[v] \leq_\mathcal Q [z]$, as required for the forth--up condition. The proofs of the remaining three confluence conditions are entirely analogous.
\end{proof}

As promised, we now have a $\Sigma$-labelled space equipped with a bi-serial, fully confluent, sensible relation. We now transform this structure into a $\Sigma$-labelled system by making the equipped relation convex by fiat.

Define $R^+_\mathcal Q$ by $X \mathrel R^+_\mathcal Q Y$ if and only if there exist $X_1 \leq_\mathcal Q X \leq_\mathcal Q X_2$ and $Y_1 \leq_\mathcal Q Y \leq_\mathcal Q Y_2$ such that $X_2 \mathrel R_\mathcal Q Y_1$ and $X_1 \mathrel R_\mathcal Q Y_2$. Now define $\mathcal Q = (W / {\sim}, \leq_\mathcal Q, \ell_\mathcal Q, R^+_\mathcal Q)$.

\begin{proposition}\label{prop:quotient_system}
The structure $\mathcal Q$ is a $\Sigma$-labelled system.
\end{proposition}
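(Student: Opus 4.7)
The plan is to verify the four properties of $R^+_\mathcal Q$ required for $\mathcal Q$ to be a $\Sigma$-labelled system: bi-seriality, convexity, full confluence, and sensibility. The underlying labelled-space structure has already been established in the paragraphs preceding the statement.

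Bi-seriality is immediate from $R_\mathcal Q \subseteq R^+_\mathcal Q$ (take $X_1 = X = X_2$ and $Y_1 = Y = Y_2$) together with the bi-seriality of $R_\mathcal Q$. Convexity is essentially by construction: for image convexity, given $X \mathrel R^+_\mathcal Q Y$ and $X \mathrel R^+_\mathcal Q Y'$ with $Y \leq Y^* \leq Y'$, witnessed by $(X_1, X_2, Y_1, Y_2)$ and $(X_1', X_2', Y_1', Y_2')$ respectively, one has $Y_1 \leq Y^* \leq Y_2'$ and $X_1' \leq X \leq X_2$, so the edges $X_2 \mathrel R_\mathcal Q Y_1$ and $X_1' \mathrel R_\mathcal Q Y_2'$ together witness $X \mathrel R^+_\mathcal Q Y^*$; preimage convexity is dual. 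Full confluence of $R^+_\mathcal Q$ also reduces directly to full confluence of $R_\mathcal Q$: for forth--up, given $X \leq X'$ and $X \mathrel R^+_\mathcal Q Y$ via witnesses as above, forth--up of $R_\mathcal Q$ applied to $X_1 \mathrel R_\mathcal Q Y_2$ (using $X_1 \leq X \leq X'$) yields $Y^*$ with $X' \mathrel R_\mathcal Q Y^*$ and $Y_2 \leq Y^*$; since $R_\mathcal Q \subseteq R^+_\mathcal Q$, this gives $X' \mathrel R^+_\mathcal Q Y^*$ with $Y \leq Y^*$. The remaining three confluence conditions follow by the analogous applications of forth--down, back--up, and back--down to the appropriate diagonal $R_\mathcal Q$-witness.

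The main obstacle is sensibility, since $R^+_\mathcal Q$ contains pairs beyond $R_\mathcal Q$ whose sensibility is not automatic. A naive attempt combining inverse monotonicity of $\ell$ with sensibility of $(X_1, Y_2)$ and $(X_2, Y_1)$ founders on clauses such as the backward half of the $\nec$-condition: from $\varphi \in \ell(X)$ one cannot deduce $\varphi \in \ell(X_2)$, because the inclusion $\ell(X_2) \subseteq \ell(X)$ runs the wrong way. The remedy is to turn the diagonal $R_\mathcal Q$-witnesses into genuine $R_\mathcal Q$-edges out of $X$. Forth--up of $R_\mathcal Q$ applied to $X_1 \mathrel R_\mathcal Q Y_2$ (with $X_1 \leq X$) produces $Y'$ with $X \mathrel R_\mathcal Q Y'$ and $Y_2 \leq Y'$, so $\ell(Y') \subseteq \ell(Y_2) \subseteq \ell(Y)$; dually, forth--down applied to $X_2 \mathrel R_\mathcal Q Y_1$ (with $X \leq X_2$) produces $Y''$ with $X \mathrel R_\mathcal Q Y''$ and $Y'' \leq Y_1$, so $\ell(Y) \subseteq \ell(Y_1) \subseteq \ell(Y'')$. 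Since $R_\mathcal Q$ is sensible, both $(X, Y')$ and $(X, Y'')$ are sensible; each clause of the sensibility of $(X, Y)$ for a future-tense operator is then obtained by invoking whichever of $Y'$ or $Y''$ has its label inclusion pointing in the required direction (typically $Y'$ for the forward implication of a biconditional clause and $Y''$ for the backward one). Past-tense clauses for $\y$, $\has$, and $\since$ are handled symmetrically, using back--up and back--down of $R_\mathcal Q$ to produce $R_\mathcal Q$-predecessors of $Y$ whose labels sandwich $\ell(X)$. Residual local conclusions such as $\nec\varphi \in \ell(X) \imp \varphi \in \ell(X)$ come for free from bi-seriality of $R_\mathcal Q$ together with sensibility of any outgoing $R_\mathcal Q$-edge at $X$.
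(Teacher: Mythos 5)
Your proof is correct, and for bi-seriality, convexity, and full confluence it coincides with the paper's argument: all three are read off directly from the definition of $R^+_\mathcal Q$ together with the corresponding property of $R_\mathcal Q$. Where you genuinely diverge is in the sensibility check. The paper verifies sensibility by chasing a cycle of implications through the two diagonal witnesses $X_1 \mathrel R_\mathcal Q Y_2$ and $X_2 \mathrel R_\mathcal Q Y_1$ using only inverse monotonicity of $\ell_\mathcal Q$; it writes this out for $\nx$ (where the clause is a plain biconditional and the cycle closes) and declares the remaining modalities ``similar''. You correctly observe that for clauses with a conjunction on one side, such as $\nec\varphi \in \ell(X) \iff \varphi\in\ell(X) \wedge \nec\varphi\in\ell(Y)$, the naive chain does not close: the backward direction would require $\varphi\in\ell(X_2)$ from $\varphi\in\ell(X)$, against the direction of the inclusion. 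Your fix --- using forth--up and forth--down of $R_\mathcal Q$ to convert the diagonal witnesses into actual $R_\mathcal Q$-edges $X \mathrel R_\mathcal Q Y'$ and $X \mathrel R_\mathcal Q Y''$ with $\ell(Y') \subseteq \ell(Y) \subseteq \ell(Y'')$, and dually back--up/back--down for the past-tense clauses --- is sound and supplies exactly what the paper's ``similar'' elides. What your route buys is a uniform treatment of all the modal clauses at the cost of invoking confluence inside the sensibility argument; the paper's route is shorter for $\nx$ but leaves the reader to reconstruct the extra step for $\nec$, $\until$, $\has$, and $\since$.
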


\begin{proof}
We already know that $(W/ {\sim}, \leq_\mathcal Q, \ell_\mathcal Q)$ is a $\Sigma$-labelled space. First we must check that the enlarged relation $R^+_\mathcal Q$ is still bi-serial, sensible, and fully confluent. Bi-seriality is clear, since this is a monotone condition.

To see that $R^+_\mathcal Q$ is sensible, suppose $ X \mathrel R^+_\mathcal Q Y$ and that $\nx \varphi \in \Sigma$. Take $X_1 \leq_\mathcal Q X \leq_\mathcal Q X_2$ and $Y_1 \leq_\mathcal Q Y \leq_\mathcal Q Y_2$ such that $X \mathrel R_\mathcal Q Y_1$. Then
\begin{align*}
\nx \varphi \in \ell_\mathcal Q(X) &\implies \nx \varphi \in \ell_\mathcal Q(X_1)
&&\implies \phantom{\nx}\varphi \in \ell_\mathcal Q(Y_2)
 &&\implies \phantom{\nx}\varphi \in \ell_\mathcal Q(Y)\\
 &\implies \phantom{\nx}\varphi \in \ell_\mathcal Q(Y_1)
 &&\implies \nx \varphi \in \ell_\mathcal Q(X_2)  &&\implies \nx \varphi \in \ell_\mathcal Q(X),
\end{align*}
so $ \nx \varphi \in \ell_\mathcal Q(X) \iff  \varphi \in \ell_\mathcal Q(Y)$. The cases of the remaining modalities are similar.

For the forth--down condition, suppose $X \leq_\mathcal Q X' \mathrel R^+_\mathcal Q Y'$. Then by the definition of $R^+_\mathcal Q$, there exist $X_2 \geq_\mathcal Q X'$ and $Y_1 \leq_\mathcal Q Y'$ such that $X_2 \mathrel R_\mathcal Q Y_1$. 
 Since $X \leq_\mathcal Q X' \leq_\mathcal Q X_2$, by the forth--down condition for $R_\mathcal Q$ there is some $Y \leq_\mathcal Q Y_1$ with $X \mathrel R_\mathcal Q Y$ and therefore $X \mathrel R^+_\mathcal Q Y$. Since $Y \leq_\mathcal Q Y_1 \leq_\mathcal Q Y'$, we are done. 
 The proof that the forth--up condition holds is just the order dual of that for forth--down. The proofs of the back--down and back--up conditions are similar. Thus $R^+_\mathcal Q$ is fully confluent.



Finally, we show that $R^+_\mathcal Q$ is convex. Firstly, for the image condition, if $X \mathrel R^+_\mathcal Q Y_1$ and $X \mathrel R^+_\mathcal Q Y_2$ with $Y_1 \leq_\mathcal Q Y \leq Y_2$, then by the definition of $R^+_\mathcal Q$ we can find  $X_2 \geq_\mathcal Q X$ and $Y'_1 \leq_\mathcal Q Y_1$ with $X_2 \mathrel R_\mathcal Q Y'_1$, and similarly $X_1 \leq_\mathcal Q X$ and $Y'_2 \geq_\mathcal Q Y_2$ with $X_1 \mathrel R_\mathcal Q Y'_2$. Since then $X_1 \leq_\mathcal Q X \leq_\mathcal Q X_2$ and $Y'_1 \leq_\mathcal Q Y \leq_\mathcal Q Y'_2$, by the definition of $R^+_\mathcal Q$ we conclude that $X \mathrel R^+_\mathcal Q Y$. The preimage condition is completely analogous. This completes the proof that $\mathcal Q$ is a $\Sigma$-labelled system.
\end{proof}

\Cref{prop:quotient_system} will be of independent interest later for proving the completeness of our deductive calculus. For now, however, we are only interested in the case that we had an $\om$-sensible $R$ to begin with, in which case we are certain to obtain a quasimodel, as the next lemma shows.

\begin{lemma}
If $R$ is $\om$-sensible, then $R^+_\mathcal Q$ is $\om$-sensible.
\end{lemma}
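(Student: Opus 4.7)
The plan is to show $\omega$-sensibility of $R^+_\mathcal{Q}$ by pulling witnesses back through the quotient, using only the $\omega$-sensibility of the original $R$ together with a simple containment $R_\mathcal{Q} \subseteq R^+_\mathcal{Q}$.

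The first observation I would record is that $R_\mathcal{Q} \subseteq R^+_\mathcal{Q}$: if $[w] \mathrel{R_\mathcal{Q}} [v]$, then in the definition of $R^+_\mathcal{Q}$ we may simply take $X_1 = X_2 = [w]$ and $Y_1 = Y_2 = [v]$. Consequently, $R_\mathcal{Q}^n \subseteq (R^+_\mathcal{Q})^n$ for every $n \geq 0$. Secondly, since $\ell_\mathcal{Q}([w]) = \ell(w)$, and since any $R$-path $w = w_0 \mathrel R w_1 \mathrel R \cdots \mathrel R w_n = v$ descends to $[w] \mathrel{R_\mathcal{Q}} [w_1] \mathrel{R_\mathcal{Q}} \cdots \mathrel{R_\mathcal{Q}} [v]$, any witness to an $\omega$-sensibility condition for $R$ yields a corresponding witness for $R_\mathcal{Q}$ (and hence for $R^+_\mathcal{Q}$).

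With these two facts in hand, each clause of $\omega$-sensibility reduces to a one-line argument. For instance, suppose $\nec \varphi \in \Sigma \setminus \ell_\mathcal{Q}([w])$. Then $\nec \varphi \in \Sigma \setminus \ell(w)$, so by $\omega$-sensibility of $R$ there exist $n \geq 0$ and $v$ with $w \mathrel{R^n} v$ and $\varphi \in \Sigma \setminus \ell(v)$. Projecting to $W/{\sim}$ gives $[w] \mathrel{R_\mathcal{Q}^n} [v]$, hence $[w] \mathrel{(R^+_\mathcal{Q})^n} [v]$, and $\varphi \in \Sigma \setminus \ell_\mathcal{Q}([v])$, as required. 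The cases for $\has$, $\until$, and $\since$ are entirely analogous, using the corresponding clauses of $\omega$-sensibility of $R$.

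There is no real obstacle here: the slight subtlety is making sure the enlargement from $R_\mathcal{Q}$ to $R^+_\mathcal{Q}$ does not ruin anything, and that is handled by the containment $R_\mathcal{Q} \subseteq R^+_\mathcal{Q}$. In other words, the enlargement can only add more witnessing paths, not remove them, so $\omega$-sensibility is a monotone property with respect to the underlying relation (once the labelling is fixed), and we inherit it immediately from $R$.
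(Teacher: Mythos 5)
Your proof is correct and follows essentially the same route as the paper: first transfer witnesses from $R$ to $R_\mathcal Q$ via the quotient map, then pass from $R_\mathcal Q$ to $R^+_\mathcal Q$ using the monotonicity of the $\om$-sensibility conditions (the paper phrases this monotonicity abstractly, whereas you make the containment $R_\mathcal Q \subseteq R^+_\mathcal Q$ explicit, which is a welcome clarification but not a different argument).
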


\begin{proof}
We first show that $R_\mathcal Q$ is $\om$-sensible. Suppose that $\varphi \until \psi \in \ell_\mathcal Q([w])$. Then by definition, $\varphi \until \psi \in \ell(w)$. Thus as $R$ is $\om$-sensible, there exists $v \in R^n(w)$ such that $\psi \in \ell(v)$ for some $n > 0$. It follows that $[v] \in R_\mathcal Q^n([w])$ and $ \psi \in \ell_\mathcal Q([v])$. Similar reasoning applies when we suppose that $\varphi \since \psi \in \ell_\mathcal Q([w])$, $\nec \varphi \not\in \ell_\mathcal Q([w])$, or $\has \varphi \not\in \ell_\mathcal Q([w])$; thus $R_\mathcal Q$ is $\om$-sensible. It now follows immediately that $R^+_\mathcal Q$ is $\om$-sensible, since the conditions for a relation to be $\om$-sensible are monotone.
\end{proof}

\begin{example}\label{example:quotient}
We return to the example seen in the proof of \Cref{no_finite}. Let $\Sigma$ be the set of all subformulas of $\ps (p \imp \nx p)$. So $\Sigma = \{\ps (p \imp \nx p),p \imp \nx p,\nx p, p\}$. We take as bi-relational model $\mathcal X$ the model $(\mathbb Z,\mathbb Z,{\leq},S, \lb\,\cdot\,\rb)$ seen in the proof of \Cref{no_finite} and depicted again on the left of \Cref{fig:quotient}. Examining $\mathcal X$, we see that there are three possible values that $\ell_\mathcal X (x)$ can take: $\Sigma$, $\{p\}$, or $\varnothing$. These values are depicted in the centre of \Cref{fig:quotient}. There is only one possible value for $L_\mathcal X(x)$: $\{\Sigma, \{p\}, \varnothing\}$, so the quotient quasimodel $\mathcal Q$ is linearly ordered by $\leq_\mathcal Q$. The quasimodel $\mathcal Q$ is depicted on the right of \Cref{fig:quotient} (with the constant $L$-data omitted).
\end{example}

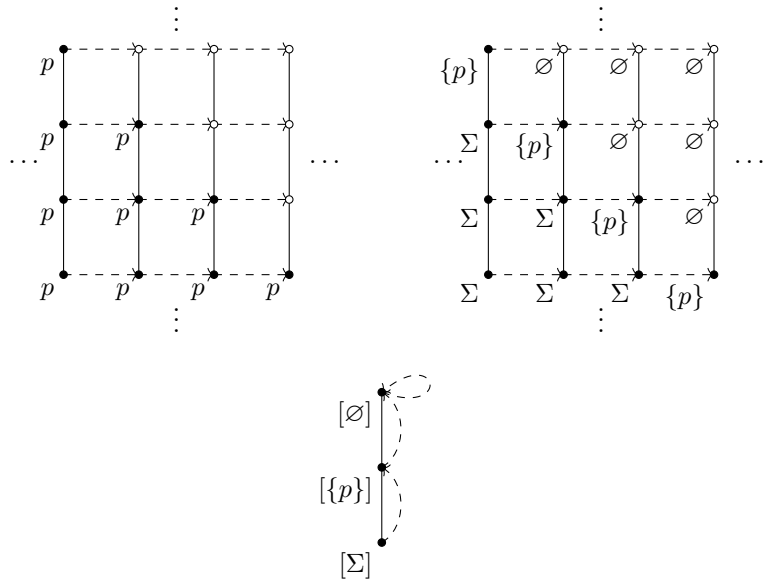
\begin{figure}\centering
\begin{tikzpicture}
\node[left] at (0,-0) {\phantom{$(w,t)$}};
\foreach \i in {3}
{
\draw (\i,0) -- (\i,-3);
\foreach \j in {\i,...,3}
{
\node[below left] at (\i,-\j) {$p$};
\draw[fill] (\i,-\j) circle (.05);
}
}
\foreach \i in {0, 1, 2}
{
\draw (\i,0) -- (\i,-3);
\foreach \j in {0,...,3}
{
\draw[dashed, ->] (\i,-\j) -- (\i+1, -\j);
}
\foreach \j in {\i,...,3}
{
\node[below left] at (\i,-\j) {$p$};
\draw[fill] (\i,-\j) circle (.05);
}
}
\foreach \i in {0,1,2}
\foreach \j in {0,...,\i}
\draw[fill=white] (\i+1,-\j) circle (.05);

\node at (3.5, -1.5) {\dots};
\node at (1.5, -3.5) {\vdots};
\node at (-.5, -1.5) {\dots};
\node at (1.5, .5) {\vdots};
\end{tikzpicture}
\hspace{.5cm}
\begin{tikzpicture}
\node[left] at (0,-0) {\phantom{$(w,t)$}};
\foreach \i in {3}
{
\draw (\i,0) -- (\i,-3);
\foreach \j in {\i,...,3}
{
\node[below left] at (\i-3,-\j+3) {$\{p\}$};
\node[below left] at (\i-2,-\j+2) {$\{p\}$};
\node[below left] at (\i-1,-\j+1) {$\{p\}$};
\node[below left] at (\i,-\j) {$\{p\}$};
\draw[fill] (\i,-\j) circle (.05);
}
}
\foreach \i in {0, 1, 2}
{
\draw (\i,0) -- (\i,-3);
\foreach \j in {0,...,3}
{
\draw[dashed, ->] (\i,-\j) -- (\i+1, -\j);
}
\foreach \j in {\i,...,3}
{
\draw[fill] (\i,-\j) circle (.05);
}
\foreach \j in {\i,...,2}
{
\node[below left] at (\i,-\j-1) {$\Sigma$};
}
}
\foreach \i in {0,1,2}
\foreach \j in {0,...,\i}
{
\node[below left] at (\i+1,-\j) {$\varnothing$};
\draw[fill=white] (\i+1,-\j) circle (.05);
}

\node at (3.5, -1.5) {\dots};
\node at (1.5, -3.5) {\vdots};
\node at (-.5, -1.5) {\dots};
\node at (1.5, .5) {\vdots};
\end{tikzpicture}
\hspace{1cm}
\begin{tikzpicture}
\draw (1.5,-.5) -- (1.5,-2.5);
\foreach \j in {0,...,2}
{
\draw[fill] (1.5,-\j-.5) circle (.05);
}
\node[below left] at (1.5,-.5) {$[\varnothing]$};
\node[below left] at (1.5,-1.5) {$[\{p\}]$};
\node[below left] at (1.5,-2.5) {$[\Sigma]$};

\draw[dashed, ->] (1.5,-0.5)  to[out=-20,in=40,distance=1cm] (1.5,-0.5);
\draw[dashed, ->] (1.5,-1.5)  to[bend right=60] (1.5,-.5);
\draw[dashed, ->] (1.5,-2.5)  to[bend right=60] (1.5,-1.5);

\node at (1.5, -3.5) {\phantom{\vdots}};

\end{tikzpicture}
\caption{Left: The bi-relational model $\mathcal X$; center: values of $\ell_\mathcal X$ when $\Sigma = \{\ps (p \imp \nx p),p \imp \nx p,\nx p, p\}$; right: the quotient quasimodel model $\mathcal Q$ obtained from $\mathcal X$ and $\Sigma$.}
\label{fig:quotient}
\end{figure}

\begin{proposition}\label{falsifies} 
Let $\mathcal {X} = (W,T,{\leq},\allowbreak S, \val {\,\cdot\,})$ be a bi-relational model and $\mathcal {Q} = ((W\times T)/{\sim},{\leq_\mathcal Q},\ell_\mathcal Q, S_{\mathcal Q}^+)$ the quotient $\Sigma$-quasimodel obtained from $\mathcal X$. Let $\varphi \in \Sigma$. Then $\mathcal X$ falsifies $\varphi$ if and only if $\mathcal Q$ falsifies $\varphi$.
\end{proposition}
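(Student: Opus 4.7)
The plan is that this proposition should follow almost immediately from the definitions of the quotient construction, provided we carefully match up the two uses of the word ``falsifies.'' The first observation to record is that when $\mathcal X$ is regarded as a $\Sigma$-quasimodel via $\ell_\mathcal X(x)=\{\psi\in\Sigma \mid x\in\val\psi\}$, the labelled-space notion ``$\varphi\in \Sigma\setminus \ell_\mathcal X(x)$ for some $x$'' is equivalent to the bi-relational notion ``$x\notin\val\varphi$ for some $x$,'' so the two meanings of $\mathcal X$ falsifying $\varphi$ coincide on members of $\Sigma$.

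The key identity is that by construction of the quotient labelling, $\ell_\mathcal Q([x]) = \ell_\mathcal X(x)$ for every $x\in W\times T$; this is well-defined precisely because the equivalence relation $\sim$ was designed to identify $x$ and $y$ only when $\ell_\mathcal X(x)=\ell_\mathcal X(y)$ (and $L_\mathcal X(x)=L_\mathcal X(y)$). Consequently, the set of $\Sigma$-types occurring as labels in $\mathcal Q$ is exactly the set of $\Sigma$-types occurring as labels in $\mathcal X$.

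For the forward direction, assume $\mathcal X$ falsifies $\varphi$, so there is some $x\in W\times T$ with $\varphi\in \Sigma\setminus\ell_\mathcal X(x)$; then $\varphi\in \Sigma\setminus\ell_\mathcal Q([x])$, witnessing that $\mathcal Q$ falsifies $\varphi$. For the converse, if $\mathcal Q$ falsifies $\varphi$, pick a class $X\in (W\times T)/{\sim}$ with $\varphi\in\Sigma\setminus\ell_\mathcal Q(X)$ and any representative $x\in X$; then $\ell_\mathcal X(x)=\ell_\mathcal Q([x])=\ell_\mathcal Q(X)$, so $\varphi\in\Sigma\setminus\ell_\mathcal X(x)$, and $\mathcal X$ falsifies $\varphi$.

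There is no real obstacle here: the substance of this section is the verification (already completed in the preceding lemmas and \Cref{prop:quotient_system}) that $\mathcal Q$ actually is a quasimodel, i.e.\ that the quotient of a bi-relational model retains bi-seriality, sensibility, full confluence, convexity, and $\omega$-sensibility. Once those structural properties are in hand, the preservation of \emph{which} $\Sigma$-formulas are falsified is a bookkeeping step, since the labels themselves are preserved on the nose by $\ell_\mathcal Q$.
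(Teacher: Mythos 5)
Your proof is correct and takes essentially the same approach as the paper's: both arguments reduce to the observation that $\ell_\mathcal Q([x])=\ell_\mathcal X(x)$, so the witnesses to falsifiability on either side correspond exactly. Your version merely spells out the two directions that the paper compresses into a single chain of equivalences.
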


\begin{proof}
We have: $\mathcal X$ falsifies $\varphi$ if and only if $\exists (w, t) \in (W \times T) \setminus \lb\varphi\rb$ if and only if $\exists [(w, t)] \in (W \times T) / {\sim}$ with $\varphi \in \Sigma \setminus \ell_\mathcal Q([(w, t)])$ if and only if $\mathcal Q$ falsifies $\varphi$.
\end{proof}

In order to use \Cref{falsifies} to prove decidability, we need to compute a bound on the size of the quasimodel $\mathcal Q$ in terms of the size of $\Sigma$, when $\Sigma$ is finite.

\begin{lemma}\label{quasi:bound}
Suppose $\Sigma$ is finite, and write $\lgt\Sigma$ for its cardinality. Then the height of $\mathcal Q$ is bounded by $\lgt\Sigma +1$, and the cardinality of the domain $(W \times T) / {\sim}$ of $\mathcal Q$ is bounded by $(\lgt\Sigma +1)\cdot 2^{\lgt \Sigma (\lgt \Sigma +1)+1}$.
\end{lemma}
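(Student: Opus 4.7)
The plan is to exploit the fact that every equivalence class $[w]$ in $\mathcal Q$ is, by construction, completely determined by the pair $(\ell_\mathcal W(w), L_\mathcal W(w))$, and to count such pairs directly. A preliminary observation is that $L_\mathcal W(w)$, which is the image under $\ell_\mathcal W$ of the linear component containing $w$, is a chain in $(2^\Sigma, \subseteq)$: the local linearity of $\mathcal W$ says the component is totally ordered by $\leq_\mathcal W$, and inverse monotonicity of $\ell_\mathcal W$ then forces the $\ell$-values along it to form a $\supseteq$-chain. Since any chain in the Boolean lattice $2^\Sigma$ has length at most $\lgt\Sigma+1$, we get $|L_\mathcal W(w)| \leq \lgt\Sigma+1$.

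For the height bound, suppose $[w_1] <_\mathcal Q [w_2] <_\mathcal Q \dots <_\mathcal Q [w_n]$ is a strict chain in $\mathcal Q$. Unpacking the definition of $\leq_\mathcal Q$, all the $L_\mathcal W(w_i)$ coincide (say with some set $L_0$) and the $\ell_\mathcal W(w_i)$ form a strictly $\supsetneq$-descending sequence in $L_0$. Hence $n \leq |L_0| \leq \lgt\Sigma+1$, which is the claimed height bound.

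For the cardinality bound, it suffices to count the number of pairs $(\ell, L)$ with $L$ a chain in $(2^\Sigma, \subseteq)$ of cardinality $k \leq \lgt\Sigma+1$ and $\ell \in L$. Such a pair can be specified by writing $L = \{A_1 \subsetneq \dots \subsetneq A_k\}$ and choosing the index $j \in \{1,\dots,k\}$ for which $\ell = A_j$: the ordered tuple $(A_1,\dots,A_k)$ comes from at most $(2^{\lgt\Sigma})^k = 2^{\lgt\Sigma \cdot k}$ options (a crude overcount treating it as an unrestricted sequence of subsets of $\Sigma$), and $j$ contributes at most $k$ further choices. Summing,
\[
\sum_{k=1}^{\lgt\Sigma+1} k \cdot 2^{\lgt\Sigma \cdot k} \;\leq\; (\lgt\Sigma+1) \sum_{k=1}^{\lgt\Sigma+1} 2^{\lgt\Sigma \cdot k} \;\leq\; (\lgt\Sigma+1) \cdot 2 \cdot 2^{\lgt\Sigma(\lgt\Sigma+1)} \;=\; (\lgt\Sigma+1) \cdot 2^{\lgt\Sigma(\lgt\Sigma+1)+1},
\]
where the inner geometric sum is bounded by twice its largest term (accounting for the extra factor of $2$ in the stated bound). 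No step involves any genuine subtlety beyond identifying what data determine an equivalence class; the one place to take care is the preliminary observation that $L_\mathcal W(w)$ really is a chain in $2^\Sigma$, which hinges on the local linearity hypothesis built into the definition of a labelled space.
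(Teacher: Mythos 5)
Your proof is correct and follows essentially the same route as the paper's: both arguments identify each point of $\mathcal Q$ with a pair $(\ell,L)$ where $L$ is a chain in $2^\Sigma$ of length at most $\lgt\Sigma+1$, bound the number of chains of size $k$ by the crude overcount $(2^{\lgt\Sigma})^k$, sum the resulting geometric series to get $2^{\lgt\Sigma(\lgt\Sigma+1)+1}$, and account for the choice of $\ell\in L$ with the factor $\lgt\Sigma+1$. Your unpacking of the height bound via the definition of $\leq_\mathcal Q$ is slightly more explicit than the paper's one-line remark, but it is the same argument.
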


\begin{proof}
Each element of the domain of $\mathcal Q$ is a pair $(\ell, L)$ where $L$ is a (nonempty) subset of $\powerset(\Sigma)$ and $\ell \in L$. Since $L$ is linearly ordered by inclusion, it has height at most $\lgt \Sigma + 1$. There are $(2^{\lgt \Sigma})^i$ subsets of $\powerset(\Sigma)$ of size $i$, so there are at most $\sum_{i = 1}^{\lgt \Sigma +1}(2^{\lgt \Sigma})^i$ distinct $L$. The sum is bounded by $2^{\lgt \Sigma (\lgt \Sigma +1)+1}$. The factor of $\lgt\Sigma +1$ corresponds to choice of an $\ell \in L$, for each $L$.
\end{proof}

Thus we have an exponential bound on the size of $\mathcal Q$. Hence any falsifiable formula is falsifiable on an effectively bounded quasimodel, and it follows that $\gtlrel$, which we know coincides with $\gtlreal$, is decidable.

\begin{theorem}\label{theorem:decide}
The logic $\gtlreal$ of $\lanfull$-formulas that are valid on all flows and the logic $\gtlrel$ of $\lanfull$-formulas that are valid on all bi-relational frames are equal and decidable.
\end{theorem}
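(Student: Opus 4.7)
The equality $\gtlreal = \gtlrel$ is already established as Theorem~\ref{thm:equal}, so the real content is decidability. My plan is to combine the round-trip established in Sections~\ref{SecGen} and \ref{Sec:quotient} with the cardinality bound of Lemma~\ref{quasi:bound} to produce a brute-force search algorithm. Specifically, given an input formula $\varphi$, I let $\Sigma$ be the (finite) subformula closure of $\varphi$, compute the bound $N = (\lgt\Sigma+1)\cdot 2^{\lgt\Sigma(\lgt\Sigma+1)+1}$ from Lemma~\ref{quasi:bound}, and enumerate all candidate finite structures $\mathcal{Q} = (Q,{\leq},\ell,R)$ with $\lgt{Q} \leq N$, where $\ell\colon Q \to \powerset(\Sigma)$ and $R \subseteq Q\times Q$. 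For each candidate, one checks whether it is in fact a $\Sigma$-quasimodel and whether it falsifies $\varphi$ (i.e.\ whether there is some $w\in Q$ with $\varphi \in \Sigma \setminus \ell(w)$). The algorithm answers ``valid'' precisely when no such falsifying quasimodel is found.

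Correctness of this procedure relies on the two directions already proven. If $\varphi$ is falsifiable in $\gtlrel$, then there is a bi-relational model $\mathcal{X}$ with $\mathcal{X} \not\models \varphi$; by Proposition~\ref{falsifies}, the quotient $\Sigma$-quasimodel $\mathcal{Q}$ obtained from $\mathcal{X}$ also falsifies $\varphi$, and by Lemma~\ref{quasi:bound} it has size at most $N$, so it will appear in the enumeration. Conversely, if some finite $\Sigma$-quasimodel $\mathcal{Q}$ in the enumeration falsifies $\varphi$, then by Proposition~\ref{second} the limit model $\lm_\varphi$ built from $\mathcal{Q}$ is a genuine bi-relational model falsifying $\varphi$, witnessing non-validity in $\gtlrel$.

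The only point that needs a brief argument is that the predicates ``$\mathcal{Q}$ is a $\Sigma$-quasimodel'' and ``$\mathcal{Q}$ falsifies $\varphi$'' are effectively checkable on a finite structure. The conditions making $\mathcal{Q}$ a labelled system, namely local linearity of $\leq$, inverse monotonicity of $\ell$, witnessing clauses for $\imp$ and $\dimp$, bi-seriality, full confluence, convexity, and sensibility of $R$, are all first-order properties over the finite domain $Q$ with parameters drawn from the finite set $\Sigma$, and so are decidable by direct inspection. The $\om$-sensibility clauses for $\nec,\has,\until,\since$ reduce to reachability in the finite directed graph $(Q,R)$: for each $\nec\varphi \in \Sigma \setminus \ell(w)$, one asks whether some $R$-successor (iterated any finite number of times) $v$ of $w$ satisfies $\varphi \notin \ell(v)$, which is decidable by a standard graph search, and similarly for the other three modalities.

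There is no genuine obstacle left, since Propositions~\ref{second} and \ref{falsifies} together with Lemma~\ref{quasi:bound} have already done the heavy lifting; the remaining task is simply to assemble these ingredients into an algorithm and observe its termination and correctness. The only mildly delicate point is the reduction of the infinitary $\om$-sensibility conditions to finite-graph reachability, but this is immediate once one notes that on a finite $(Q,R)$ the orbit $\{v : w\mathrel{R^n} v \text{ for some } n\geq 0\}$ is computable.
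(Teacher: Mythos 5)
Your proposal is correct and follows essentially the same route as the paper: invoke Theorem~\ref{thm:equal} for the equality, then combine Proposition~\ref{second}, Proposition~\ref{falsifies}, and Lemma~\ref{quasi:bound} to reduce falsifiability to a bounded search over finite $\Sigma$-quasimodels. Your explicit reduction of the $\om$-sensibility conditions to reachability in the finite graph $(Q,R)$ just spells out what the paper dismisses with ``it is clear that this check can be carried out within a computable time bound.''
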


\begin{proof}
By \Cref{thm:equal}, $\gtlreal = \gtlrel$. Since falsifiability is the complement of validity, it suffices to show that it is decidable whether a formula $\varphi$ is falsifiable over the class of all bi-relational frames. Let $\Sigma$ be the set of subformulas of $\varphi$. If $\varphi$ is falsifiable in a $\Sigma$-quasimodel of size at most $(\lgt\Sigma +1)\cdot 2^{\lgt \Sigma (\lgt \Sigma +1)+1}$, then by \Cref{second}, $\varphi$ is falsified in a bi-relational frame. Conversely, if $\varphi$ is falsified in a bi-relational frame, then by \Cref{falsifies} and \Cref{quasi:bound}, $\varphi$ is falsified in a $\Sigma$-quasimodel of size at most $(\lgt\Sigma +1)\cdot 2^{\lgt \Sigma (\lgt \Sigma +1)+1}$.
Hence it suffices to check falsifiability of $\varphi$ on the set of all $\Sigma$-quasimodels of size at most $(\lgt\Sigma +1)\cdot 2^{\lgt \Sigma (\lgt \Sigma +1)+1}$. It is clear that this check can be carried out within a computable time bound; hence the problem is decidable.
\end{proof}

\section{A Hilbert-style deductive calculus}\label{secAx}

Now begins the second half of this article, in which we prove that a certain Hilbert-style deductive calculus is sound and complete for the validities of G\"odel temporal logic (under both real-valued and bi-relational semantics). We start by presenting the calculus. It is obtained by adapting the standard axioms and inference rules of $\ltl$ \cite{temporal}, as well as their order dual and temporal dual versions.

\begin{definition}\label{defLogbasic}
The logic $\gtl$ is the least set of $\mathcal L$-formulas closed under the following axioms and rules.

\begin{enumerate}[wide, labelwidth=!, labelindent=0pt,label=\textsc{ \Roman*},ref=\textsc{\roman*}]
\item\label{ax00Intu} \textbf{All substitution instances of intuitionistic tautologies (see e.g.~\cite{MintsInt})}
\item\label{ax01Taut}\textbf{Axioms and rules of Heyting--Brouwer logic:}
\begin{multicols}3
\begin{enumerate}[label=\textsc{\alph*},ref={\textsc{\roman{enumi}}.\textsc{\alph*}}]
\item \label{axco01} $\varphi \imp (\psi \vee (\varphi \dimp \psi))$


\item\label{axDimpMon} $\dfrac{\varphi \imp \psi }{(\varphi \dimp \theta) \imp (\psi \dimp \theta)}$

\item\label{axDimpDis} $\dfrac{\varphi \imp \psi \vee \gamma}{(\varphi \dimp \psi) \imp \gamma}$

\end{enumerate}
\end{multicols}

\item \textbf{Linearity axioms:}
\begin{multicols}2
\begin{enumerate}[label=\textsc{ \alph*},ref={\textsc{\roman{enumi}}.\textsc{\alph*}}]
\item \label{axGodel} $(\varphi \imp \psi) \vee (\psi \imp \varphi)$ 
\item \label{axcoGodel} $\neg ((\varphi \dimp \psi) \wedge (\psi \dimp \varphi))$ 
\end{enumerate}
\end{multicols}

\item \textbf{Temporal axioms: }
\begin{multicols}{2}
\begin{enumerate}[label=\textsc{ \alph*},ref={\textsc{\roman{enumi}}.\textsc{\alph*}}]
\item\label{ax02Bot} $\neg \nx \bot$
\item\label{ax04NexVee} $\nx ( \varphi \vee \psi ) \imp (\nx \varphi \vee\nx \psi)$
\item\label{ax03NexWedge} $ (\nx \varphi \wedge\nx \psi) \imp \nx ( \varphi \wedge \psi ) $
\item\label{ax05KNext} $\nx( \varphi \imp \psi ) \iiff (\nx\varphi \imp \nx\psi)$
\item\label{ax06KBox} $\nec ( \varphi \imp \psi ) \imp (\nec \varphi \imp \nec \psi)$
\item\label{ax07:K:Dual} $\nec ( \varphi \imp \psi ) \imp (\theta \until \varphi \imp \theta \until \psi)$
\item\label{ax08:K:Dual} $\nec ( \varphi \imp \psi ) \imp (\varphi \until \theta \imp \psi \until \theta)$
\item\label{ax09BoxFix} $\nec \varphi \imp  \varphi\wedge \nx \nec \varphi$

\item\label{ax10DiamFix} $\psi\vee (\varphi \wedge \nx(\varphi \until \psi)) \imp \varphi \until \psi$
\item\label{ax11:ind:1} $\nec ({ \varphi \imp \nx \varphi } )\imp ({ \varphi \imp \nec \varphi })$
\item\label{ax12:ind:2}
$\nec ({ \psi\wedge \nx \varphi \imp \varphi})\imp ({ \psi\until \varphi \imp \varphi } )$
\end{enumerate}
\begin{enumerate}[label=\textsc{ \alph*}$'$,ref={\textsc{\roman{enumi}}.\textsc{\alph*}$'$}]
\item\label{ax02Bot'} $\neg \y \bot$
\item\label{ax04NexVee'} $\y ( \varphi \vee \psi ) \imp (\y \varphi \vee\y \psi)$
\item\label{ax03NexWedge'} $ (\y \varphi \wedge\y \psi) \imp \y ( \varphi \wedge \psi ) $
\item\label{ax05KNext'} $\y( \varphi \imp \psi ) \iiff (\y\varphi \imp \y\psi)$
\item\label{ax06KBox'} $\has ( \varphi \imp \psi ) \imp (\has \varphi \imp \has \psi)$
\item\label{ax07:K:Dual'} $\has ( \varphi \imp \psi ) \imp (\theta\since \varphi \imp \theta\since \psi)$
\item\label{ax08:K:Dual'} $\has ( \varphi \imp \psi ) \imp (\varphi\since \theta \imp \psi \since \theta)$
\item\label{ax09BoxFix'} $\has \varphi \imp  \varphi\wedge \y \has \varphi$

\item\label{ax10DiamFix'} $\psi\vee (\varphi \wedge \y(\varphi \since \psi)) \imp \varphi \since \psi$
\item\label{ax11:ind:1'} $\has ({ \varphi \imp \y \varphi } )\imp ({ \varphi \imp \has \varphi })$
\item\label{ax12:ind:2'} $\has ({ \psi\wedge \y \varphi \imp \varphi})\imp ({ \psi\since \varphi \imp \varphi } )$
\end{enumerate}
\end{multicols}

\item\label{axConnection}{\textbf{Connection axioms}}
\begin{multicols}2
\begin{enumerate}[label=\textsc{\alph*},ref={\textsc{\roman{enumi}}.\textsc{\alph*}}]
\item\label{temporal1}$\varphi \iiff \nx\y\varphi $
\end{enumerate}
\begin{enumerate}[label=\textsc{ \alph*}$'$,ref={\textsc{\roman{enumi}}.\textsc{\alph*}$'$}]
\item\label{temporal2}$\varphi \iiff \y\nx\varphi $
\end{enumerate}
\end{multicols}



\item\label{group4} \textbf{Standard modal rules:}
\begin{multicols}5
\begin{enumerate}[label=\textsc{mp}]
\item\label{ax13MP}  
$\dfrac{\varphi,\ \varphi\imp \psi}\psi$
\end{enumerate}
\begin{enumerate}[label=$\textsc{nec}_\nx$]
\item\label{ax14NecCirc} $\dfrac\varphi {\nx\varphi}$
\end{enumerate}
\begin{enumerate}[$\textsc{nec}_\y$]
\item\label{ax14NecYesterday} $\dfrac\varphi {\y\varphi}$
\end{enumerate}
\begin{enumerate}[$\textsc{nec}_\nec$]
\item\label{ax14NecBox}  $\dfrac\varphi {\nec\varphi}$
\end{enumerate}
\begin{enumerate}[$\textsc{nec}_\has$]
\item\label{ax14NecBox'}  $\dfrac\varphi {\has\varphi}$
\end{enumerate}
\end{multicols}
\end{enumerate}
\end{definition}

Axiom group \ref{ax01Taut} concerns the relationship between $\imp$ and $\dimp$. In particular, Axiom \ref{axco01} is used in C.~Rauszer's axiomatisation of intuitionistic logic with co-implication (called Heyting--Brouwer logic)~\cite{Rauszer74}. The G\"odel--Dummett axiom~\ref{axGodel} and its order dual \ref{axcoGodel} reflect the fact that the connectives $\imp$ and $\dimp$ are implemented on locally linear posets (which conversely is \emph{necessary} for these axioms to be valid).

Axioms \ref{ax02Bot} to \ref{ax12:ind:2} concern the future. Axioms \ref{ax03NexWedge}, \ref{ax05KNext}, and \ref{ax06KBox} are standard modal axioms (viewing $\nx$ as a box-type modality). In particular they hold in any normal modal logic, although of course $\gtl$ is not itself normal by virtue of being strictly sub-classical. Axiom \ref{ax07:K:Dual} is an order dual version of \ref{ax06KBox}; such dual axioms are often needed in intuitionistic modal logic, since $\ps$ and $\nec$ are not typically interdefinable.
The axioms \ref{ax02Bot} and \ref{ax04NexVee} have to do with the passage of time being deterministic in linear temporal logic:  \ref{ax02Bot} characterises seriality and  \ref{ax04NexVee} characterises (partial) functionality, thus together they constrain temporal accessibility to be a total function. 

The co-inductive axiom \ref{ax09BoxFix} states that if something will henceforth be the case, then it is true now and, moreover, in the next moment, it will still henceforth be the case. Axiom \ref{ax11:ind:1} is successor induction, as future moments are indexed by the natural numbers with their usual ordering. Axioms \ref{ax10DiamFix} and \ref{ax12:ind:2} are analogues of \ref{ax09BoxFix} and \ref{ax11:ind:1} respectively. 
Axioms \ref{ax02Bot'} to \ref{ax12:ind:2'} are the past analogues of axioms \ref{ax02Bot} to \ref{ax12:ind:2}. The connection axioms~\ref{temporal1} and~\ref{temporal2} relate past and future tenses by combining the connectives $\nx$ and $\y$.

All rules of group \ref{group4} are standard modal logic deduction rules, and in particular any normal modal logic is closed under these rules.

Most of the axioms are either included in the axiomatisation of intuitionistic $\ltl$ \cite{Boudou2017} or a variant of one of them (e.g.~a contrapositive). 
From this, we easily derive the following.

\begin{proposition}
The above calculus is sound for the class of real-valued models, as well as for the class of bi-relational models.
\end{proposition}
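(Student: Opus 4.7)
The plan is to establish soundness with respect to the bi-relational semantics, after which soundness for the real-valued semantics is automatic by \Cref{thm:equal}. The approach is the familiar one: verify that every axiom is valid on an arbitrary bi-relational frame, and that every inference rule preserves global truth on an arbitrary bi-relational model. Since \ref{ax13MP} and the four necessitation rules are routine unpackings of the definitions (noting, for $\textsc{nec}_\nx$ and $\textsc{nec}_\y$, that $\vec S$ is a bijection on $W \times T$, and for $\textsc{nec}_\nec$ and $\textsc{nec}_\has$, the definitions as intersections over iterates), the bulk of the work lies in the axioms.

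For the propositional axioms of groups \ref{ax00Intu}, \ref{ax01Taut}, and the linearity axioms, I would fix a time $t \in T$ and observe that the map $\varphi \mapsto \{w \in W \mid (w,t) \in \val\varphi\}$ provides a standard Heyting--Brouwer (bi-intuitionistic) valuation on the linear poset $(W,\leq)$, since the semantic clauses for $\wedge,\vee,\imp,\dimp$ do not touch $t$. Hence the intuitionistic tautologies and the Rauszer axioms for co-implication hold automatically, while the G\"odel--Dummett axiom \ref{axGodel} and its dual \ref{axcoGodel} follow from linearity of $\leq$ together with downward closure of valuations: if two candidate worlds witness failure, one is below the other, and downward closure forces a contradiction.

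For the future temporal axioms \ref{ax02Bot}--\ref{ax08:K:Dual}, the key observation is that because $\vec S$ is a \emph{bijection}, the operations $\vec S^{-1}(\cdot)$ and $\vec S(\cdot)$ both commute with unions, intersections and complementation; this instantly delivers all the $\nx$-distribution axioms and the ``$\nec K$''-style axioms \ref{ax06KBox}--\ref{ax08:K:Dual}. The fixed-point axioms \ref{ax09BoxFix} and \ref{ax10DiamFix} are immediate from the unfolding of the semantic definitions of $\nec$ and $\until$ as intersections/unions starting at $n=0$. The past-side axioms \ref{ax02Bot'}--\ref{ax10DiamFix'} are verified identically, swapping $\vec S$ for $\vec S^{-1}$, and the connection axioms \ref{temporal1} and \ref{temporal2} hold because $\vec S$ and $\vec S^{-1}$ are mutual inverses.

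The main obstacle, and the step that deserves genuine care, will be the induction axioms \ref{ax11:ind:1}, \ref{ax12:ind:2} and their past duals. For \ref{ax11:ind:1}, I would fix $(w,t)$ satisfying $\nec(\varphi \imp \nx\varphi)$ and a world $v \leq w$ with $(v,t) \in \val\varphi$; using $v \leq w$ and the semantics of $\imp$, the hypothesis transfers to every $(v, S^n(t))$, giving $(v, S^n(t)) \in \val\varphi \Rightarrow (v, S^{n+1}(t)) \in \val\varphi$, so a straightforward induction on $n$ places $v$ at every future iterate of $t$ inside $\val\varphi$, yielding $(v,t) \in \val{\nec\varphi}$. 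For \ref{ax12:ind:2}, I would expand $\val{\psi \until \varphi}$ as a union indexed by $n$ and induct on $n$, using the assumption $\nec(\psi \wedge \nx\varphi \imp \varphi)$ at each step to collapse a length-$(n{+}1)$ witness to a length-$n$ witness; here the only subtle point is ensuring that the premise, being inside $\nec$, is available at every intermediate world $v \leq w$ and every future time, which is exactly what the universal quantification in the semantics of $\nec$ and $\imp$ provides. The past-directed duals \ref{ax11:ind:1'} and \ref{ax12:ind:2'} are verified identically using $\vec S^{-1}$. Once all axioms are validated, real-valued soundness is transferred for free via \Cref{thm:equal}.
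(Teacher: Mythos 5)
Your proposal is correct and follows the same standard route as the paper: verify that each axiom is valid on an arbitrary bi-relational frame and that each rule preserves global truth, with the real-valued case then transferred via \Cref{thm:equal}. The only difference is one of economy — the paper delegates everything to the literature except Axiom \ref{axcoGodel} (which it checks exactly as you do, by linearity of $\leq$ plus downward closure) and a remark that the rules \ref{axDimpMon} and \ref{axDimpDis} preserve validity, whereas you sketch the full verification, including the induction axioms that the paper does not spell out.
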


\begin{proof}
The rules \ref{axDimpMon} and \ref{axDimpDis} are readily seen to preserve validity.
We check Axiom \ref{axcoGodel}; 
all other rules or axioms have been shown to be sound for intuitionistic or bi-relational models in the literature (see e.g.~\cite{Rauszer74,Balbiani2017}). 

\ref{axcoGodel}: Let us assume towards a contradiction that $\neg (( \varphi \dimp \psi) \wedge (\psi \dimp \varphi))$ is not valid with respect to bi-relational models, so we can find $\mathcal M=(W,T,{\leq},S,\allowbreak\val{\,\cdot\,})$ and $(w,t)\in\mathcal{M}$ such that $(w,t)\not \in \val{\neg (( \varphi \dimp \psi) \wedge (\psi \dimp \varphi))}$.
Therefore there exists $(v,t) \le (w,t)$ such that $(v,t) \in \val{( \varphi \dimp \psi) \wedge (\psi \dimp \varphi)}$.
Then $(v,t) \in \val{\varphi \dimp \psi}$ and  $(v,t) \in \val{\varphi \dimp \psi}$.
Hence there exist $(v',t)\ge (v,t)$ and $(v'',t)\ge (v,t)$ such that $(v',t) \in \val{\varphi}\setminus \val{\psi}$ and $(v'',t) \in \val{\psi}\setminus\val{\varphi}$. 
Since $(W,\le)$ is a linear order, either $(v',t) \le (v'',t)$ or $(v',t) \ge (v'',t)$.
In the former case we get that $(v',t) \in \val{\psi}$ and in the latter case we get that $(v'',t) \in \val{\varphi}$; in any case we reach a contradiction.\qedhere

\end{proof}

Our main objective is to show that our calculus is indeed complete; proving this will take up the remainder of this article.

As we show next, we can also derive the converses of some of these axioms. Below, for a set of formulas $\Gamma$ we define $\nx \Gamma = \{\nx\varphi \mid \varphi \in \Gamma\}$, and empty conjunctions and disjunctions are defined by $\bigwedge\varnothing =\top$ and $\bigvee \varnothing = \bot$.

When reasoning about $\gtl$, it is useful to note that by \ref{ax00Intu} and \ref{ax13MP} (modus ponens), the following weak form of the deduction theorem holds: for $\varphi, \psi \in \lanfull$, we have $\varphi \imp \psi \in \gtl$ if (and only if) we can deduce $\psi$ from $\varphi$ in the system with $\gtl$ as its axioms and modus ponens as its only deduction rule. We will often implicitly use this to simplify reasoning.

\begin{lemma}\label{lemmReverse}
	Let $\varphi \in \lanfull$ and $\Gamma\subseteq \lanfull$ be finite. Then the following formulas belong to $\gtl$.
	\begin{multicols}2
	\begin{enumerate}[label=(\alph*)]
		
		\item\label{itVee} $\nx (\bigvee \Gamma) \iiff \bigvee \nx \Gamma$
		\item\label{itWedge} $\nx (\bigwedge \Gamma) \iiff \bigwedge \nx \Gamma$
		\item\label{itUtoF} $\varphi\until\psi\imp\ps \psi$
		\item\label{itReverseDiam} $\varphi\until\psi \imp \psi\vee (\varphi \wedge \nx(\varphi \until \psi))$
		\item\label{itReverseBox} $\varphi \wedge \nx \nec \varphi \imp \nec \varphi$
		\item \label{itNotRef} $ (\varphi\dimp\varphi) \imp  \psi$
		\item\label{itVeePast} $\y (\bigvee \Gamma) \iiff \bigvee \y \Gamma$
		\item\label{itWedgePast} $\y (\bigwedge \Gamma) \iiff \bigwedge \y \Gamma$
		\item\label{itStoP} $\varphi\since\psi\imp\past \psi$
		\item\label{itReversePast} $\varphi\since\psi \imp \psi\vee (\varphi \wedge \y(\varphi \since \psi))$
		\item\label{itReverseHas} $\varphi \wedge \y \has \varphi \imp \has \varphi$
		\item\label{itReverseDimp} $(\varphi \dimp \psi) \imp \varphi$
	\end{enumerate}
\end{multicols}	
\end{lemma}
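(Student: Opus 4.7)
The plan is to handle the twelve items in four groups, from the most routine to the technical core, leveraging the axioms and rules of \Cref{defLogbasic} together with the weak deduction theorem noted above.

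First, items (a), (b), (g), (h) are the distribution laws for the $\nx/\y$ tenses over finite $\vee/\wedge$. These reduce by straightforward induction on $\lgt{\Gamma}$ to the binary cases, with base case $\Gamma=\varnothing$ handled by \ref{ax02Bot} / \ref{ax02Bot'} together with ex falso. For the binary case, the forward directions of (a) and (b) are exactly \ref{ax04NexVee} and \ref{ax03NexWedge}; the reverse directions would be obtained by starting from propositional tautologies such as $\varphi \imp \varphi \vee \psi$, applying $\textsc{nec}_\nx$, and then using the K-axiom \ref{ax05KNext} to get $\nx\varphi \imp \nx(\varphi\vee\psi)$, after which $\vee$-elimination closes the case. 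The yesterday-versions (g) and (h) proceed identically using the primed axioms and $\textsc{nec}_\y$.

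Next, items (c) and (i): from the tautology $\varphi \imp \top$ apply $\textsc{nec}_\nec$ (respectively $\textsc{nec}_\has$) to derive $\nec(\varphi \imp \top)$, and then invoke \ref{ax08:K:Dual} (respectively \ref{ax08:K:Dual'}) to obtain $\varphi \until \psi \imp \top \until \psi$, which by definition is $\ps\psi$; the case of $\since$ is dual.

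The technical heart of the lemma is items (d), (j), (e), (k), which are the converses of the fixpoint-style axioms and therefore require the induction axioms \ref{ax11:ind:1} and \ref{ax12:ind:2}. For (d), I would set $\chi \coloneqq \psi \vee (\varphi \wedge \nx(\varphi \until \psi))$. By \ref{ax10DiamFix}, $\chi \imp \varphi \until \psi$, so necessitating and applying \ref{ax05KNext} gives $\nx\chi \imp \nx(\varphi\until\psi)$; hence $\varphi \wedge \nx\chi \imp \varphi \wedge \nx(\varphi\until\psi) \imp \chi$, the last step because the right-hand side is a disjunct of $\chi$. Applying $\textsc{nec}_\nec$ and then \ref{ax12:ind:2} yields $\varphi \until \chi \imp \chi$, and this combines with $\varphi \until \psi \imp \varphi \until \chi$ (from the trivial $\psi \imp \chi$ via \ref{ax07:K:Dual}, necessitation, and modus ponens) to conclude. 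For (e), set $\chi \coloneqq \varphi \wedge \nx \nec \varphi$; applying $\nx$ to \ref{ax09BoxFix} shows that $\nx\nec\varphi \imp \nx\varphi \wedge \nx\nx\nec\varphi = \nx\chi$, so in particular $\chi \imp \nx\chi$. Then $\textsc{nec}_\nec$ and \ref{ax11:ind:1} give $\chi \imp \nec\chi$, and combining with $\nec\chi \imp \nec\varphi$ (from $\chi \imp \varphi$ via \ref{ax06KBox}) closes the argument. Items (j) and (k) are proved dually, swapping $\nx$ for $\y$ and using the primed axioms throughout.

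Finally, item (f) follows by instantiating \ref{axcoGodel} with both formulas equal to $\varphi$ to obtain $\neg((\varphi\dimp\varphi)\wedge(\varphi\dimp\varphi))$; after contracting the conjunction this is $(\varphi\dimp\varphi)\imp\bot$, and ex falso in intuitionistic logic delivers the conclusion for any $\psi$. Item (l) is the most immediate of all: the propositional tautology $\varphi \imp \psi \vee \varphi$ fed into rule \ref{axDimpDis} with $\gamma\coloneqq\varphi$ directly yields $(\varphi\dimp\psi)\imp\varphi$. The principal obstacle throughout is the choice of the auxiliary formula $\chi$ in (d) and (e) so that both the antecedent of the relevant induction axiom and the final bridge back to the target formula reduce to one or two applications of the K-axioms and the fixpoint axioms; once these $\chi$ are in hand the rest is routine propagation of monotonicity.
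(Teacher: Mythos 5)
Your treatment of items \ref{itVee}--\ref{itReverseHas} and \ref{itReverseDimp} matches the paper's proof in all essentials: the same inductive reduction of \ref{itVee}, \ref{itWedge}, \ref{itVeePast}, \ref{itWedgePast} to the binary/empty cases via \ref{ax04NexVee}, \ref{ax03NexWedge}, \ref{ax02Bot}, $\textsc{nec}_\nx$ and \ref{ax05KNext}; the same use of \ref{ax08:K:Dual} with $\varphi\imp\top$ for \ref{itUtoF}; the same auxiliary formulas $\chi$ for \ref{itReverseDiam} and \ref{itReverseBox} fed into the induction axioms \ref{ax12:ind:2} and \ref{ax11:ind:1}, with \ref{ax07:K:Dual} and \ref{ax06KBox} supplying the final bridges; and the same appeal to \ref{axDimpDis} on $\varphi\imp\psi\vee\varphi$ for \ref{itReverseDimp}. (A cosmetic slip: \ref{ax03NexWedge} is $(\nx\varphi\wedge\nx\psi)\imp\nx(\varphi\wedge\psi)$, i.e.\ the \emph{reverse} direction of \ref{itWedge} in your phrasing; the forward direction comes from $\textsc{nec}_\nx$ and \ref{ax05KNext} applied to $\varphi\wedge\psi\imp\varphi$. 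This does not affect the argument.)

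The one genuine gap is item \ref{itNotRef}. You derive $(\varphi\dimp\varphi)\imp\bot$ from Axiom \ref{axcoGodel} and then invoke ``ex falso'' to conclude $(\varphi\dimp\varphi)\imp\psi$. But in this system $\bot$ is not a primitive constant: it abbreviates $p\dimp p$ for a fixed $p$, so $\bot\imp\psi$ is \emph{not} a substitution instance of an intuitionistic tautology in the $\{\wedge,\vee,\imp\}$ fragment --- it is precisely the instance of item \ref{itNotRef} with $\varphi:=p$. Your argument is therefore circular as written. The repair is the paper's one-line derivation: apply Rule \ref{axDimpDis} to the intuitionistic tautology $\varphi\imp\varphi\vee\psi$ to obtain $(\varphi\dimp\varphi)\imp\psi$ directly, with no detour through \ref{axcoGodel} or ex falso. (The same rule application, with the premise read as $\varphi\imp\psi\vee\varphi$, is exactly what you correctly use for item \ref{itReverseDimp}.)
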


\begin{proof}
For claims \ref{itVee} and \ref{itWedge}, one direction is obtained from repeated use of Axioms \ref{ax04NexVee} or \ref{ax03NexWedge} and the other is proven using \ref{ax14NecCirc} and \ref{ax05KNext}; note that the first claim requires \ref{ax02Bot} to treat the case when $\Gamma = \varnothing$. Details are left to the reader.
For \ref{itUtoF}, we recall that $\ps\psi$ is shorthand for $\top\until\psi$.
Then one uses \ref{ax08:K:Dual} and the tautology $ \varphi\to \top$ to obtain $\varphi\until \psi\to \top\until\psi$.

For \ref{itReverseDiam}, let $\theta =  \psi \vee (\varphi\wedge \nx(\psi\until\varphi))$.
We claim that $\varphi\wedge\nx \theta\imp\theta$ is derivable, for which it suffices to show $ \nx \theta\imp \nx(\varphi\until\psi)$.
Using \ref{itVee} and \ref{itWedge}, this amounts to showing
\[ \nx\psi \vee ( \nx \varphi \wedge \nx \nx(\varphi\until\psi))  \imp \nx(\varphi\until\psi),\]
which would follow from both $   \nx\psi  \imp \nx (\varphi\until\psi)$ and $   \nx \varphi \wedge \nx \nx (\varphi\until\psi)  \imp \nx (\varphi\until\psi)$ via propositional reasoning.
But these follow from \ref{ax10DiamFix}, which yields $\psi\imp (\varphi\until\psi)$ and $\varphi\wedge \nx (\varphi\until\psi)  \imp  (\varphi\until\psi)$, along with routine reasoning involving $\nx$.

From $\varphi\wedge\nx \theta\imp\theta$, we may use necessitation and \ref{ax12:ind:2} to obtain $\varphi \until \theta \imp \theta$.
Now we note that $\psi\imp\theta$ is a tautology; hence using \ref{ax07:K:Dual}, we obtain $\varphi \until \psi \imp \theta$, i.e.~$\varphi \until \psi \imp \psi\vee(\varphi\wedge \nx (\varphi\until \psi))$, as needed.

Claim \ref{itReverseBox} is similar. Note that $\nec \varphi \imp \varphi$ holds by \ref{ax09BoxFix}, so we have $\nx\nec \varphi \imp \nx \varphi$ by \ref{ax14NecCirc}, \ref{ax05KNext}, and modus ponens as before.
Similarly, $\nec \varphi \imp \nx \nec \varphi$ holds by \ref{ax09BoxFix}, so \mbox{$\nx \nec \varphi \imp \nx \nx\nec \varphi$} holds by \ref{ax14NecCirc}, \ref{ax05KNext}, and modus ponens. 
Hence we have 
\[\nx \nec \varphi \imp \nx \varphi \wedge \nx \nx \nec \varphi.\]
Using \ref{ax03NexWedge} and some propositional reasoning we obtain
\[\varphi \wedge \nx \nec \varphi \imp \nx (\varphi \wedge \nx \nec \varphi ).\]
By $\nec$-necessitation and \ref{ax11:ind:1}, we have \[\varphi \wedge \nx \nec \varphi \imp \nec (\varphi \wedge \nx \nec \varphi ).\] Since
$ \nec (\varphi \wedge \nx \nec \varphi)\imp \nec\varphi $ can be proven using \ref{ax06KBox}, we obtain $ \varphi \wedge \nx \nec \varphi \imp \nec\varphi$, as needed.

Claim~\ref{itNotRef} can be derived from the intuitionistic tautology $\varphi\imp\psi\vee\varphi$ and Rule~\ref{axDimpDis}.

The proofs of claims~\ref{itVeePast}--\ref{itReverseDimp} follow similar lines of reasoning as for their future versions.
\end{proof}


\section{The canonical system and canonical quasimodel}\label{secCan}

In this section we first construct a standard canonical structure for $\gtl$.
In the presence of $\until$, $\since$, $\nec$, and $\has$, this canonical structure is only a labelled system, rather than a proper bi-relational model.
Nevertheless, it will be a useful ingredient in our completeness proof, for we can apply the quotient construction of \Cref{Sec:quotient} to it and show that the resulting labelled system is in fact a quasimodel.

Since we are working over an intermediate logic, the role of maximal consistent sets will be played by complete types, as defined below (see \Cref{definition:type} for the definition of {\em type}).
Below, recall that by convention, $\bigwedge\varnothing = \top$ and $\bigvee\varnothing=\bot$.

\begin{definition}\label{def:complete}
Given two sets of formulas $\Gamma,\Delta \subseteq\lanfull$, we say that $\Delta$ is a \define{consequence} of $\Gamma$, denoted by $\Gamma \vdash \Delta$, if there exist finite (possibly empty) $\Gamma'\subseteq \Gamma$ and $\Delta' \subseteq \Delta$ such that $\bigwedge \Gamma' \imp \bigvee \Delta' \in \gtl$.
We say that the pair $(\Gamma, \Delta)$ is \define{consistent} if $\Gamma \not\vdash \Delta$. 
We will call a $\Sigma$-type $\Phi$ consistent if $(\Phi, \Sigma \setminus \Phi)$ is consistent.
\end{definition}

Note that we are using the standard interpretation of $\Gamma \vdash \Delta$ in Gentzen-style calculi. When working within a turnstile, we will follow the usual proof-theoretic conventions of writing $\Gamma,\Delta$ instead of $\Gamma \cup \Delta$, and writing $\varphi$ instead of $\{\varphi\}$.

We will often want to think of a consistent $\Sigma$-type $\Phi$ as representing the pair $(\Phi, \Sigma \setminus \Phi)$. Note that for such a $\Phi$, the set $\Sigma \setminus \Phi$ cannot contain any formula in $\gtl$ since $\varphi \in \gtl$ implies $\varnothing \vdash \varphi$.

\begin{lemma}\label{lemmcompleteIsType}
If $\Sigma \subseteq \lanfull$ is closed under subformulas, $\Phi \subseteq \Sigma$, and $(\Phi, \Sigma \setminus \Phi)$ is consistent, then $\Phi$ is a (consistent) $\Sigma$-type.
\end{lemma}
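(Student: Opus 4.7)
The plan is to verify, one by one, the four defining clauses for a $\Sigma$-type given in \Cref{definition:type}, and in each case argue by contrapositive: if the clause failed, then we would have a derivation $\bigwedge\Gamma' \imp \bigvee\Delta' \in \gtl$ with $\Gamma' \subseteq \Phi$ and $\Delta' \subseteq \Sigma \setminus \Phi$, contradicting consistency. Each clause will reduce to either a basic intuitionistic tautology (available via axiom group \ref{ax00Intu}) or one of the co-implication axioms of the calculus.

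For clause \ref{type1}, I would use the tautologies $\varphi\wedge\psi \imp \varphi$, $\varphi\wedge\psi \imp \psi$, and $\varphi \wedge \psi \imp \varphi\wedge\psi$. So e.g.~if $\varphi\wedge\psi\in\Phi$ but $\varphi\notin\Phi$, then $\varphi\in \Sigma\setminus\Phi$ and $\bigwedge\{\varphi\wedge\psi\}\imp \bigvee\{\varphi\}$ witnesses $\Phi\vdash \Sigma\setminus\Phi$; conversely if $\varphi,\psi\in\Phi$ but $\varphi\wedge\psi\notin \Phi$, the tautology $\varphi\wedge\psi \imp \varphi\wedge\psi$ witnesses the inconsistency. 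Clause \ref{type2} is symmetric, using $\varphi\imp \varphi\vee\psi$, $\psi\imp\varphi\vee\psi$, and $\varphi\vee\psi\imp\varphi\vee\psi$. For clause \ref{type3}(a) I use $(\varphi\imp\psi)\wedge\varphi \imp \psi$; for clause \ref{type3}(b), the tautology $\psi \imp (\varphi\imp\psi)$.

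The only clauses that are not simply intuitionistic are those governing $\dimp$. For clause \ref{type4}(a), I invoke \Cref{lemmReverse}\ref{itReverseDimp}, namely $(\varphi\dimp\psi) \imp \varphi$; so if $\varphi\dimp\psi \in \Phi$ while $\varphi\notin\Phi$, this formula belongs to $\gtl$ and supplies the required derivation $\Phi\vdash\Sigma\setminus\Phi$. For clause \ref{type4}(b), the key observation is that axiom \ref{axco01}, $\varphi \imp (\psi \vee (\varphi\dimp\psi))$, lies in $\gtl$ by construction. Hence, assuming $\varphi\in\Phi$ while $\psi\notin \Phi$ and $\varphi\dimp\psi\notin\Phi$, both $\psi$ and $\varphi\dimp\psi$ lie in $\Sigma\setminus\Phi$ (here we also need that $\varphi\dimp\psi$ is indeed in $\Sigma$, which is our hypothesis), so \ref{axco01} takes the form $\bigwedge\{\varphi\} \imp \bigvee\{\psi,\varphi\dimp\psi\}$, again contradicting consistency.

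Finally, consistency of $\Phi$ as a type is immediate from the hypothesis, by definition. I do not expect any real obstacle: the statement is essentially a bookkeeping lemma that translates the definition of consistency directly into the type clauses, and the only non-trivial ingredient is having already derived the two co-implication principles (Axiom \ref{axco01} and \Cref{lemmReverse}\ref{itReverseDimp}) in the calculus. The main thing to be careful about is to always check that the formulas appearing in the witnessing derivation actually lie in the prescribed sets $\Phi$ or $\Sigma\setminus\Phi$, which uses subformula closure of $\Sigma$ in clauses \ref{type3} and \ref{type4}.
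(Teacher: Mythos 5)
Your proposal is correct and matches the paper's proof essentially step for step: each clause of \Cref{definition:type} is verified by contraposition against consistency, using the same intuitionistic tautologies for conditions~\ref{type1}--\ref{type3}, \Cref{lemmReverse}\ref{itReverseDimp} for condition~\ref{type4a}, and Axiom~\ref{axco01} for condition~\ref{type4}(b). No gaps.
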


\begin{proof}
We check the conditions of \Cref{definition:type}.

\begin{itemize}
\item Condition~\ref{type1}: If $\varphi \wedge \psi \in \Phi$ then since $\varphi \wedge \psi \imp \varphi$ is an intuitionistic tautology, we cannot have $\varphi \in \Sigma \setminus \Phi$; thus $\varphi \in \Phi$. Similarly, $\psi \in \Phi$. The converse is even more trivial, using the tautology $\varphi \wedge \psi \imp \varphi \wedge \psi$.

\item Condition~\ref{type2}: If $\varphi \vee \psi \in \Phi$ then since $\varphi \vee \psi \imp \varphi \vee \psi$ is an intuitionistic tautology, we cannot have both $\varphi \in \Sigma \setminus \Phi$ and $\varphi \in \Sigma \setminus \Phi$; thus either $\varphi \in \Phi$ or $\varphi \in \Phi$.  The converse uses the tautologies $\varphi \imp \varphi \vee \psi$ and $\psi \imp \varphi \vee \psi$.

	\item Condition~\ref{type3}: (a) If $\varphi \imp \psi \in \Phi$ and $\varphi  \in \Phi$, then since $(\varphi \imp \psi) \wedge \varphi \imp \psi$ is an intuitionistic tautology, we cannot have $\psi \in \Sigma \setminus \Phi$; thus $\psi \in \Phi$.
	
	(b) If $\psi  \in \Phi$ then since $\psi \imp (\varphi  \imp \psi)$ is an intuitionistic tautology, we cannot have $\varphi \imp \psi \in \Sigma \setminus \Phi$; thus $\varphi \imp \psi \in   \Phi$.

	\item Condition~\ref{type4}: (a) By \Cref{lemmReverse}\ref{itReverseDimp}, we know that $\varphi \dimp \psi  \vdash \varphi$. Hence if $\varphi \dimp \psi \in \Phi$ we cannot have $\varphi \in \Sigma \setminus \Phi$; thus $\varphi \in  \Phi$.
	
	 (b)  By Axiom~\ref{axco01} we know that $\varphi \vdash \psi \vee (\varphi \dimp \psi)$. Hence if $\varphi \in \Phi$ and $\psi \in \Sigma \setminus \Phi$, we cannot have $\varphi \dimp \psi \in \Sigma \setminus \Phi$; thus $\varphi \dimp \psi \in \Phi$.\qedhere
\end{itemize}
\end{proof}
 As with maximal consistent sets, consistent types satisfy a Lindenbaum property.
Below, if $(\Gamma,\Delta)$ and $(\Gamma',\Delta')$ are pairs of sets of formulas, we say that $(\Gamma',\Delta')$ \textbf{extends} $(\Gamma,\Delta)$ if $\Gamma\subseteq\Gamma'$ and $\Delta\subseteq\Delta'$. A consistent $\Sigma$-type $\Phi$ extends $(\Gamma, \Delta)$ if $(\Phi, \Sigma \setminus \Phi)$ extends $(\Gamma, \Delta)$.

\begin{lemma}[Lindenbaum lemma]\label{LemmLind}
Let $\Sigma \subseteq \lanfull$ be closed under subformulas, and $\Gamma,\Delta \subseteq \Sigma$. If $\Gamma \not\vdash \Delta$, then there exists a consistent $\Sigma$-type extending $(\Gamma,\Delta)$. 
\end{lemma}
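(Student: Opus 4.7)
The plan is to run the standard Lindenbaum-style construction, iteratively deciding each formula of $\Sigma$ to either belong to $\Phi$ or to its complement, while maintaining consistency at each stage. Since $\mathbb P$ is countable, $\lanfull$ and hence $\Sigma$ are countable, so we may enumerate $\Sigma = \{\varphi_0, \varphi_1, \dots\}$. Set $(\Gamma_0, \Delta_0) = (\Gamma, \Delta)$, which is consistent by hypothesis, and at each stage define $(\Gamma_{n+1}, \Delta_{n+1})$ by adding $\varphi_n$ to $\Gamma_n$ if the resulting pair is still consistent, and to $\Delta_n$ otherwise. Finally take $\Phi = \bigcup_n \Gamma_n$.

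The key step is the usual extension lemma: if $(\Gamma', \Delta')$ is consistent and $\varphi \in \lanfull$, then at least one of $(\Gamma' \cup \{\varphi\}, \Delta')$, $(\Gamma', \Delta' \cup \{\varphi\})$ is consistent. Suppose both were inconsistent. Then there are finite $\Gamma_1, \Gamma_2 \subseteq \Gamma'$ and $\Delta_1, \Delta_2 \subseteq \Delta'$ with $\bigwedge \Gamma_1 \wedge \varphi \imp \bigvee \Delta_1$ and $\bigwedge \Gamma_2 \imp \bigvee \Delta_2 \vee \varphi$ both in $\gtl$. By intuitionistic propositional reasoning (for which one uses only Axiom \ref{ax00Intu} and modus ponens \ref{ax13MP}), one combines these into $\bigwedge(\Gamma_1 \cup \Gamma_2) \imp \bigvee(\Delta_1 \cup \Delta_2)$, contradicting consistency of $(\Gamma', \Delta')$. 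This is the only nontrivial calculation and it is entirely at the propositional intuitionistic level.

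Given the extension lemma, each $(\Gamma_n, \Delta_n)$ is consistent by induction, and by construction $\Gamma_n \cup \Delta_n \supseteq \{\varphi_0, \dots, \varphi_{n-1}\}$, so every formula in $\Sigma$ is decided at some stage; in particular $\Sigma \setminus \Phi = \bigcup_n \Delta_n$. To see that the limit pair $(\Phi, \Sigma \setminus \Phi)$ is consistent, any witness to inconsistency would involve only finitely many formulas, all of which are already present in some $(\Gamma_n, \Delta_n)$, contradicting consistency of that stage. Finally, $\Gamma \subseteq \Phi$ and $\Delta \cap \Phi = \varnothing$ by construction, and by \Cref{lemmcompleteIsType} the consistent subset $\Phi \subseteq \Sigma$ is automatically a $\Sigma$-type.

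I expect no serious obstacle: the construction and the extension lemma are standard, and the only place the sub-classical nature of $\gtl$ enters is in keeping the proof of the extension lemma purely intuitionistic (so that we argue about a pair $(\Gamma', \Delta')$ rather than a single maximal consistent set). The previously established \Cref{lemmcompleteIsType} does the work of turning a consistent subset of $\Sigma$ into a bona fide $\Sigma$-type, so nothing further is required here.
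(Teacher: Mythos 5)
Your proposal is correct and follows essentially the same route as the paper: iteratively decide each formula of the countable set $\Sigma$, using the (intuitionistically derivable) cut rule to show that at least one of the two extensions stays consistent, and then invoke \Cref{lemmcompleteIsType} to conclude that the resulting consistent subset of $\Sigma$ is a $\Sigma$-type. The paper only sketches this, whereas you spell out the cut step explicitly; the disjunction-elimination argument you give for combining $\bigwedge \Gamma_1 \wedge \varphi \imp \bigvee \Delta_1$ and $\bigwedge \Gamma_2 \imp \bigvee \Delta_2 \vee \varphi$ is indeed intuitionistically valid, so nothing is missing.
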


\proof
The proof is standard, but we provide a sketch.
Let $\varphi \in \Sigma$. Note that either $\Gamma ,\varphi  \not\vdash \Delta $ or $\Gamma  \not \vdash \Delta,\varphi$, for otherwise by an application of the cut rule (which is intuitionistically derivable) we would have $\Gamma \vdash \Delta$. Thus we can add $\varphi$ to $\Gamma$ or to $\Delta$ and remain consistent. By repeating this process for each of the countable number of formulas of $\Sigma$ (or using Zorn's lemma) we can find a consistent $(\Gamma', \Delta')$ extending $(\Gamma, \Delta)$ such that $\Gamma' \cup \Delta' = \Sigma$. Since $(\Gamma', \Delta')$ is consistent,  $\Delta'$ is also disjoint from $\Gamma'$; hence $\Delta' = \Sigma \setminus \Gamma'$. Thus by \Cref{lemmcompleteIsType}, $\Gamma'$ is a consistent $\Sigma$-type.
\endproof

Before defining the canonical system, recall that for a set of formulas $\Gamma$, we have $\nx \Gamma \eqdef \lbrace \nx \varphi \mid   \varphi \in \Gamma\rbrace$.
We also define
\[\circop\Gamma \eqdef \lbrace  \varphi \mid \nx \varphi \in \Gamma\rbrace.\]

Given a set $A$, let $\mathbb I_A$ denote the identity function on $A$.
The canonical system $\CMod$ is defined as the labelled structure
\[
\CMod = (\ptype{},{\peq_\CIcon },\ell_\CIcon,S_\CIcon  ),
\]
where \begin{itemize}
\item
$\ptype{}$ is the set of consistent $\lanfull$-types,
\item
$\Phi \peq_\CIcon \Psi$ if $\Phi \supseteq \Psi$,
 \item
 $\ell_\CIcon(\Phi)=\Phi$,
\item
 $S_\CIcon(\Phi) = \circop \Phi$.
 \end{itemize}
We will usually omit writing $\ell_\CIcon $, as it has no effect on its argument. We will verify shortly that $\circop \Phi$ is indeed a consistent $\lanfull$-type.

Next we show that $\CMod$ is an $\lanfull$-labelled system.
We begin by showing that it is based on a labelled space.
\begin{lemma}
	\label{lemm:normality} $(\ptype{},\peq_\CIcon,\ell_\CIcon)$ is an $\lanfull$-labelled space.
\end{lemma}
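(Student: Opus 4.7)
The plan is to verify each clause of \Cref{frame} instantiated with $\Sigma = \lanfull$, $W = \ptype{}$, $\leq = \peq_\CIcon$, and $\ell = \ell_\CIcon$. Four things must be checked: (i) $\ell_\CIcon(\Phi) \in \type{\lanfull}$ for each $\Phi \in \ptype{}$, which is immediate from \Cref{lemmcompleteIsType}; (ii) $(\ptype{}, \peq_\CIcon)$ is a locally linear poset; (iii) $\ell_\CIcon$ is inversely monotone, which is trivial since $\ell_\CIcon(\Phi)=\Phi$ and $\Phi \peq_\CIcon \Psi$ means $\Phi \supseteq \Psi$; and (iv) the two closure conditions for $\imp$ and $\dimp$. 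The work therefore lies in (ii) and (iv).

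For local linearity, I will show that the comparability relation $\compa$ is transitive on $\ptype{}$; together with its reflexivity and symmetry, this is equivalent to the poset being a disjoint union of chains. Given $\Phi_1 \compa \Phi_2$ and $\Phi_2 \compa \Phi_3$, the two cases in which $\Phi_2$ sits strictly between $\Phi_1$ and $\Phi_3$ follow from transitivity of $\supseteq$. The substantive cases are (a) $\Phi_1, \Phi_3 \peq_\CIcon \Phi_2$, i.e.\ $\Phi_1, \Phi_3 \supseteq \Phi_2$, and (b) $\Phi_2 \peq_\CIcon \Phi_1, \Phi_3$, i.e.\ $\Phi_1, \Phi_3 \subseteq \Phi_2$. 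For (a), suppose towards a contradiction that $\varphi \in \Phi_1 \setminus \Phi_3$ and $\psi \in \Phi_3 \setminus \Phi_1$; axiom \ref{axGodel} lies in $\Phi_2$ (being a $\gtl$-theorem), and type property \ref{type2} places $\varphi \imp \psi \in \Phi_2$ or $\psi \imp \varphi \in \Phi_2$; by $\Phi_2 \subseteq \Phi_1, \Phi_3$ one of these propagates to the appropriate $\Phi_i$ where the antecedent is available, and type property \ref{type3a} then supplies the missing element, a contradiction. For (b), with the same assumption, axiom \ref{axco01} applied inside $\Phi_1$ (using $\varphi \in \Phi_1$, $\psi \notin \Phi_1$, and type properties \ref{type3a} and \ref{type2}) forces $\varphi \dimp \psi \in \Phi_1$, and symmetrically $\psi \dimp \varphi \in \Phi_3$; both then belong to $\Phi_2$, so type property \ref{type1} gives $(\varphi \dimp \psi) \wedge (\psi \dimp \varphi) \in \Phi_2$, and axiom \ref{axcoGodel} with type property \ref{type3a} produces $\bot \in \Phi_2$, contradicting the consistency of $\Phi_2$.

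The $\imp$ witnessing condition is routine: given $\varphi \imp \psi \notin \Phi$, I claim $(\Phi \cup \{\varphi\}, \{\psi\})$ is consistent, so that \Cref{LemmLind} yields the required $\Psi \peq_\CIcon \Phi$ with $\varphi \in \Psi$ and $\psi \notin \Psi$. Inconsistency would give $\bigwedge \Phi' \wedge \varphi \imp \psi \in \gtl$ for some finite $\Phi' \subseteq \Phi$, hence by the intuitionistic deduction theorem $\bigwedge \Phi' \imp (\varphi \imp \psi) \in \gtl$, whence $\Phi \vdash \{\varphi \imp \psi\}$ and $\varphi \imp \psi \in \Phi$, contradiction.

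The $\dimp$ condition is the main obstacle, because the required $\Psi$ must satisfy $\Psi \subseteq \Phi$: the Lindenbaum extension must not introduce formulas from outside $\Phi$. My strategy is to show that $(\{\varphi\}, \{\psi\} \cup (\lanfull \setminus \Phi))$ is consistent whenever $\varphi \dimp \psi \in \Phi$, so that any $\lanfull$-type $\Psi$ extending it automatically satisfies $\Psi \subseteq \Phi$ and $\psi \notin \Psi$. Suppose toward a contradiction that $\bigwedge A \imp \bigvee B \in \gtl$ with $A \subseteq \{\varphi\}$ and $B \subseteq \{\psi\} \cup (\lanfull \setminus \Phi)$; write $B = B' \cup B''$ with $B' \subseteq \{\psi\}$ and $B'' \subseteq \lanfull \setminus \Phi$. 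In the principal case $A = \{\varphi\}$, possibly after weakening to introduce $\psi$ on the right, rule \ref{axDimpDis} applied to $\varphi \imp \psi \vee \bigvee B'' \in \gtl$ yields $(\varphi \dimp \psi) \imp \bigvee B'' \in \gtl$; since $\varphi \dimp \psi \in \Phi$, type property \ref{type3a} places $\bigvee B'' \in \Phi$ and then type property \ref{type2} forces some $\chi \in B'' \cap \Phi$, a contradiction (if $B'' = \emptyset$ we instead obtain $\bot \in \Phi$, also a contradiction). The delicate sub-case is $A = \emptyset$ with $\psi \in B$: here rule \ref{axDimpMon} applied to the trivial theorem $\varphi \imp \top$ delivers $(\varphi \dimp \psi) \imp \dneg\psi \in \gtl$, while rule \ref{axDimpDis} applied to $\top \imp \psi \vee \bigvee B''$ delivers $\dneg\psi \imp \bigvee B'' \in \gtl$; composing gives $(\varphi \dimp \psi) \imp \bigvee B'' \in \gtl$, and the contradiction follows as before. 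The remaining sub-case $A = \emptyset$ with $\psi \notin B$ is simplest, as then $\bigvee B'' \in \gtl \subseteq \Phi$ directly.
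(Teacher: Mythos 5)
Your proof is correct and follows essentially the same route as the paper's: the same two-case analysis for local linearity via the linearity axiom and its co-implication dual, and the same Lindenbaum extensions of the pairs $(\Phi\cup\{\varphi\},\{\psi\})$ and $(\{\varphi\},\{\psi\}\cup(\lanfull\setminus\Phi))$ for the $\imp$ and $\dimp$ witnessing conditions. Your treatment of the $\dimp$ case is somewhat more meticulous about the edge cases (empty antecedent, whether $\psi$ appears among the disjuncts) than the paper's, but the underlying argument via Rule~\ref{axDimpDis} is identical.
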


\begin{proof}
	It is trivial that $\peq_\CIcon $ is a partial order. Moreover, $\ell_\CIcon $ is the identity, so $\Phi \peq_\CIcon  \Psi$ implies that $\ell_\CIcon  (\Phi) \supseteq \ell_\CIcon  (\Psi)$; that is, $\ell_\CIcon$ is inversely monotone.
	
	To prove that $(\ptype{},{\peq_\CIcon })$ is locally linear, assume towards a contradiction that it is not. We consider two cases:
	\begin{enumerate}[wide, labelwidth=!, labelindent=0pt]
		\item There exist $\Phi$, $\Psi$ and $\Theta$
		such that $\Phi \subseteq \Psi$ and $\Phi \subseteq \Theta$, but $\Psi \not  \subseteq \Theta$ and $\Theta \not  \subseteq \Psi$. Thus there exist two formulas $\varphi \in \Theta \setminus \Psi$ and $\psi \in \Psi \setminus \Theta$. 
		Then it is easy to see that $\varphi \imp \psi \not \in \Theta$ and $\psi \imp \varphi \not \in \Psi$. Thus neither of these formulas is in $\Phi$, so by condition~\ref{type2} of \Cref{definition:type} their disjunction, Axiom~\ref{axGodel}, is in $\lanfull \setminus \Phi$---a contradiction.
		
		\item There exist $\Phi$, $\Psi$ and $\Theta$
		such that $\Phi \supseteq \Psi$ and $\Phi \supseteq \Theta$, but $\Psi \not  \supseteq \Theta$ and $\Theta \not  \supseteq \Psi$. Then it is easy to see that there exist two formulas $\varphi \in \Psi \setminus \Theta$ and $\psi \in \Theta \setminus \Psi$ such that $\varphi \dimp \psi\in \Psi$ and $\psi \dimp \varphi \in \Theta$. From $\Phi \supseteq \Psi$ and $\Phi \supseteq \Theta$, we conclude that $ \varphi \dimp \psi, \psi \dimp \varphi \in \Phi$. But this contradicts the consistency of $\Phi$, since by Axiom~\ref{axcoGodel} we have $ \varphi \dimp \psi ,\psi \dimp \varphi \vdash \bot$.
	\end{enumerate}
	We finish by considering the conditions on $\imp$ and $\dimp$ in the definition of a labelled space. Let us consider $\Phi \in \ptype{}$.
	\begin{itemize}[wide, labelwidth=!, labelindent=0pt]
		\item Suppose $\varphi \imp \psi \not\in \Phi$. Then by condition~\ref{type3b} of \Cref{definition:type}, $\psi \not\in \Phi$. Let us define $u = (\Phi \cup \lbrace \varphi \rbrace, \lbrace \psi \rbrace)$, and let us assume for a contradiction that $u$ is inconsistent. This means that there exists $\gamma\in \Phi$ such that $\gamma \wedge \varphi \imp \psi \in \gtl$.
		By propositional reasoning, $\gamma\imp (\varphi \imp \psi) \in \gtl$. Since $\gamma \in \Phi$ and $\Phi$ is consistent, $\varphi \imp \psi \not\in \lanfull \setminus \Phi$, that is, $\varphi \imp \psi \in  \Phi$---a contradiction. Hence $u$ is consistent and by \Cref{LemmLind}  can be extended to a consistent $\lanfull$-type $\Psi$. From the definition of $u$ we can conclude that $\Psi \leq_\CIcon \Phi$, $\varphi \in \Psi$, and $\psi \not\in \Psi$ as required.

		\item Suppose $\varphi \dimp \psi \in \Phi$. Then by condition~\ref{type4a} of \Cref{definition:type}, $\varphi \in \Phi$. Let us define $u = (\lbrace \varphi\rbrace, (\lanfull \setminus \Phi)\cup \lbrace \psi\rbrace)$, and let us assume by contradiction that $u$ is inconsistent. This means that there exists $\gamma \in \lanfull \setminus \Phi$ such that $\varphi \imp \psi \vee \gamma \in \gtl$. By Rule \ref{axDimpDis}, we get  $(\varphi \dimp \psi)\imp \gamma \in \gtl$. Since $\gamma \in \lanfull \setminus \Phi$, we deduce that $\varphi \dimp \psi \in \lanfull \setminus \Phi$---a contradiction. By \Cref{LemmLind}, $u$ can be extended to a complete type $\Psi$. It is easy to check that $\Phi \leq_\CIcon \Psi$, $\varphi \in \Psi$, and $\psi \not\in \Psi$ as required.
	\qedhere\end{itemize}
\end{proof}

\begin{lemma}\label{lem:welldefined}
	The function $S_\CIcon \colon \ptype{}\to  \ptype{} $ is well defined. That is, $S_\CIcon( \Phi) \in \ptype{}$ for every $\Phi \in \ptype{}$.
\end{lemma}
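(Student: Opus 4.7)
The plan is to show that $\circop\Phi$ is consistent, i.e.~that $(\circop\Phi, \lanfull \setminus \circop\Phi)$ is consistent; then \Cref{lemmcompleteIsType} will immediately give that $\circop\Phi$ is an $\lanfull$-type (taking $\Sigma = \lanfull$ there, since $\lanfull$ is trivially subformula-closed).

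I would argue by contradiction. Suppose $(\circop\Phi,\lanfull\setminus\circop\Phi)$ is inconsistent. Unpacking \Cref{def:complete}, there exist finite $\Gamma' \subseteq \circop\Phi$ and $\Delta' \subseteq \lanfull\setminus\circop\Phi$ with $\bigwedge\Gamma' \imp \bigvee\Delta' \in \gtl$. Apply $\textsc{nec}_\nx$ (\ref{ax14NecCirc}) to push this under $\nx$, obtaining $\nx(\bigwedge\Gamma' \imp \bigvee\Delta') \in \gtl$. Then axiom \ref{ax05KNext} (and modus ponens) yields $\nx\bigwedge\Gamma' \imp \nx\bigvee\Delta' \in \gtl$. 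Finally, the two distribution equivalences \Cref{lemmReverse}\ref{itWedge} and \Cref{lemmReverse}\ref{itVee} push $\nx$ past the finite conjunction and disjunction, giving
\[\bigwedge \nx\Gamma' \;\imp\; \bigvee \nx\Delta' \;\in\; \gtl.\]

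Now, by the very definition of $\circop$: every $\varphi \in \Gamma' \subseteq \circop\Phi$ satisfies $\nx\varphi \in \Phi$, so $\nx\Gamma' \subseteq \Phi$; dually, every $\psi \in \Delta'$ satisfies $\nx\psi \notin \Phi$, so $\nx\Delta' \subseteq \lanfull \setminus \Phi$. The displayed theorem then witnesses $(\Phi, \lanfull \setminus \Phi) \vdash$, contradicting consistency of $\Phi$. The only potentially delicate points are the edge cases $\Gamma' = \varnothing$ and $\Delta' = \varnothing$: the former is handled because \Cref{lemmReverse}\ref{itVee} covers empty disjunctions (using $\bigvee\varnothing = \bot$ and axiom \ref{ax02Bot} $\neg\nx\bot$ to ensure $\nx\bot$ is derivably equivalent to $\bot$ --- note \ref{lemmReverse}\ref{itVee} is explicitly stated for finite, possibly empty, $\Gamma$), and the latter case is trivial since then $\bigwedge\Gamma' \in \gtl$ and a single necessitation suffices.

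I do not expect a genuine obstacle here: the whole argument is essentially ``$\circop$ commutes with the propositional connectives needed to form the witness $\bigwedge\Gamma' \imp \bigvee\Delta'$, via the normal-modal behaviour of $\nx$ encoded in axioms \ref{ax02Bot}, \ref{ax04NexVee}, \ref{ax03NexWedge}, \ref{ax05KNext} and the rule $\textsc{nec}_\nx$.'' The only thing to be slightly careful about is correctly citing \Cref{lemmReverse}\ref{itWedge},\ref{itVee} in the empty-set cases, for which one uses $\bigwedge\varnothing = \top$, $\bigvee\varnothing = \bot$, together with $\vdash \nx\top$ (from $\textsc{nec}_\nx$ applied to $\top$) and $\vdash \neg\nx\bot$ (axiom \ref{ax02Bot}).
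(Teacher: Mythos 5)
Your proposal is correct and follows essentially the same route as the paper's own proof: reduce to consistency of $\circop\Phi$ via \Cref{lemmcompleteIsType}, then push a hypothetical witness $\bigwedge\Gamma'\imp\bigvee\Delta'$ under $\nx$ using \ref{ax14NecCirc}, \ref{ax05KNext}, and \Cref{lemmReverse}\ref{itVee},\ref{itWedge} to contradict the consistency of $\Phi$. Your extra care with the empty-set edge cases is a small refinement of what the paper leaves implicit.
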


\begin{proof}
	Let $\Phi\in \ptype{} $ and $\Psi = S_\CIcon(\Phi) = \circop \Psi$.
	By \Cref{lemmcompleteIsType}, we only need to check that $(\Psi, \lanfull \setminus \Psi)$ is consistent. Suppose not; then let $\Gamma \subseteq \Psi$ and $\Delta\subseteq \lanfull \setminus \Psi$ be finite and such that $\bigwedge \Gamma \imp \bigvee \Delta \in \gtl$. Using \ref{ax14NecCirc} and \ref{ax05KNext} we see that $\nx \bigwedge \Gamma \imp  \nx \bigvee \Delta \in \gtl$, which in view of \Cref{lemmReverse} implies that $ \bigwedge \nx \Gamma \imp  \bigvee  \nx \Delta \in \gtl$ as well. But $\nx \Gamma \subseteq \Phi$ and $\nx \Delta \subseteq \lanfull \setminus \Phi$ (since $\delta \in \Delta$ implies $\delta \not\in \circop \Phi$, which implies $\nx \delta \not\in \Phi$), contradicting the fact that $\Phi$ is consistent. We conclude that $\Psi \in \ptype{}$.
\end{proof}

In order to see that the function $S_\CIcon$ is a fully confluent, convex relation, it will be useful to note that $S_\CIcon$ is a bijection.

\begin{lemma}\label{lemma:bijection}
The function $S_\CIcon$ is a bijection.
\end{lemma}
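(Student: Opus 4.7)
The plan is to exhibit a two-sided inverse for $S_\CIcon$, namely the map $Y_\CIcon\colon \ptype{} \to \ptype{}$ given by $Y_\CIcon(\Phi) = \yop \Phi = \{\varphi \mid \y\varphi \in \Phi\}$, and then use the connection axioms \ref{temporal1} and \ref{temporal2} to check that $S_\CIcon \circ Y_\CIcon = \mathbb I_{\ptype{}} = Y_\CIcon \circ S_\CIcon$.

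First I would verify that $Y_\CIcon$ is well-defined, i.e.~that $\yop \Phi$ is a consistent $\lanfull$-type whenever $\Phi$ is. This is the exact temporal dual of \Cref{lem:welldefined}. By \Cref{lemmcompleteIsType}, it suffices to show that $(\yop\Phi, \lanfull \setminus \yop\Phi)$ is consistent. If not, pick finite $\Gamma \subseteq \yop\Phi$ and $\Delta \subseteq \lanfull \setminus \yop\Phi$ with $\bigwedge \Gamma \imp \bigvee \Delta \in \gtl$; then applying rule \ref{ax14NecYesterday} and Axiom \ref{ax05KNext'}, together with the past analogue \Cref{lemmReverse}\ref{itVeePast}--\ref{itWedgePast}, yields $\bigwedge \y\Gamma \imp \bigvee \y\Delta \in \gtl$, contradicting the consistency of $\Phi$ since $\y\Gamma \subseteq \Phi$ and $\y\Delta \subseteq \lanfull \setminus \Phi$.

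Next, I would establish that both compositions are the identity. Note first that consistent types are closed under $\gtl$-provable implications: if $\varphi \in \Phi$ and $\varphi \imp \psi \in \gtl$, then $\psi \in \Phi$, for otherwise $\{\varphi\} \subseteq \Phi$ and $\{\psi\} \subseteq \lanfull \setminus \Phi$ would witness $\Phi \vdash \lanfull \setminus \Phi$. In particular, provably equivalent formulas belong to $\Phi$ together. Now for any $\Phi \in \ptype{}$ and any $\varphi \in \lanfull$,
\[\varphi \in S_\CIcon(Y_\CIcon(\Phi)) \iff \nx\varphi \in Y_\CIcon(\Phi) \iff \y\nx\varphi \in \Phi \iff \varphi \in \Phi,\]
where the last equivalence uses the connection axiom \ref{temporal2}. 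Symmetrically,
\[\varphi \in Y_\CIcon(S_\CIcon(\Phi)) \iff \y\varphi \in S_\CIcon(\Phi) \iff \nx\y\varphi \in \Phi \iff \varphi \in \Phi,\]
using \ref{temporal1}. Hence $S_\CIcon$ is a bijection with inverse $Y_\CIcon$.

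The main obstacle is really only the well-definedness check for $Y_\CIcon$, and that is entirely a routine dualisation of \Cref{lem:welldefined}; once we have the inverse in hand, the connection axioms do all the work. The argument highlights why the axioms \ref{temporal1} and \ref{temporal2} (together with $\y$-necessitation and the $\y$-distribution axiom \ref{ax05KNext'}) are precisely what is needed to ensure the temporal accessibility in the canonical structure is functional in both directions.
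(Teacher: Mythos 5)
Your proposal is correct and follows essentially the same route as the paper: exhibit $\yop$ as the two-sided inverse and verify both compositions are the identity via the connection axioms \ref{temporal1} and \ref{temporal2}. The extra details you supply (well-definedness of $\yop$ as the temporal dual of \Cref{lem:welldefined}, and closure of consistent types under provable equivalence) are correct and merely make explicit what the paper leaves implicit.
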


\begin{proof}
Define $\yop\Phi \eqdef \lbrace  \varphi \mid \y \varphi \in \Phi\rbrace$. We claim that $\Phi \mapsto \yop\Phi$ is the inverse of $S_\CIcon$. By Axiom~\ref{temporal1}
\[\varphi \in \yop\circop\Phi \iff \nx\y\varphi \in \Phi \iff \varphi \in \Phi,\]
and by Axiom~\ref{temporal2}
\[\varphi \in \circop\yop\Phi \iff \y\nx\varphi \in \Phi \iff \varphi \in \Phi.\qedhere\]
\end{proof}

\begin{lemma}\label{lem:confluent}
	The function $S_\CIcon$ is a fully confluent relation.
\end{lemma}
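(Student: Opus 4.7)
The plan is to exploit the fact, already established in \Cref{lemma:bijection}, that $S_\CIcon$ is a bijection with inverse $\Psi \mapsto \yop\Psi$. Since $S_\CIcon$ is a function, each of the four confluence diagrams in \Cref{figure:confluence} has its "missing corner" uniquely determined, so verifying confluence will reduce to a monotonicity check.

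First I would record the following elementary monotonicity observation: both of the maps $\Phi \mapsto \circop\Phi$ and $\Psi \mapsto \yop\Psi$ are monotone with respect to set inclusion, because if $\nx\varphi \in \Phi \subseteq \Phi'$ then $\nx\varphi \in \Phi'$, and similarly for $\y$. Recalling that $\Phi \leq_\CIcon \Psi$ iff $\Phi \supseteq \Psi$, monotonicity of $\circop$ translates as: $\Phi \leq_\CIcon \Phi'$ implies $S_\CIcon(\Phi) \leq_\CIcon S_\CIcon(\Phi')$, and similarly for $S_\CIcon^{-1}$.

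Next I would handle each confluence clause in turn. For forth--down, given $\Phi \leq_\CIcon \Phi'$ and $S_\CIcon(\Phi') = \Psi'$, the candidate for $\Psi$ is forced to be $S_\CIcon(\Phi)$, and monotonicity of $S_\CIcon$ gives $S_\CIcon(\Phi) \leq_\CIcon S_\CIcon(\Phi') = \Psi'$, as required. Forth--up is the dual: given $\Phi \leq_\CIcon \Phi'$ and $S_\CIcon(\Phi) = \Psi$, take $\Psi' = S_\CIcon(\Phi')$ and invoke monotonicity. For back--down and back--up, one argues symmetrically, using that $S_\CIcon$ is a bijection so $\Phi$ (resp.\ $\Phi'$) is forced to be $\yop\Psi$ (resp.\ $\yop\Psi'$), and then monotonicity of $\yop$ supplies the required order relation.

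There is no real obstacle here — everything follows from \Cref{lemma:bijection} together with the trivial monotonicity of $\circop$ and $\yop$. The only conceptual point worth highlighting is that because $S_\CIcon$ is a \emph{total function} in both directions, each confluence condition collapses to a one-line order-preservation check rather than requiring a Lindenbaum-style extension argument of the sort used in \Cref{lemm:normality}.
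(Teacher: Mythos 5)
Your proposal is correct and follows essentially the same route as the paper: both arguments observe that, since $S_\CIcon$ and (by \Cref{lemma:bijection}) its inverse $\yop$ are total functions, each confluence condition reduces to the monotonicity of $\circop$ and $\yop$ with respect to set inclusion, which is immediate. No gaps.
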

\begin{proof} We check the four confluence conditions.
	\begin{description}[wide, labelwidth=!, labelindent=0pt ]

		\item[\textbf{Forth--down, forth--up:}] Let $\Phi$ and $\Psi$ be such that $\Phi \supseteq \Psi $.
Since $S_\CIcon$ is a function, these two confluence properties amount to showing that $S_\CIcon(\Phi) \supseteq S_\CIcon(\Psi)$.
If $\varphi\in S_\CIcon(\Psi)  $ then $\nx\varphi\in \Psi$, which since $\Phi \supseteq \Psi $, implies that $\nx \varphi\in \Phi$, and hence $\varphi\in S_\CIcon(\Phi) $.
		
		
		\item[\textbf{Back--down, back--up: }] Let $\Phi$ and $\Psi$ be such that $\Phi \supseteq \Psi $.
Since, by \Cref{lemma:bijection}, $S_\CIcon^{-1}$  is a function, these two confluence properties amount to showing that $S_\CIcon^{-1}(\Phi) \supseteq S_\CIcon^{-1}(\Psi)$, that is, $\y^{-1}(\Phi) \supseteq \y^{-1}(\Psi)$.
If $\varphi\in \y^{-1}(\Psi)  $ then $\y\varphi\in \Psi$, which since $\Phi \supseteq \Psi $, implies that $\y \varphi\in \Phi$, and hence $\varphi\in \y^{-1}(\Phi) $.\qedhere
	\end{description}
\end{proof}

%

\begin{lemma}\label{lem:convex}
The function $S_\CIcon$ is a convex relation.	
\end{lemma}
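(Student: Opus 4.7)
The approach is immediate and relies entirely on the preceding lemmas. Since $S_\CIcon$ is defined as a \emph{function} $\ptype{} \to \ptype{}$, and in fact a bijection by \Cref{lemma:bijection}, viewing it as a binary relation on $\ptype{}$ yields singleton image and preimage sets, and singletons are trivially convex in any poset, in particular in $(\ptype{}, {\leq_\CIcon})$.

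Concretely, for any $\Phi \in \ptype{}$, the image set $\{\Psi \in \ptype{} \mid \Phi \mathrel{S_\CIcon} \Psi\}$ is exactly $\{\circop\Phi\}$, which is a singleton and hence convex with respect to $\leq_\CIcon$. For the preimage direction, using the inverse $\yop$ identified in the proof of \Cref{lemma:bijection}, for any $\Psi \in \ptype{}$ the preimage set $\{\Phi \in \ptype{} \mid \Phi \mathrel{S_\CIcon} \Psi\}$ equals $\{\yop\Psi\}$, again a singleton and hence convex. Both clauses of the definition of a convex relation are therefore satisfied.

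There is no genuine obstacle to overcome here; the substantive work for this lemma was already done in showing $S_\CIcon$ is well defined (\Cref{lem:welldefined}) and a bijection (\Cref{lemma:bijection}), after which convexity becomes essentially a remark. The only point worth stating carefully in the write-up is that the two convexity conditions are literally about the image and preimage \emph{sets} rather than some less trivial notion, so that ``singleton is convex'' is all that is needed.
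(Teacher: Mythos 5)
Your proof is correct and follows exactly the paper's own argument: since $S_\CIcon$ is a function, images of points are singletons, and since it is injective (indeed bijective by \Cref{lemma:bijection}), preimages are singletons too, so both convexity conditions hold trivially. Nothing further is needed.
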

\begin{proof}
Since $S_\CIcon$ is a function, images of points are singletons and hence automatically convex. The same is true for preimages of points, since by \Cref{lemma:bijection}, $S_\CIcon$ is an injection.
\end{proof}

\begin{lemma}\label{lem:sensible}
	The function $S_\CIcon$ is a sensible relation.
\end{lemma}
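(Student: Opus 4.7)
My plan is to verify each of the six sensibility conditions of \Cref{compatible} in turn for the pair $(\Phi, S_\CIcon(\Phi)) = (\Phi, \circop\Phi)$, using the consistency of both $\Phi$ and $\circop\Phi$ (which was established in \Cref{lem:welldefined}) together with the relevant axioms.

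Conditions (1) and (2) are essentially by definition and by the connection axioms. For (1), $\nx\varphi \in \Phi$ iff $\varphi \in \circop\Phi$ is immediate from the definition of $\circop$. For (2), $\y\varphi \in \circop\Phi$ iff $\nx\y\varphi \in \Phi$, and since Axiom~\ref{temporal1} gives $\gtl \vdash \varphi \iiff \nx\y\varphi$, consistency of $\Phi$ forces $\nx\y\varphi \in \Phi$ iff $\varphi \in \Phi$.

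Conditions (3) and (5) are the key cases; (4) and (6) are their past-tense analogues and follow by entirely symmetric reasoning. For (3), the forward direction uses Axiom~\ref{ax09BoxFix}: $\nec\varphi \in \Phi$ together with $\gtl \vdash \nec\varphi \imp \varphi \wedge \nx\nec\varphi$ forces $\varphi \in \Phi$ and $\nx\nec\varphi \in \Phi$, hence $\nec\varphi \in \circop\Phi$. For the converse, if $\varphi \in \Phi$ and $\nec\varphi \in \circop\Phi$ then $\nx\nec\varphi \in \Phi$, so $\varphi \wedge \nx\nec\varphi \in \Phi$; since \Cref{lemmReverse}\ref{itReverseBox} gives $\gtl \vdash \varphi \wedge \nx\nec\varphi \imp \nec\varphi$ and $\Phi$ is consistent, $\nec\varphi \in \Phi$. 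Condition (5) works the same way, with Axiom~\ref{ax10DiamFix} supplying the direction $\psi \vee (\varphi \wedge \nx(\varphi \until \psi)) \imp \varphi \until \psi$ and \Cref{lemmReverse}\ref{itReverseDiam} supplying the converse $\varphi \until \psi \imp \psi \vee (\varphi \wedge \nx(\varphi \until \psi))$, so that $\varphi \until \psi \in \Phi$ is equivalent (via consistency of $\Phi$) to the disjunction $\psi \in \Phi$ or ($\varphi \in \Phi$ and $\nx(\varphi \until \psi) \in \Phi$, i.e., $\varphi \until \psi \in \circop\Phi$); that the disjunction in $\Phi$ forces one of the disjuncts is condition~\ref{type2} of \Cref{definition:type}, available because $\Phi$ is a type.

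None of these steps presents a real obstacle; the work is essentially a bookkeeping exercise translating the fixed-point axioms (and their converses from \Cref{lemmReverse}) into membership statements, using at each step that if $\gtl \vdash \alpha \imp \beta$ and $\alpha \in \Phi$ for a consistent type $\Phi$, then $\beta \in \Phi$. Conditions (4) and (6) are handled identically using Axioms~\ref{ax09BoxFix'} and~\ref{ax10DiamFix'} together with \Cref{lemmReverse}\ref{itReverseHas} and~\ref{itReversePast}, and Axiom~\ref{temporal2} in place of~\ref{temporal1}.
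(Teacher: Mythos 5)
Your proposal is correct and follows essentially the same route as the paper: definition of $\circop$ plus the connection axioms for the $\nx$/$\y$ clauses, Axiom~\ref{ax09BoxFix} and \Cref{lemmReverse}\ref{itReverseBox} for $\nec$, Axiom~\ref{ax10DiamFix} and \Cref{lemmReverse}\ref{itReverseDiam} for $\until$, with the past cases by temporal duality. The only cosmetic difference is that the paper routes condition (2) through the fact that $S_\CIcon^{-1} = \yop$ (\Cref{lemma:bijection}) rather than invoking Axiom~\ref{temporal1} directly, which amounts to the same thing.
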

\begin{proof}

Let us consider $\Phi$ and $\Psi$ such that $\Psi=S_\CMod(\Phi)$. We verify the conditions required for $(\Phi,\Psi)$ to be sensible (\Cref{compatible}).

We have $\varphi \in \Psi$ if and only if $\nx\varphi \in \Phi$ by the definition of $S_\CIcon$. Since $\Phi = \y^{-1}(\Psi)$, we similarly have $\varphi \in \Phi$ if and only if $\y\varphi \in \Psi$

If $\nec \varphi \in \Phi$ then, by Axiom~\ref{ax09BoxFix} we get $\varphi, \nx \nec \varphi \in \Phi$. Since $\Psi = S_\CIcon(\Phi)$, we get $\nec \varphi \in \Psi$.
Conversely, assume that $\nec \varphi \in \Phi^-$. By \Cref{lemmReverse}\ref{itReverseBox}, $\varphi \wedge \nx \nec \varphi \in \Phi^-$, so either $\varphi \in \Phi^-$ or $\nx \nec \varphi \in \Phi^-$ (giving in the second case $\nec \varphi \in \Psi^-$). In either case we reach the desired conclusion.

For the case of $\until$, by \Cref{lemmReverse}\ref{itReverseDiam}, we know $\varphi\until\psi \vdash\psi, \varphi \wedge \nx(\varphi \until \psi)$. Hence if $\varphi \until \psi \in \Phi$ then either  $\psi \in \Phi$ or $\varphi \wedge \nx (\varphi \until \psi) \in \Phi $. In the first case we are done immediately. In the second, since $\Phi$ is an $\lanfull$-type we deduce $\varphi \in \Phi$ and $\nx(\varphi \until \psi) \in \Phi$, the latter of which gives $\varphi \until \psi \in \Psi$, and we are done.  Conversely, by Axiom~\ref{ax10DiamFix} and some intuitionistic reasoning, both $\psi\vdash \varphi \until \psi$ and $\varphi , \nx(\varphi \until \psi) \vdash \varphi \until \psi$. From the first, we see that $\psi \in \Phi$ implies $\varphi \until \psi \in \Phi$. If, alternatively, both $\varphi \in \Phi$ and $\varphi \until \psi \in \Psi$, then $\varphi \in \Phi$ and $\nx(\varphi \until \psi) \in \Phi$, which together with $\varphi , \nx(\varphi \until \psi) \vdash \varphi \until \psi$, also yields $\varphi \until\psi \in \Phi$.

The cases for  $\has$ and $\since$ are the temporal converses of those for $\nec$ and $\until$ respectively (using  again the fact that $S_\CIcon^{-1}$ is to $\y$ as $S_\CIcon$ is to $\nx$).
\end{proof}

We note the general fact that given a $\Sigma_1$-labelled system and a subformula-closed $\Sigma_2 \subseteq \Sigma_1$, one can restrict the labelling to $\Sigma_2$ in the natural way (by replacing its value at any point by its intersection with $\Sigma_2$).
Doing so yields a $\Sigma_2$-labelled system. This is easily verifiable from the definitions.

\begin{proposition}\label{prop:CisW}
The canonical system $\CMod$ is an $\lanfull$-labelled system. Restricting the labelling to any subformula-closed $\Sigma \subseteq \lanfull$ yields a $\Sigma$-labelled system.
\end{proposition}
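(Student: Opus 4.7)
The first claim is essentially a bookkeeping step, collecting what has already been proved. \Cref{lemm:normality} supplies that $(\ptype{},\peq_\CIcon,\ell_\CIcon)$ is an $\lanfull$-labelled space, and \Cref{lem:welldefined} ensures that $S_\CIcon$ maps $\ptype{}$ into itself, so $S_\CIcon$ is a genuine binary relation on $\ptype{}$. \Cref{lemma:bijection} says $S_\CIcon$ is a bijection, which immediately yields bi-seriality: every $\Phi$ has $S_\CIcon(\Phi)$ as a forward witness and $S_\CIcon^{-1}(\Phi)=\yop\Phi$ as a backward one. The remaining three conditions required of the accompanying relation, namely full confluence, convexity, and sensibility, are then given by \Cref{lem:confluent}, \Cref{lem:convex}, and \Cref{lem:sensible} respectively. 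Together these yield the first sentence of the proposition.

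For the second claim, the plan is to define the restricted labelling $\ell^\Sigma_\CIcon(\Phi)\eqdef\Phi\cap\Sigma$ and verify each defining condition of a $\Sigma$-labelled system by transferring it from the $\lanfull$-level. First, one checks that $\Phi\cap\Sigma$ is indeed a $\Sigma$-type: every clause in \Cref{definition:type} is an (bi)implication whose hypotheses and conclusions mention only a formula $\chi\in\Sigma$ together with immediate subformulas of $\chi$, all of which lie in $\Sigma$ by subformula-closure; hence applying the corresponding clause for the $\lanfull$-type $\Phi$ and then intersecting with $\Sigma$ gives the required clause for $\Phi\cap\Sigma$. Inverse monotonicity of $\ell^\Sigma_\CIcon$ is trivial, as intersection preserves inclusion.

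The two labelled-space conditions on $\imp$ and $\dimp$ also transfer in a routine way, thanks again to subformula-closure. For instance, if $\varphi\imp\psi\in\Sigma\setminus\ell^\Sigma_\CIcon(\Phi)$, then $\varphi\imp\psi\in\lanfull\setminus\Phi$, so the witness $\Psi\peq_\CIcon\Phi$ supplied by \Cref{lemm:normality} satisfies $\varphi\in\Psi$ and $\psi\notin\Psi$; since $\varphi,\psi\in\Sigma$, this rewrites as $\varphi\in\ell^\Sigma_\CIcon(\Psi)$ and $\psi\notin\ell^\Sigma_\CIcon(\Psi)$. The case of $\dimp$ is entirely symmetric. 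Full confluence, convexity, and bi-seriality of $S_\CIcon$ are properties of the relation and the order alone, so they are unaffected by restricting the labelling. Finally, sensibility for $\Sigma$ quantifies only over temporal formulas in $\Sigma$, and in each clause both the formula and its relevant subformulas remain in $\Sigma$, so the sensibility proved in \Cref{lem:sensible} specialises to sensibility with respect to $\ell^\Sigma_\CIcon$.

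There is no real obstacle in this proof; the content is already contained in the preceding lemmas, and the only thing one has to watch is that every witness to an existential clause (for $\imp$, for $\dimp$, and for the temporal sensibility conditions) is a formula already in $\Sigma$ — which is exactly what subformula-closure of $\Sigma$ guarantees.
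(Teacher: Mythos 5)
Your proof is correct and follows essentially the same route as the paper: the first claim is assembled from Lemmas \ref{lemm:normality}, \ref{lem:welldefined}, \ref{lemma:bijection}, \ref{lem:confluent}, \ref{lem:convex}, and \ref{lem:sensible}, exactly as the paper does, and your verification of the second claim merely spells out the paper's remark (stated just before the proposition) that restricting a labelling to a subformula-closed subset yields a labelled system, using subformula-closure in the same way.
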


\proof
For the first claim, we need for the following three properties to hold:
\begin{enumerate}
\item $(\ptype{},{\peq}_\CIcon ,\ell_\CIcon )$ is a labelled space;
\item $S_\CIcon $ is a bi-serial, fully confluent, convex, sensible relation; and
\item $\ell_\CIcon $ has $\type{\lanfull}$ as its codomain.
\end{enumerate}
The first item is \Cref{lemm:normality}.  The relation $S_\CIcon $ is bi-serial since it is a well defined bijection, and it is a fully confluent, convex, sensible relation by Lemmas \ref{lem:confluent},~\ref{lem:convex}, and \ref{lem:sensible}.
 Finally, if $\Phi \in \ptype{}$ then $\ell_\CIcon (\Phi) = \Phi$, which is an element of $\type{\lanfull}$ by \Cref{lemmcompleteIsType}.
 
The second claim follows from the observation preceding the proposition.
\endproof

By \Cref{prop:quotient_system}, the structure $\nicefrac{\CMod}\Sigma \eqdef (\ptype{} / {\sim}, \leq_\mathcal Q, R^+_\mathcal Q, \ell_\mathcal Q)$ formed with the quotient construction of \Cref{Sec:quotient} is a $\Sigma$-labelled system. We call $\nicefrac{\CMod}\Sigma$ the \define{canonical quasimodel}, although we do not yet know it is a quasimodel, since we do not yet know that $R^+_\mathcal Q$ is $\om$-sensible.
In the sequel, we show that this is the case.
From this we will obtain completeness of our deductive calculus, since any formula that is not derivable is falsified in the canonical quasimodel and hence on some model by unwinding.

\ignore{

\section{The canonical quasimodel}\label{Sec:quotient}

In this section we describe a finite quotient $\nicefrac{\CMod}\Sigma$ of the canonical labelled system $\CMod$ constructed in \Cref{secCan}, and we show that $\nicefrac{\CMod}\Sigma$ is a $\Sigma$-labelled system. Later, in \Cref{SecComp}, we will show that $\nicefrac{\CMod}\Sigma$ is also $\om$-sensible and thus a quasimodel.

We obtain $\nicefrac{\CMod}\Sigma$ from $\CMod$ in two steps. First, we will take a bisimulation quotient to obtain a finite $\Sigma$-labelled space equipped with a bi-serial, fully confluent, sensible relation. The second step will be to extend the sensible relation to be convex, yielding a finite $\Sigma$-labelled system.

We describe the quotient explicitly, noting afterwards that it is a particular type of bisimulation quotient. The assumption that $\Sigma$ is finite is only needed at the end: if $\Sigma$ is finite then $\nicefrac{\CMod}\Sigma$ will be finite. So for now let $\Sigma$ be any subformula-closed subset of $\lanfull$, and let $\CMod = (\ptype{},{\peq_\CMod },S_\CMod ,\ell_\CMod )$ be the canonical labelled system, which by \Cref{prop:CisW} is a $\Sigma$-labelled system when $\ell_\CMod $ is restricted to a $\Sigma$-labelling, which we assume (and henceforth denote by $\ell$).

 For $\Phi\in \ptype{}$, define 
  $L(\Phi) = \cbra \ell(\Psi) \mid \Psi \compa \Phi \cket$. 
   We define the binary relation $\sim$ on $\ptype{}$ by \[\Phi \sim \Psi \iff (\ell(\Phi), L(\Phi)) = (\ell(\Psi), L(\Psi)).\]
If $\Sigma$ is finite, then clearly $\ptype{} / {\sim}$ is finite.

Note that $\sim$ is the largest relation that is simultaneously a bisimulation with respect to the relations $\leq$ and $\geq$, with $\Sigma$ treated as the set of atomic propositions that bisimilar worlds must agree on.

Now define a partial order $\leq_\mathcal Q$ on the equivalence classes $\ptype{} /{\sim}$ of $\sim$ by
\[[\Phi] \leq_\mathcal Q [\Psi] \iff L(\Phi) = L(\Psi)\text{ and }\ell(\Phi) \ge \ell(\Psi),\]
noting that this is well-defined and is indeed a partial order.

Since each set $L(\Phi)$ can be linearly ordered by inclusion and $\ell(\Phi) \in L(\Phi)$, the poset $(\ptype{} / {\sim}, \leq_\mathcal Q)$ is a disjoint union of linear orders. By defining $\ell_\mathcal Q$ by 
\[\ell_\mathcal Q([\Phi]) = \ell(\Phi)\]
we obtain a $\Sigma$-labelled space $(\ptype{} / {\sim}, \leq_\mathcal Q, \ell_\mathcal Q)$; it is not hard to check that this labelling is inversely monotone and that the clauses for $\imp$ and $\dimp$ hold with this labelling.

Now define the binary relation $R_\mathcal Q$ on $\ptype{} / {\sim}$ to be the smallest relation such that $[\Phi] \mathrel R_\mathcal Q [S(\Phi)]$, for all $\Phi \in \ptype{}$. 

\begin{lemma}
The relation $R_\mathcal Q$ is fully confluent and sensible.
\end{lemma}

\begin{proof}
It is clear that $R_\mathcal Q$ is sensible. For confluence, suppose $[\Phi] \mathrel R_\mathcal Q [S(\Phi)]$. To see that the forth--up condition holds, suppose further that $[\Phi] \leq_\mathcal Q [\Psi]$. Then as $\ell(\Phi) \in L(\Phi) = L(\Psi)$ there is some $\Theta \geq \Phi$ with $[\Psi] = [\Theta]$. Then we have $[\Theta] \mathrel R_\mathcal Q [S(\Theta)]$ and $[S(\Phi)] \leq_\mathcal Q [S(\Theta)]$, as required for the forth--up condition. The proofs of the remaining three confluence conditions are entirely analogous.
\end{proof}

As promised, we now have a $\Sigma$-labelled space equipped with a fully confluent sensible relation. We now transform this labelled space into a $\Sigma$-labelled system by making the additional relation convex by fiat.

Define $R^+_\mathcal Q$ by $X \mathrel R^+_\mathcal Q Y$ if and only if there exist $X_1 \leq_\mathcal Q X \leq_\mathcal Q X_2$ and $Y_1 \leq_\mathcal Q Y \leq_\mathcal Q Y_2$ such that $X_2 \mathrel R_\mathcal Q Y_1$ and $X_1 \mathrel R_\mathcal Q Y_2$. Now define $\nicefrac{\CMod}\Sigma = (\ptype{} / {\sim}, \leq_\mathcal Q, R^+_\mathcal Q, \ell_\mathcal Q)$.

\begin{lemma}\label{lemIsLabelled}
The structure $\nicefrac{\CMod}\Sigma$ is a $\Sigma$-labelled system.
\end{lemma}

\begin{proof}
We already know that $(\ptype{} / {\sim}, \leq_\mathcal Q, \ell_\mathcal Q)$ is a $\Sigma$-lab\-elled space. First we must check $R^+_\mathcal Q$ is still fully confluent and sensible. 

For the forth--down condition, suppose $X \leq_\mathcal Q X' \mathrel R^+_\mathcal Q Y'$. Then by the definition of $R^+_\mathcal Q$, there are some $X_2 \geq_\mathcal Q X'$ and $Y_1 \leq_\mathcal Q Y'$ such that $X_2 \mathrel R_\mathcal Q Y_1$. 
 Since $X \leq_\mathcal Q X' \leq_\mathcal Q X_2$, by the forth--down condition for $R_\mathcal Q$ there is some $Y \leq_\mathcal Q Y_1$ with $X \mathrel R_\mathcal Q Y$ and therefore $X \mathrel R^+_\mathcal Q Y$. Since $Y \leq_\mathcal Q Y_1 \leq_\mathcal Q Y'$, we are done. 
 The proof that the forth--up condition holds is just the order dual of that for forth--down. The proofs of the back--down and back--up conditions are similar.


To see that $R^+_\mathcal Q$ is sensible, suppose $ X \mathrel R^+_\mathcal Q Y$ and that $\nx \varphi \in \Sigma$. Take $X_1 \leq_\mathcal Q X \leq_\mathcal Q X_2$ and $Y_1 \leq_\mathcal Q Y \leq_\mathcal Q Y_2$ such that $X \mathrel R_\mathcal Q Y_1$. Then
\begin{align*}
\nx \varphi \in \ell_\mathcal Q(X) &\implies \nx \varphi \in \ell_\mathcal Q(X_1)\\
&\implies \phantom{\nx}\varphi \in \ell_\mathcal Q(Y_2)\\
 &\implies \phantom{\nx}\varphi \in \ell_\mathcal Q(Y)\\
 &\implies \phantom{\nx}\varphi \in \ell_\mathcal Q(Y_1)\\
 &\implies \nx \varphi \in \ell_\mathcal Q(X_2)  &\implies \nx \varphi \in \ell_\mathcal Q(X),
\end{align*}
so $ \nx \varphi \in \ell_\mathcal Q(X) \iff  \varphi \in \ell_\mathcal Q(Y)$. The $\ps$ and $\nec$ cases are similar. 

Finally, we show that $R^+_\mathcal Q$ is convex. Firstly, for the image condition, if $X \mathrel R^+_\mathcal Q Y_1$ and $X \mathrel R^+_\mathcal Q Y_2$ with $Y_1 \leq_\mathcal Q Y \leq Y_2$, then by the definition of $R^+_\mathcal Q$ we can find  $X_2 \geq_\mathcal Q X$ and $Y'_1 \leq_\mathcal Q Y_1$ with $X_2 \mathrel R_\mathcal Q Y'_1$, and similarly $X_1 \leq_\mathcal Q X$ and $Y'_2 \geq_\mathcal Q Y_2$ with $X_1 \mathrel R_\mathcal Q Y'_2$. Since then $X_1 \leq_\mathcal Q X \leq_\mathcal Q X_2$ and $Y'_1 \leq_\mathcal Q Y \leq_\mathcal Q Y'_2$, by the definition of $R^+_\mathcal Q$ we conclude that $X \mathrel R^+_\mathcal Q Y$. The preimage condition is completely analogous. This completes the proof that $\nicefrac{\CMod}\Sigma$ is a $\Sigma$-labelled system.
\end{proof}

\begin{lemma}\label{quasi:bound}
Suppose $\Sigma$ is finite, and write $\lgt\Sigma$ for its cardinality. Then the height of $\nicefrac{\CMod}\Sigma$ is bounded by $\lgt\Sigma +1$, and the cardinality of the domain $\ptype{} / {\sim}$ of $\nicefrac{\CMod}\Sigma$ is bounded by $(\lgt\Sigma +1)\cdot 2^{\lgt \Sigma (\lgt \Sigma +1)+1}$
\end{lemma}

\begin{proof}
Each element of the domain of $\nicefrac{\CMod}\Sigma$ is a pair $(\ell, L)$ where $L$ is a (nonempty) subset of $\powerset(\Sigma)$ and $\ell \in L$. Since $L$ is linearly ordered by inclusion, it has height at most $\lgt \Sigma + 1$. There are $(2^{\lgt \Sigma})^i$ subsets of $\powerset(\Sigma)$ of size $i$, so there are at most $\sum_{i = 1}^{\lgt \Sigma +1}(2^{\lgt \Sigma})^i$ distinct $L$. The sum is bounded by $2^{\lgt \Sigma (\lgt \Sigma +1)+1}$. The factor of $\lgt\Sigma +1$ corresponds to choice of an $\ell \in L$, for each $L$.
\end{proof}

\color{black}

Thus we have an exponential bound on the size of $\nicefrac{\CMod}\Sigma$.
Later, once we prove $\nicefrac{\CMod}\Sigma$ is a quasimodel, the decidability of $\gtl$ can be inferred from this bound.
See \cite{gtlkr} for a more direct proof of decidability using the same quotient construction.
However, for our purposes, it suffices to observe that $\nicefrac{\CMod}\Sigma$ is finite.

}

\section{Characteristic formulas}\label{SecChar}

 In this section, we show that there exist formulas defining points in the canonical quasi\-model $\nicefrac{\CMod}\Sigma = (\ptype{} / {\sim}, \leq_\mathcal Q, R^+_\mathcal Q, \ell_\mathcal Q)$, i.e.~to each $w \in \ptype{} / {\sim}$ we assign two formulas that together `distinguish' $w$. In the next section, these formulas will play an essential role in the proof that $R^+_\mathcal Q$ is $\om$-sensible, and hence that $\nicefrac{\CMod}\Sigma$ is a quasimodel.

First, we define a formula $\chi^+_\Sigma(w)$ (or $\chi^+(w)$ when $\Sigma$ is clear from context) such that for all $\Gamma \in \ptype{}$, $\chi^+(w) \in \Gamma$ if and only if   $[\Gamma] \leq w$.
Dually, we define $\chi^-(w)$ so that for all $\Gamma \in \ptype{}$, $\chi^-(w) \notin \Gamma$ if and only if  $w\leq[\Gamma]$.
Compared to \cite{eventually}, these formulas require co-implication, as they must look `up' as well as `down' the model.
In this section, we write $\cqm\Sigma = (\ptype{} / {\sim},\leq,R,\ell)$.
We will omit subscript indices on the $\ell$ and $L$ functions.

\begin{definition}
Fix a $\Sigma \subseteq\lanfull$ that is finite and closed under subformulas.
Given $\Delta\in \type\Sigma$, define $\overrightarrow \Delta \eqdef \bigwedge\Delta\imp \bigvee(\Sigma \setminus \Delta)$ and $\overleftarrow \Delta \eqdef \bigwedge\Delta\dimp \bigvee(\Sigma \setminus \Delta)$.
Given $w  \in \ptype{} / {\sim}$, we define a formula $\chi^0(w)$ by
\[\chi^0(w) \eqdef    \bigwedge_{\Delta \in L(w)} {\dneg} \overrightarrow{\Delta} \wedge  \bigwedge_{\Delta \in \type\Sigma \setminus L(w)} \neg \overleftarrow {\Delta}  . \]
Then define $\chi^+ (w)$ by
\[\chi^+(w) \eqdef \overleftarrow {\ell(w)} \wedge \chi^0(w) \]
and $\chi^- (w)$ by
\[\chi^-(w) \eqdef \chi^0(w) \Rightarrow \overrightarrow{\ell(w)}.\]
\end{definition}

\begin{proposition}\label{propSimForm}
Given $w \in \ptype{} / {\sim}$ and $\Gamma\in \ptype{}$,
\begin{enumerate}[label=\arabic*)]

\item\label{simulability:c0} $\chi^0(w)\in \Gamma $ if and only if  $L (\Gamma) = L(w)$,

\item\label{simulability:c1} $\chi^+(w)\in \Gamma$ if and only if  $[\Gamma] \leq w$, and

\item\label{simulability:c2} $\chi^-(w)\not\in \Gamma $ if and only if  $[\Gamma] \geq w$.

\end{enumerate}
\end{proposition}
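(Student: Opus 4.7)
The plan is to first convert membership of the compound formulas $\overrightarrow\Delta$, $\overleftarrow\Delta$, $\dneg \overrightarrow\Delta$, and $\neg \overleftarrow\Delta$ in a consistent $\lanfull$-type $\Gamma$ into purely order-theoretic statements about $\ptype{}$, and then to assemble the three claims. Recall that $\peq_\CMod$ is reverse inclusion, so $V \leq_\CMod \Gamma$ means $V \supseteq \Gamma$ as sets. The essential ingredient is the Lindenbaum-style ``truth lemma'' for $\imp$ and $\dimp$ already implicit in the proof of \Cref{lemm:normality}: namely, $\varphi \imp \psi \in \Gamma$ iff every $V \in \ptype{}$ with $V \supseteq \Gamma$ and $\varphi \in V$ also has $\psi \in V$; and $\varphi \dimp \psi \in \Gamma$ iff there is some $V \in \ptype{}$ with $V \subseteq \Gamma$, $\varphi \in V$, and $\psi \notin V$.

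Applied to $\overrightarrow\Delta$ and $\overleftarrow\Delta$, together with the observation that a consistent $\lanfull$-type $V$ satisfies $\Delta \subseteq V$ and $V \cap (\Sigma \setminus \Delta) = \varnothing$ exactly when $\ell(V) = \Delta$, this yields
\[
\overrightarrow\Delta \notin \Gamma \iff \exists V \in \ptype{}\ (V \supseteq \Gamma \text{ and } \ell(V) = \Delta),
\]
and dually
\[
\overleftarrow\Delta \in \Gamma \iff \exists V \in \ptype{}\ (V \subseteq \Gamma \text{ and } \ell(V) = \Delta).
\]
Applying the truth lemma once more, and using that comparability is transitive on a locally linear poset (so any two-step chain $V \subseteq U \supseteq \Gamma$ or $V \supseteq U \subseteq \Gamma$ forces $V \compa \Gamma$), I derive $\dneg \overrightarrow\Delta \in \Gamma \iff \Delta \in L(\Gamma)$ and $\neg \overleftarrow\Delta \in \Gamma \iff \Delta \notin L(\Gamma)$; the non-trivial direction of each requires a small case split on whether a witness for $\Delta \in L(\Gamma)$ lies above or below $\Gamma$. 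Claim~1 then falls out immediately, since $\chi^0(w) \in \Gamma$ unpacks to $L(w) \subseteq L(\Gamma) \subseteq L(w)$.

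For claim~2, the forward direction uses the witness $V \subseteq \Gamma$ with $\ell(V) = \ell(w)$ furnished by $\overleftarrow{\ell(w)} \in \Gamma$: since $V \compa \Gamma$ we have $L(V) = L(\Gamma) = L(w)$, so $[V] = w$, and $V \subseteq \Gamma$ yields $\ell(w) = \ell(V) \subseteq \ell(\Gamma)$, giving $[\Gamma] \leq w$. For the converse, $[\Gamma] \leq w$ forces $\ell(w) \in L(\Gamma)$, and I pick a witness $\Psi \compa \Gamma$ with $\ell(\Psi) = \ell(w)$: if $\Psi \subseteq \Gamma$ I take $V = \Psi$ directly; if instead $\Psi \supseteq \Gamma$, then $\ell(w) = \ell(\Psi) \supseteq \ell(\Gamma) \supseteq \ell(w)$ collapses to $\ell(\Gamma) = \ell(w)$, so $V = \Gamma$ serves as the required witness. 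Claim~3 is the exact dual: unfolding gives $\chi^-(w) \notin \Gamma$ iff there exist $U \supseteq \Gamma$ with $L(U) = L(w)$ and $V \supseteq U$ with $\ell(V) = \ell(w)$, and the same case analysis yields equivalence with $[\Gamma] \geq w$.

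The main obstacle is the careful bookkeeping of three inclusion directions at once, namely set inclusion on $\ptype{}$, the inversely related $\peq_\CMod$, and the quotient $\leq_\mathcal Q$, together with guaranteeing that in the reverse directions of claims~2 and~3 a witness for $\ell(w) \in L(\Gamma)$ can always be produced on the required side of $\Gamma$. The uniform mechanism making this go through is that any ``wrong-side'' witness forces the collapse $\ell(\Gamma) = \ell(w)$, after which $\Gamma$ itself plays the role of the witness.
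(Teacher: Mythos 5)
Your proposal is correct and follows essentially the same route as the paper: both arguments unfold the canonical labelled-space clauses for $\imp$ and $\dimp$ to characterise membership of $\overrightarrow{\Delta}$, $\overleftarrow{\Delta}$, ${\dneg}\overrightarrow{\Delta}$, and $\neg\overleftarrow{\Delta}$ order-theoretically, invoke local linearity to turn two-step chains into comparability with $\Gamma$, and handle the reverse directions by the same case split on which side of $\Gamma$ the witness lies. The only difference is presentational: you isolate the four membership equivalences as a preliminary lemma, whereas the paper inlines the identical reasoning into each of the three claims.
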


\proof
Let $w \in \ptype{} / {\sim}$ and $\Gamma\in \ptype{}$.

\medskip

\noindent \ref{simulability:c0}
For the left-to-right implication, assume that  $\chi^0(w)\in \Gamma$, so that both $ \bigwedge_{\Delta \in L(w)}{\dneg} \overrightarrow \Delta  \in \Gamma$ and $   \bigwedge_{\Delta \in \type\Sigma \setminus   L(w)}\neg \overleftarrow {\Delta}  \in \Gamma$, by the defining conditions of $\lanfull$-types.
For the inclusion $L(w) \subseteq L(\Gamma)$, let $\Delta\in L(w)$.
From ${\dneg} \overrightarrow \Delta \in \Gamma$, we obtain $\Phi \geq \Gamma$ such that $ \overrightarrow {\Delta} \notin \Phi$. Hence there is $\Phi_\Delta\leq \Phi $ with $\bigwedge \Delta\in \Phi_\Delta$ and $\bigvee (\Sigma \setminus \Delta)\not\in \Phi_\Delta$; i.e.~$\ell (\Phi_\Delta)=\Delta$.
From local linearity we see that $\Phi_\Delta\compa \Gamma$; hence $\Delta=\ell (\Phi_\Delta) \in L (\Gamma)$.

For the inclusion $L(w) \supseteq L(\Gamma)$, let $\Delta \in \type\Sigma\setminus L(w)$. Then for any $\Psi\leq \Gamma$ we have that $\overleftarrow{\Delta} \notin \Psi $, so that there is no $\Psi_\Delta\geq \Psi$ with $\bigwedge\Delta \in \Psi_\Delta$ and $\bigvee (\Sigma \setminus \Delta)\not\in \Psi_\Delta$.
Thus there is no $\Psi_\Delta \compa \Gamma$ with $\bigwedge\Delta \in \Psi_\Delta$ and $\bigvee (\Sigma \setminus \Delta)\not\in \Psi_\Delta$ (for the $\Psi_\Delta \geq \Gamma$ case, set $\Psi = \Gamma$; for $\Psi_\Delta \leq \Gamma$ set $\Psi = \Psi_\Delta$). We deduce that $\Delta\notin L (\Gamma)$.

For the right-to-left implication, assume that $L (\Gamma) = L(w) $.
Then for $\Delta\in L (\Gamma)$ we readily obtain ${\dneg}\overrightarrow{\Delta} \in \Gamma$, and similarly for $\Delta \in \type\Sigma \setminus L (\Gamma)$ we obtain $\neg \overleftarrow{\Delta} \in \Gamma$, from which we obtain by propositional reasoning that $\chi^0(w) \in \Gamma$.

\medskip

\noindent\ref{simulability:c1}
If $\chi^+(w)\in \Gamma$ then $\chi^0(w) \in \Gamma$,  and so by item~(\ref{simulability:c0}, $L (\Gamma) = L(w)$, while $\overleftarrow{\ell(w)} \in \Gamma$ implies there is some $\Gamma'\geq\Gamma$ with $\ell (\Gamma') = \ell(w)$.
This shows that $w = [\Gamma'] \geq [\Gamma]$, as claimed.

Conversely, if $[\Gamma] \leq w$, then $L(\Gamma) = L(w)$, and so by item~(\ref{simulability:c0}, $\chi^0(w)\in \Gamma$. Further, there exists $\Gamma'$ such that $[\Gamma'] = w$ and $\Gamma' \geq \Gamma$, which implies $\overleftarrow {\ell(w)} = \overleftarrow {\ell([\Gamma'])} =   \overleftarrow {\ell(\Gamma')}\in \Gamma$ also.

\medskip

\noindent\ref{simulability:c2}
If $\chi^-(w)\not\in \Gamma $ then there exists $\Phi \leq \Gamma$ with $\chi^0(w) \in \Phi$ and $\overrightarrow{\ell(w)} \not\in \Phi$. From  $\chi^0(w) \in \Phi$ we deduce $L(\Gamma) = L(\Phi) = L(w)$. From $\overrightarrow{\ell(w)} \not\in \Phi$ we deduce there exists $\Gamma' \leq \Phi$ with $\ell(\Gamma') = \ell(w)$. Since also $L(\Gamma') = L(\Phi) = L(w)$, we have $[\Gamma'] = w$. Then since $\Gamma \geq \Phi \geq \Gamma'$ we have $[\Gamma] \geq [\Gamma'] = w$.

Conversely, if $[\Gamma] \geq w$, then there exists $\Gamma' \leq \Gamma$ with $[\Gamma'] = w$. Now we have $\chi^0(w)\in \Gamma'$ but $\overrightarrow{\ell(w)} \not \in \Gamma'$. It follows that $\chi^-(w)\not\in \Gamma $.
\endproof

\begin{remark}
Note that the formula $\chi^+_\Sigma(w)$ makes essential use of co-implication, as properties of $w\geq [\Gamma]$ do not affect truth values in $\Gamma$ in the language with $\imp$ alone.
In contrast, the formulas $\chi^-_\Sigma$ are similar to the formulas $\operatorname{Sim}(w)$ of \cite{eventually}, although we remark that co-implication is still needed to describe the full linear component of $w$.
\end{remark}

Next we establish some provable properties of each of $\chi^+_\Sigma$ and $\chi^-_\Sigma$.
We begin with the former.

\begin{proposition}\label{propsubplus}
Given $w \in \ptype{} / {\sim}$ and $\psi\in \Sigma $:
\begin{enumerate}[label=\arabic*)]
	\item\label{itPropsubplOne} If $\psi\in \Sigma \setminus \ell({{w}})$, then
$ \chi^+(w)\imp (\chi^+(w) \dimp \psi ) \in \gtl$.

	\item\label{itPropsubplOneb} If $\psi\in \ell({w})$, then
	$ \chi^+(w) \imp \psi \in \gtl$.

	

	\item\label{itPropsubplFive}
For any $w\in \ptype{} / {\sim}$,
$\displaystyle \chi^+(w) \imp \nx\bigvee _{{{w}} \mathrel R {{{v}}}  } \chi^+(v) \in \gtl$.
\end{enumerate}
\end{proposition}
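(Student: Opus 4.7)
All three items will be proved by contradiction, using the Lindenbaum lemma (\Cref{LemmLind}) to supply a consistent $\lanfull$-type $\Gamma \in \ptype{}$ witnessing the purported non-derivability, and \Cref{propSimForm} to convert $\chi^+$-membership into quotient-order facts. For~\ref{itPropsubplOneb} this is straightforward: if $\chi^+(w) \imp \psi \notin \gtl$, pick $\Gamma$ containing $\chi^+(w)$ but not $\psi$; \Cref{propSimForm} then yields $[\Gamma] \leq w$, so $\psi \in \ell(w) \subseteq \ell(\Gamma) = \Gamma \cap \Sigma$, forcing the contradiction $\psi \in \Gamma$.

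For~\ref{itPropsubplOne}, I would pick $\Gamma$ containing $\chi^+(w)$ but not $\chi^+(w) \dimp \psi$. Applying the second clause of condition~\ref{type4} of \Cref{definition:type} to the $\lanfull$-type $\Gamma$ forces $\psi \in \Gamma$; together with $\psi \notin \ell(w)$ this gives $\ell(\Gamma) \supsetneq \ell(w)$, and thus $[\Gamma] <_\mathcal Q w$ strictly. Since $\ell(w) \in L(\Gamma)$, I pick $\Gamma^* \in \ptype{}$ with $\Gamma^* \compa \Gamma$ and $\ell(\Gamma^*) = \ell(w)$; local linearity of $\CMod$ together with the strict inequality rules out $\Gamma^* \supseteq \Gamma$ and so leaves $\Gamma^* \subsetneq \Gamma$ as sets. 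Then \Cref{propSimForm} puts $\chi^+(w) \in \Gamma^*$, while $\psi \notin \ell(\Gamma^*) = \ell(w)$ gives $\psi \notin \Gamma^*$; the same clause of condition~\ref{type4} applied to $\Gamma^*$ yields $\chi^+(w) \dimp \psi \in \Gamma^* \subseteq \Gamma$---a contradiction. The hard part here is that the labelled-space axioms of $\CMod$ only furnish one-sided semantic witnesses for $\dimp$-formulas \emph{in $\Sigma$}, whereas $\chi^+(w) \dimp \psi$ is typically not in $\Sigma$; the local-linearity argument sidesteps this by producing the auxiliary $\Gamma^*$ explicitly, so that condition~\ref{type4} of \Cref{definition:type} can be applied directly inside the $\lanfull$-type $\Gamma^*$, after which set inclusion transports the conclusion back to $\Gamma$.

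For~\ref{itPropsubplFive}, I would first observe that $\cqm\Sigma$ (and thus the displayed disjunction) is finite since $\Sigma$ is. Supposing non-derivability, extract $\Gamma$ and set $\Gamma' \eqdef \circop\Gamma = S_\CMod(\Gamma) \in \ptype{}$ (\Cref{lem:welldefined}). The definition of $\circop$ combined with the disjunction clause for $\lanfull$-types gives $\chi^+(v) \notin \Gamma'$ for every $v$ with $w \mathrel{R^+_\mathcal Q} v$. But $[\Gamma] \mathrel{R_\mathcal Q} [\Gamma']$ by construction of $R_\mathcal Q$, hence $[\Gamma] \mathrel{R^+_\mathcal Q} [\Gamma']$; combined with $[\Gamma] \leq w$ from \Cref{propSimForm}, the forth--up confluence of $R^+_\mathcal Q$ established in \Cref{prop:quotient_system} produces some $v$ with $w \mathrel{R^+_\mathcal Q} v$ and $[\Gamma'] \leq v$, whereupon \Cref{propSimForm} forces $\chi^+(v) \in \Gamma'$, closing the contradiction.
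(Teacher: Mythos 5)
Your proposal is correct, and for items~\ref{itPropsubplOne} and~\ref{itPropsubplOneb} it is essentially the paper's own argument in contrapositive packaging: the paper shows directly that any $\Gamma \in \ptype{}$ containing $\chi^+(w)$ must contain the consequent and then invokes \Cref{LemmLind}, whereas you assume a Lindenbaum counterexample and derive a contradiction; the auxiliary type you call $\Gamma^*$ is exactly the representative $\Gamma'$ of $w$ inside $\Gamma$'s linear component that the paper uses, and your explicit justification that $\Gamma^* \subsetneq \Gamma$ (so that condition~\ref{type4} applied inside $\Gamma^*$ transports to $\Gamma$ by inclusion) is the step the paper leaves implicit.

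For item~\ref{itPropsubplFive} you take a genuinely different, though equivalent, route. The paper works entirely with canonical representatives: it picks $\Gamma' \subseteq \Gamma$ with $[\Gamma'] = w$, applies $S_\CMod$ to $\Gamma'$, observes $w \mathrel{R_\mathcal Q} [S_\CMod(\Gamma')]$ by the very definition of $R_\mathcal Q$, and pulls $\nx\chi^+([S_\CMod(\Gamma')])$ back into $\Gamma$ via $\Gamma' \subseteq \Gamma$. You instead apply $S_\CMod$ to $\Gamma$ itself and then invoke the forth--up confluence of $R^+_\mathcal Q$ (from \Cref{prop:quotient_system}) at the quotient level to produce the required successor $v$ of $w$ with $[\circop\Gamma] \leq v$. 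Both are sound; the paper's version is marginally more self-contained (it never needs the confluence of the quotient relation, only the definition of $R_\mathcal Q$), while yours avoids having to select the right representative of $w$ and instead reuses already-established structural properties of $\cqm\Sigma$. Your preliminary remarks --- that the disjunction is finite, that $R_\mathcal Q \subseteq R^+_\mathcal Q$, and that the disjunction clause for types distributes non-membership over the disjuncts of $\circop\Gamma$ --- are all needed and correctly supplied.
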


\proof
\noindent \ref{itPropsubplOne}
Let $\psi\in \Sigma \setminus \ell(w)$. It suffices to show that for an arbitrary $\Gamma\in \ptype{}$, we have that  $\chi^+(w) \in \Gamma$ implies $ \chi^+(w) \dimp \psi\in \Gamma $, for this demonstrates, by \Cref{LemmLind}, that $(\chi^+(w), \chi^+(w) \dimp \psi )$ is inconsistent, giving $ \chi^+(w)\imp (\chi^+(w) \dimp \psi ) \in \gtl$.
From $\chi^+(w) \in \Gamma$ and \Cref{propSimForm} we obtain $\Gamma'\leq \Gamma$ such that $[\Gamma'] = w$, and hence $\chi^+(w) \in \Gamma'$ and $\psi\not\in\Gamma'$, yielding $\chi^+(w) \dimp \psi \in \Gamma $.

\medskip

\noindent \ref{itPropsubplOneb} With similar reasoning as for \ref{itPropsubplOne}, let $\psi\in \ell ({w})$ and $\Gamma\in \ptype{}$, and suppose $\chi^+(w) \in \Gamma$. Let $\Gamma'\leq \Gamma$ with $[\Gamma'] =w$.
Then $\psi\in\Gamma' $, yielding $\psi\in \Gamma$.

\medskip




\noindent \ref{itPropsubplFive} Let $\Gamma$ be such that $\chi^+(w) \in \Gamma$, so that there is $\Gamma'\geq \Gamma$ with $[\Gamma'] = w$.
Then $w\mathrel R [S_\CMod (\Gamma')]$ by definition, and moreover $ \chi^{+}([S_\CMod (\Gamma')]) \in  S_\CMod ( \Gamma')$, which implies that $\nx \chi^{+}([S_\CMod (\Gamma')]) \in \Gamma'$.
Thus $\nx \chi^{+}([S_\CMod (\Gamma')]) \in \Gamma$ since $\Gamma' \subseteq \Gamma$, so that $\nx\bigvee_{w\mathrel R  v} \chi^+(v) \in \Gamma$.
\endproof

The formula $\chi^-_\Sigma$ behaves `order dually', as established below.

\begin{proposition}\label{propsubminus}
Given $w \in \ptype{} / {\sim}$ and $\psi\in \Sigma $:
\begin{enumerate}[label=\arabic*)]
	\item\label{itPropsubOne} If $\psi\in \Sigma \setminus\ell({{w}})$, then
$ \psi\imp \chi^-(w) \in \gtl$.

	\item\label{itPropsubOneb} If $\psi\in \ell({w})$, then
$ \big (\psi \imp \chi^- (w) \big )\imp \chi^- (w) \in \gtl $.


	\item\label{itPropsubFive}
For any $w\in \ptype{} / {\sim}$,
$\displaystyle \nx(\bigwedge _{{{w}} \mathrel R  {{{v}}}  } \chi^-(v)) \imp \chi^-(w) \in \gtl$.

\end{enumerate}
\end{proposition}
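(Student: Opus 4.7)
The plan is to handle each item by the contrapositive Lindenbaum-style strategy already used for \Cref{propsubplus}: to prove $\varphi \imp \psi \in \gtl$, I would suppose the pair $(\{\varphi\}, \{\psi\})$ is consistent, use \Cref{LemmLind} to extend it to a consistent $\lanfull$-type $\Gamma$ with $\varphi \in \Gamma$ and $\psi \notin \Gamma$, and then derive a contradiction. The principal tool throughout is \Cref{propSimForm}\ref{simulability:c2}, which reformulates $\chi^-(w) \notin \Gamma$ as the order-theoretic fact $[\Gamma] \geq w$ in $\cqm\Sigma$. I would also repeatedly use that $\Gamma' \leq \Gamma$ (in $\CMod$) means $\Gamma \subseteq \Gamma'$, so that formulas in $\Gamma$ transfer upward along the order on types.

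For item \ref{itPropsubOne}, $\chi^-(w) \notin \Gamma$ yields $[\Gamma] \geq w$, which by inverse monotonicity of $\ell$ gives $\Gamma \cap \Sigma = \ell(\Gamma) \subseteq \ell(w)$; but $\psi \in \Sigma \cap \Gamma$ while $\psi \notin \ell(w)$, a contradiction. For item \ref{itPropsubOneb}, after obtaining $[\Gamma] \geq w$ I would pick $\Gamma' \leq \Gamma$ with $[\Gamma'] = w$; then $\Gamma \subseteq \Gamma'$, and since $\psi \in \ell(w) = \ell(\Gamma') \subseteq \Gamma'$ while $\psi \imp \chi^-(w) \in \Gamma \subseteq \Gamma'$, condition~\ref{type3a} of \Cref{definition:type} forces $\chi^-(w) \in \Gamma'$, contradicting $[\Gamma'] = w \geq w$ via another application of \Cref{propSimForm}\ref{simulability:c2}.

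Item \ref{itPropsubFive} is the one requiring genuine temporal reasoning, and is where the main work lies. Given $\Gamma$ with $\nx(\bigwedge_{w R v}\chi^-(v)) \in \Gamma$ and $\chi^-(w) \notin \Gamma$, I would again select $\Gamma' \leq \Gamma$ with $[\Gamma'] = w$, so that $\nx(\bigwedge_{w R v}\chi^-(v)) \in \Gamma'$, and hence $\bigwedge_{w R v}\chi^-(v) \in S_\CMod(\Gamma') = \circop\Gamma'$. Setting $v^\ast = [S_\CMod(\Gamma')]$, the construction of $R_\mathcal Q$ gives $w \mathrel R_\mathcal Q v^\ast$, and since the relation $R$ of the canonical quasimodel is $R^+_\mathcal Q \supseteq R_\mathcal Q$, the formula $\chi^-(v^\ast)$ appears as a conjunct and therefore lies in $S_\CMod(\Gamma')$. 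A final appeal to \Cref{propSimForm}\ref{simulability:c2} then yields $[S_\CMod(\Gamma')] \not\geq v^\ast$, contradicting reflexivity. The only subtlety worth flagging is recognising that the canonical successor witness $v^\ast$ automatically lies in the enlarged relation $R^+_\mathcal Q$; nothing about $R^+_\mathcal Q$ beyond the inclusion $R^+_\mathcal Q \supseteq R_\mathcal Q$ is needed for the argument.
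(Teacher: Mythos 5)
Your proposal is correct and follows essentially the same route as the paper's proof: all three items rest on \Cref{LemmLind} together with \Cref{propSimForm}(3), the passage from $[\Gamma]\geq w$ to a witness $\Gamma'\leq\Gamma$ with $[\Gamma']=w$, and, for item 3, the observation that $[S_\CMod(\Gamma')]$ is an $R$-successor of $w$ because $R^+_\mathcal Q\supseteq R_\mathcal Q$. The only difference is presentational — you argue by deriving a contradiction from a Lindenbaum extension, where the paper argues by contraposition — which is immaterial.
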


\proof
\noindent \ref{itPropsubOne}
Assume that $\psi\in \Sigma \setminus \ell ({w})$ and $\psi \in \Gamma$ and write $w=[\Gamma']$.
Then $\psi\not\in \Gamma'$, which means $\Gamma \not\geq \Gamma'$, so $[\Gamma] \not\geq w$. Hence \Cref{propSimForm}(\ref{simulability:c2} implies that $\chi^-(w)\in \Gamma$.

\medskip

\noindent \ref{itPropsubOneb}
Suppose that $\psi\in \ell({w})$ and proceed to prove the claim by contraposition.
If $\chi^-(w) \not\in \Gamma$ for some $\Gamma\in \ptype{}$, then there exists $\Gamma' \leq \Gamma$ such that $w=[\Gamma' ]$.
But then $\chi^-(w)\not\in \Gamma' $ and $\psi\in \Gamma' $, which implies that $ \psi\imp\chi^-(w) \not \in \Gamma' $, and hence also $ \psi\imp\chi^-(w) \not\in \Gamma$, as required.

\medskip


\noindent \ref{itPropsubFive}
Proceed by contraposition.
If $\chi^-(w)\not\in \Gamma$ for some $\Gamma\in \ptype{}$, then there exists $\Gamma' \leq \Gamma$ such that $w=[\Gamma' ]$.
We have that $w \mathrel R [S_\CMod(\Gamma' )]$ by definition.
Letting $v= [S_\CMod(\Gamma' )]$, we have that $\chi^-(v) \not\in  S_\CMod(\Gamma' )$, and hence $\nx \chi^-(v) \not\in \Gamma' $. Then since $\Gamma' \supseteq \Gamma$, we have $\nx \chi^-(v) \not \in   \Gamma$.
Hence $\nx \bigwedge_{w\mathrel R v} \chi^-(v) \not \in \Gamma$.
\endproof

\section{Completeness}\label{SecComp}

In this section, we use the formulas $\chi^\pm_\Sigma$  to show that the relation $R_{\mathcal Q}^+$ on $\cqm\Sigma$ is $\om$-sensible and hence $\cqm\Sigma$ is a quasimodel.
Since validity over the class of quasimodels is equivalent to bi-relational validity, by Propositions~\ref{second} and \ref{falsifies}, and bi-relational validity is equivalent real-valued validity, by \Cref{thm:equal}, completeness will follow.

The following lemma is the first step towards establishing $\om$-sensibility.
Once again, we write $\cqm\Sigma = (\ptype{} / {\sim},\leq,R,\ell)$, and as usual $R^*$ is the transitive, reflexive closure of $R$.

\begin{lemma}\label{syntactic}
If $\Sigma \subseteq\lanfull$ is finite and closed under subformulas and ${{w}}\in\ptype{} / {\sim}$, then
\begin{enumerate}

\item\label{simulate_next_1} $ \bigvee _{w \mathrel R^* v}\chi^+({{{v}}}) \imp \nx \bigvee _{w \mathrel R^* v}\chi^+(v)   \in \gtl$;

\item\label{simulate_next_2} $ \nx \bigwedge _{w \mathrel R^* v}\chi^-(v)\imp \bigwedge _{w \mathrel R^* v}\chi^-({{{v}}}) \in \gtl$;

\item\label{simulate_yesterday_1} $ \bigvee _{v \mathrel R^* w}\chi^+({{{v}}}) \imp \y\bigvee _{v \mathrel R^* w}\chi^+(v)   \in \gtl$;

\item\label{simulate_yesterday_2} $ \y \bigwedge _{v \mathrel R^* w}\chi^-(v)\imp \bigwedge _{v \mathrel R^* w}\chi^-({{{v}}}) \in \gtl$.

\end{enumerate}
\end{lemma}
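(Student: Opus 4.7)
The four items share a common structure and split into two pairs: (1) and (3) are positive statements about $\chi^+$, while (2) and (4) are their dual statements about $\chi^-$; (1) and (2) concern the future modality $\nx$, while (3) and (4) concern the past modality $\y$. In every case, the strategy is to reduce the transitive-closure statement to the one-step propositions already established in \Cref{propsubplus}\ref{itPropsubplFive} and \Cref{propsubminus}\ref{itPropsubFive}.

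For (1), it suffices to prove, for each fixed $u$ with $w \mathrel{R^*} u$, the implication $\chi^+(u) \imp \nx \bigvee_{w \mathrel{R^*} v} \chi^+(v) \in \gtl$; the full claim then follows by disjunction elimination on the left. \Cref{propsubplus}\ref{itPropsubplFive} supplies $\chi^+(u) \imp \nx \bigvee_{u \mathrel R v} \chi^+(v) \in \gtl$, and since $w \mathrel{R^*} u$ together with $u \mathrel R v$ forces $w \mathrel{R^*} v$, the implication $\bigvee_{u \mathrel R v} \chi^+(v) \imp \bigvee_{w \mathrel{R^*} v} \chi^+(v)$ is an intuitionistic tautology. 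Pushing it under $\nx$ via necessitation (\ref{ax14NecCirc}) and the K-axiom (\ref{ax05KNext}) yields $\nx \bigvee_{u \mathrel R v} \chi^+(v) \imp \nx \bigvee_{w \mathrel{R^*} v} \chi^+(v) \in \gtl$, which composed with the one-step formula gives what we want; the reflexive case $u = w$ is handled identically. Claim (2) follows by the dual argument: \Cref{propsubminus}\ref{itPropsubFive} gives $\nx \bigwedge_{u \mathrel R v} \chi^-(v) \imp \chi^-(u) \in \gtl$, and $\bigwedge_{w \mathrel{R^*} v} \chi^-(v) \imp \bigwedge_{u \mathrel R v} \chi^-(v)$ is tautological, so $\nx$-necessitation and \ref{ax05KNext} yield $\nx \bigwedge_{w \mathrel{R^*} v} \chi^-(v) \imp \chi^-(u)$; conjoining over all $u$ with $w \mathrel{R^*} u$ finishes (2).

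For (3) and (4), the plan is to first establish past-tense analogues of the one-step propositions, namely $\chi^+(w) \imp \y \bigvee_{v \mathrel R w} \chi^+(v) \in \gtl$ and $\y \bigwedge_{v \mathrel R w} \chi^-(v) \imp \chi^-(w) \in \gtl$. These follow by arguments symmetric to the proofs of \Cref{propsubplus}\ref{itPropsubplFive} and \Cref{propsubminus}\ref{itPropsubFive}, using \Cref{lemma:bijection}: given $\Gamma$ with $\chi^+(w) \in \Gamma$, pick $\Gamma' \geq \Gamma$ (so $\Gamma' \subseteq \Gamma$) with $[\Gamma'] = w$; then the type $\yop \Gamma' = S_\CMod^{-1}(\Gamma')$ satisfies $[\yop \Gamma'] \mathrel R w$, and \Cref{propSimForm}\ref{simulability:c1} gives $\chi^+([\yop \Gamma']) \in \yop \Gamma'$, i.e.\ $\y \chi^+([\yop \Gamma']) \in \Gamma' \subseteq \Gamma$. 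The argument for $\chi^-$ proceeds by contraposition in the same manner as \Cref{propsubminus}\ref{itPropsubFive}. Once these past one-step formulas are in place, (3) and (4) are derived exactly as (1) and (2) were, with $\nx$ replaced by $\y$ (using \ref{ax14NecYesterday} and \ref{ax05KNext'} for the distribution steps) and $R$-successors of $u$ replaced by $R$-predecessors.

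The only obstacle worth flagging is bookkeeping: ensuring one uses that the antecedent of the desired implication ranges over the reflexive, transitive closure of $R$ (so the reflexive case $u = w$ must be treated uniformly with the others) and correctly distributing $\nx$ and $\y$ over disjunctions and conjunctions via the K-axioms and necessitation. No new modal-logical content is required beyond the one-step propositions and their past-tense analogues.
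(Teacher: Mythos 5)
Your proof is correct and follows essentially the same route as the paper: reduce each transitive-closure statement to the one-step formulas of \Cref{propsubplus}(\ref{itPropsubplFive} and \Cref{propsubminus}(\ref{itPropsubFive}, weaken the inner disjunction/conjunction using $v \mathrel R u \Rightarrow w \mathrel{R^*} u$, push that weakening under the modality by necessitation and the K-axiom, and then combine over all members of $R^*(w)$. The paper dismisses items (3) and (4) as ``temporal duals''; your explicit construction of the past-tense one-step analogues via $S_\CIcon^{-1}$ and \Cref{lemma:bijection} is exactly the intended dualisation, just spelled out.
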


\proof
Item~\ref{simulate_next_1} follows from \Cref{propsubplus}(\ref{itPropsubplFive}, as for any $v\in R^*(w)$ we have that
\[ \chi^+({{{v}}}) \imp \nx \bigvee _{v \mathrel R  u}\chi^+(u)  \in \gtl .\]
Since $v \mathrel R  u$ implies that $w \mathrel R^ *  u$ by transitivity,
\[ \chi^+({{{v}}}) \imp \nx \bigvee _{w \mathrel R^*  u}\chi^+(u)  \in \gtl .\]
Since $v$ was arbitrary, we obtain
\[ \bigvee _{w \mathrel R^* v}\chi^+({{{v}}}) \imp \nx \bigvee _{w \mathrel R^* u}\chi^+(u) \in \gtl,  \]
which by a change of variables yields the original claim.

Item~\ref{simulate_next_2} is similar, but uses \Cref{propsubminus}(\ref{itPropsubFive}. Items~\ref{simulate_yesterday_1} and \ref{simulate_yesterday_2} are the temporal duals of items~\ref{simulate_next_1} and \ref{simulate_next_2}.
\endproof

In order to complete our proof that $R$ is $\om$-sensible, it suffices to apply induction to the formulas of \Cref{syntactic}.

\begin{proposition}\label{tempinc}\
\begin{enumerate}

\item\label{ittempincone} If ${{w}}\in\ptype{} / {\sim}$ and $\varphi \until \psi \in \ell ({{w}})$, then there exists ${{{v}}}\in{R^*}({{w}})$ such that $\psi\in \ell ({{{v}}})$.

\item\label{ittempincone2} If ${{w}}\in\ptype{} / {\sim}$ and $\varphi \since \psi \in \ell ({{w}})$, then there exists ${{{v}}}\in(R^{-1})^{*}({{w}})$ such that $\psi\in \ell ({{{v}}})$.

\item\label{ittempinctwo} If ${{w}}\in\ptype{} / {\sim}$ and $\nec \psi\in \Sigma \setminus \ell ({{w}})$, then there exists ${{{v}}}\in{R^*}({{w}})$ such that $\psi\not\in \ell ({{{v}}})$.

\item\label{ittempinctwo2} If ${{w}}\in\ptype{} / {\sim}$ and $\has \psi\in \Sigma \setminus \ell ({{w}})$, then there exists ${{{v}}}\in{(R^{-1})^*}({{w}})$ such that $\psi\not\in \ell ({{{v}}})$.

\end{enumerate}
\end{proposition}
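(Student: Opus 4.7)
The plan is to prove all four items by contradiction, using the induction axioms of $\gtl$ together with the characteristic-formula properties of \Cref{propSimForm} to convert the negated conclusion into a syntactic contradiction at any $\Gamma\in\ptype{}$ with $[\Gamma]=w$. Throughout, finiteness of $\Sigma$ combined with \Cref{quasi:bound} ensures that $\ptype{}/{\sim}$, and hence $R^{*}(w)$, is finite, so every (dis)junction indexed by it defines a bona fide $\lanfull$-formula.

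I give the full argument for item~\ref{ittempincone} ($\until$). Assume $\varphi\until\psi\in\ell(w)$ and, toward contradiction, that $\psi\notin\ell(v)$ for every $v\in R^{*}(w)$. Set $\gamma\eqdef\bigwedge_{v\in R^{*}(w)}\chi^{-}(v)$. By \Cref{syntactic}\ref{simulate_next_2}, $\nx\gamma\imp\gamma\in\gtl$, hence $\varphi\wedge\nx\gamma\imp\gamma\in\gtl$; applying \ref{ax14NecBox} and axiom \ref{ax12:ind:2} yields $\varphi\until\gamma\imp\gamma\in\gtl$. On the other hand, because $\psi\in\Sigma\setminus\ell(v)$ for each $v\in R^{*}(w)$, \Cref{propsubminus}\ref{itPropsubOne} gives $\psi\imp\chi^{-}(v)\in\gtl$, whence $\psi\imp\gamma\in\gtl$; an application of \ref{ax14NecBox} followed by \ref{ax07:K:Dual} upgrades this to $\varphi\until\psi\imp\varphi\until\gamma\in\gtl$, and chaining gives $\varphi\until\psi\imp\gamma\in\gtl$. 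Now take any $\Gamma\in\ptype{}$ with $[\Gamma]=w$: then $\varphi\until\psi\in\Gamma$, so $\gamma\in\Gamma$, and hence $\chi^{-}(w)\in\Gamma$ (the conjunct corresponding to $v=w$, using reflexivity of $R^{*}$). But $[\Gamma]\ge w$, so \Cref{propSimForm}\ref{simulability:c2} forces $\chi^{-}(w)\notin\Gamma$ --- a contradiction.

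Item~\ref{ittempincone2} is the temporal dual of item~\ref{ittempincone}: swap $R^{*}$ for $(R^{-1})^{*}$, swap \Cref{syntactic}\ref{simulate_next_2} for \Cref{syntactic}\ref{simulate_yesterday_2}, and swap axioms \ref{ax12:ind:2}, \ref{ax07:K:Dual} (and \ref{ax14NecBox}) for their past counterparts \ref{ax12:ind:2'}, \ref{ax07:K:Dual'} (and \ref{ax14NecBox'}). For items~\ref{ittempinctwo} and~\ref{ittempinctwo2} the polarity flips: assume for contradiction that $\psi\in\ell(v)$ for every $v\in R^{*}(w)$ (resp.~$v\in(R^{-1})^{*}(w)$), set $\gamma\eqdef\bigvee_{v\in R^{*}(w)}\chi^{+}(v)$ (resp.~its past analogue), use \Cref{syntactic}\ref{simulate_next_1} (resp.~\ref{simulate_yesterday_1}) to obtain $\gamma\imp\nx\gamma\in\gtl$, and then axiom \ref{ax11:ind:1} (resp.~\ref{ax11:ind:1'}) to get $\gamma\imp\nec\gamma\in\gtl$; \Cref{propsubplus}\ref{itPropsubplOneb} gives $\chi^{+}(v)\imp\psi\in\gtl$ for each relevant $v$, hence $\gamma\imp\psi\in\gtl$, which combined with necessitation and \ref{ax06KBox} (resp.~\ref{ax06KBox'}) yields $\gamma\imp\nec\psi\in\gtl$. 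For $\Gamma$ with $[\Gamma]=w$, \Cref{propSimForm}\ref{simulability:c1} gives $\chi^{+}(w)\in\Gamma$ (since $[\Gamma]\le w$), hence $\gamma\in\Gamma$, forcing $\nec\psi\in\Gamma$ and contradicting the standing assumption $\nec\psi\in\Sigma\setminus\ell(w)$.

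The only delicate step in the whole argument is choosing the right polarity of characteristic formula in the invariant $\gamma$: $\chi^{-}$ for the $\until,\since$ cases (where we derive that an unwelcome formula is forced into $\Gamma$) and $\chi^{+}$ for the $\nec,\has$ cases (where the dual happens), so that the matching clause of \Cref{propSimForm} delivers the contradiction at $[\Gamma]=w$. Once this choice is made, the induction axioms together with the syntactic lemmas from Sections \ref{secCan} and \ref{SecChar} combine mechanically.
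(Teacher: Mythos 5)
Your proof is correct and follows essentially the same strategy as the paper's: argue by contradiction, form the invariant $\gamma$ from the characteristic formulas of $R^*(w)$ (resp.\ $(R^{-1})^*(w)$), use \Cref{syntactic} together with the induction axioms \ref{ax12:ind:2}/\ref{ax11:ind:1} and the distribution axioms to derive $\varphi\until\psi\imp\gamma$ (resp.\ $\gamma\imp\nec\psi$) in $\gtl$, and then contradict \Cref{propSimForm} at a representative $\Gamma$ of $w$. The only (harmless) divergence is in the endgame: the paper detours through $\ps\psi$ via \Cref{lemmReverse}\ref{itUtoF} and closes the argument purely syntactically using \Cref{propsubminus}\ref{itPropsubOneb} (resp.\ \Cref{propsubplus}\ref{itPropsubplOne} plus the co-implication rules) to derive an outright inconsistency, whereas you work with $\varphi\until\gamma$ directly and extract the contradiction by tracking membership in the consistent type $\Gamma$ itself, which is slightly more economical.
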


\proof
\noindent\textbf{\ref{ittempincone}.} Towards a contradiction, assume that ${{w}}\in \ptype{} / {\sim}$ and $\varphi \until \psi\in \ell ({{w}})$ but, for all ${{{v}}}\in{R^*}({{w}})$, $\psi \not\in \ell({{v}})$.
By \Cref{syntactic}\eqref{simulate_next_2}, $ \nx (\bigwedge \limits_{w \mathrel R^* v} \chi^-({{{v}}}))\imp \bigwedge\limits_{w \mathrel R^* v} \chi^-({{{v}}}) \in \gtl$.
By the $\ps$-induction axiom \ref{ax12:ind:2} and standard modal reasoning, $ \ps (\bigwedge \limits_{w \mathrel R^* v} \chi^-({{{v}}}))\imp \bigwedge\limits_{w \mathrel R^* v} \chi^-({{{v}}}) \in \gtl$;
in particular,
\begin{equation}\label{other}
 \ps( \bigwedge _{w\mathrel R^* v} \chi^-({{{v}}}))\imp \chi^-({{w}}) \in \gtl.
\end{equation}

Now let ${{{v}}}\in{R^*}({{w}})$.
By \Cref{propsubminus}(\ref{itPropsubOne} and the assumption that $\psi \not\in \ell({{{v}}})$, we have that
$ \psi \imp \chi^-({{{v}}})  \in \gtl$,
and since ${{{v}}}$ was an arbitrary element of ${R^*}({{w}})$, we have
$ \psi \imp \bigwedge_{w\mathrel R^* v}\chi^-({{{v}}})  \in \gtl$.
Using $\nec$-necessitation and the distributivity axiom \ref{ax07:K:Dual} we then have that
$ \ps \psi \imp \ps \bigwedge_{w\mathrel R^* v}\chi^-({{{v}}}) \in \gtl$.
This, along with \eqref{other}, shows that
$ \ps \psi \imp \chi^-({{w}}) \in \gtl$.
Since by \Cref{lemmReverse}\ref{itUtoF}, $\varphi \until \psi \imp \ps \psi$, 
 we can obtain $\varphi \until \psi \imp \chi^-({{w}}) \in \gtl$.
However, by \Cref{propsubminus}(\ref{itPropsubOneb} and our assumption that $\varphi \until \psi\in \ell ({{w}})$ we have that
$ \big ( \varphi \until \psi \imp \chi^-({{w}}) \big ) \imp \chi^-(w) \in \gtl$.
Hence by modus ponens we obtain $ \chi^-({{w}}) \in \gtl$.
Writing $w=[\Gamma]$, \Cref{propSimForm} yields $\chi^-({{w}}) \notin\Gamma^+$, but this contradicts $ \chi^-({{w}}) \in \gtl$.
We conclude that there exists ${{{v}}}\in{R^*}({{w}})$ with $\psi \in \ell({{v}})$, as needed.

\medskip

\noindent\textbf{\ref{ittempincone2}.} This is the temporal dual of item~\ref{ittempincone} (so in particular we use \Cref{syntactic}\eqref{simulate_yesterday_2} in place of \Cref{syntactic}\eqref{simulate_next_2}).

\medskip

\noindent\textbf{\ref{ittempinctwo}.}
This is similar to the $\until$ case, but order dualised.
Towards a contradiction, assume that ${{w}}\in \ptype{} / {\sim}$ and $\nec \psi\in \Sigma \setminus\ell({{w}})$, but for all ${{{v}}}\in{R^*}({{w}})$, $\psi \in \ell({{v}})$.
By \Cref{syntactic}\eqref{simulate_next_1}, $ \bigvee \limits_{w \mathrel R^* v} \chi^+({{{v}}}) \imp \nx \bigvee \limits_{w \mathrel R^* v} \chi^+({{{v}}}) \in \gtl$.
By the $\nec$-induction axiom \ref{ax11:ind:1}, $ \bigvee \limits_{w \mathrel R^* v} \chi^+({{{v}}})\imp\nec \bigvee \limits_{w \mathrel R^* v} \chi^+({{{v}}}) \in \gtl$;
in particular,
\begin{equation}\label{otherb}
\chi^+({{w}}) \imp \nec  \bigvee _{w\mathrel R^*v} \chi^+({{{v}}}) \in \gtl .
\end{equation}

Now let ${{{v}}}\in{R^*}({{w}})$.
By \Cref{propsubplus}(\ref{itPropsubplOneb} and the assumption that $\psi \in \ell({{{v}}})$, we have that
$ \chi^+({{{v}}}) \imp \psi  \in \gtl$,
and since ${{{v}}}$ was arbitrary,
$ \bigvee_{w\mathrel R^* v}\chi^+({{{v}}})\imp \psi  \in \gtl$.
Using the distributivity axiom \ref{ax06KBox} we then have that
$  \nec \bigvee_{w\mathrel R^* v}\chi^+({{{v}}}) \imp \nec \psi \in \gtl$.
This, along with \eqref{otherb}, shows that $\chi^+({{w}}) \imp \nec \psi \in \gtl$. Hence by Rule~\ref{axDimpMon} we obtain
\begin{equation}\label{eqPlusBox}
  (\chi^+({{w}})\dimp\nec \psi) \imp (\nec\psi\dimp\nec\psi) \in \gtl.
\end{equation}
By \Cref{propsubplus}(\ref{itPropsubplOne} and our assumption that $\nec \psi\in \Sigma \setminus\ell({{w}})$ we have that
$ \chi^+(w) \imp \big (  \chi^+({{w}}) \dimp \nec\psi \big ) \in \gtl$. From this together with \eqref{eqPlusBox} we obtain $\chi^+({{w}}) \imp (\nec\psi\dimp\nec\psi) \in \gtl$. In view of  \Cref{lemmReverse}\ref{itNotRef}, this gives $ \chi^+(w) \imp \bot \in \gtl$, so that the pair $ (\chi^+(w), \varnothing)$ is inconsistent.
Writing $w=[\Gamma]$, \Cref{propSimForm} yields $\chi^+({{w}}) \in\Gamma$, contradicting the deduced inconsistency. We conclude that there exists ${{{v}}}\in{R^*}({{w}})$ with $\psi \in\Sigma\setminus \ell({{v}})$.

\medskip

\noindent\textbf{\ref{ittempinctwo2}.} This is the temporal dual of item~\ref{ittempinctwo} (so in particular we use \Cref{syntactic}\eqref{simulate_yesterday_1} in place of \Cref{syntactic}\eqref{simulate_next_1}).
\endproof

\begin{corollary}\label{laststretch}
If $\Sigma \subseteq\lanfull$ is finite and closed under subformulas, then $\cqm\Sigma$ is a quasimodel.
\end{corollary}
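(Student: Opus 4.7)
The plan is to collect the work done in the preceding two sections. By Proposition \ref{prop:CisW}, the restriction of $\CMod$ to the labelling $\ell_\CMod(\Phi) = \Phi \cap \Sigma$ is a $\Sigma$-labelled system, and by Proposition \ref{prop:quotient_system}, the quotient $\cqm\Sigma = (\ptype{} / {\sim},\, \leq_\mathcal Q,\, \ell_\mathcal Q,\, R^+_\mathcal Q)$ is then a $\Sigma$-labelled system as well. So the only thing left to verify for $\cqm\Sigma$ to be a $\Sigma$-quasimodel (\Cref{compatible}) is that $R^+_\mathcal Q$ is $\om$-sensible.

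To check $\om$-sensibility, I would simply read off the four required conditions from \Cref{tempinc}. Writing $R$ for $R^+_\mathcal Q$ as in the preceding section, the four items of \Cref{tempinc} are, respectively: if $\varphi \until \psi \in \ell(w)$ then some $v \in R^*(w)$ has $\psi \in \ell(v)$; if $\varphi \since \psi \in \ell(w)$ then some $v \in (R^{-1})^*(w)$ has $\psi \in \ell(v)$; if $\nec\psi \in \Sigma \setminus \ell(w)$ then some $v \in R^*(w)$ has $\psi \in \Sigma \setminus \ell(v)$; and if $\has\psi \in \Sigma \setminus \ell(w)$ then some $v \in (R^{-1})^*(w)$ has $\psi \in \Sigma \setminus \ell(v)$. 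Since $v \in R^*(w)$ means $w \mathrel R^n v$ for some $n \geq 0$, and similarly for $R^{-1}$, these are exactly the four defining clauses of $\om$-sensibility.

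There is no real obstacle at this stage; all the substance has been absorbed into the characteristic-formula machinery of \Cref{SecChar} (which lets properties of worlds in $\cqm\Sigma$ be internalised as $\lanfull$-formulas) and into the induction arguments of \Cref{tempinc} (which combine those characteristic formulas with the induction axioms \ref{ax11:ind:1}, \ref{ax11:ind:1'}, \ref{ax12:ind:2}, \ref{ax12:ind:2'} to rule out infinite unrealised eventualities or unwitnessed box-failures). The corollary is therefore immediate from Propositions \ref{prop:quotient_system} and \ref{tempinc}.
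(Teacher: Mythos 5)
Your proof is correct and follows exactly the paper's own route: \Cref{prop:CisW} and \Cref{prop:quotient_system} give that $\cqm\Sigma$ is a $\Sigma$-labelled system, and \Cref{tempinc} supplies precisely the four clauses of $\om$-sensibility from \Cref{compatible}. Nothing is missing.
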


\begin{proof}
By \Cref{prop:CisW} and \Cref{prop:quotient_system}, $ \cqm \Sigma$ is a labelled system, while by \Cref{tempinc}, $R$ is $\om$-sensible.
By definition, these facts make $ \cqm \Sigma$ a quasimodel.
\end{proof}

 We are now ready to prove that our calculus is complete.

\begin{theorem}\label{theocomp}
If $\varphi \in \lanfull$ is valid, then $ \varphi \in \gtl$.
\end{theorem}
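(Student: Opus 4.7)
The plan is to prove the contrapositive: if $\varphi \notin \gtl$, then $\varphi$ is falsifiable under both semantics. The key idea is that our work so far gives us, for any subformula-closed $\Sigma$, a canonical quasimodel $\cqm\Sigma$ that faithfully reflects $\gtl$-consistency; once we have a falsification in $\cqm\Sigma$, the machinery of the preceding sections transports it to a falsifying bi-relational (and hence real-valued) model.

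First I would let $\Sigma$ be the (finite) set of subformulas of $\varphi$ and assume $\varphi \notin \gtl$. Since $\varnothing \not\vdash \varphi$, by the Lindenbaum Lemma (\Cref{LemmLind}) there is a consistent $\Sigma$-type $\Phi_0$ with $\varphi \notin \Phi_0$; extending $(\Phi_0, \Sigma\setminus\Phi_0)$ again by \Cref{LemmLind}, we obtain a consistent $\lanfull$-type $\Phi \in \ptype{}$ with $\varphi \notin \Phi$. Now consider the canonical quasimodel $\cqm\Sigma = (\ptype{}/{\sim}, \leq_\mathcal Q, R_\mathcal Q^+, \ell_\mathcal Q)$, which by \Cref{laststretch} is indeed a $\Sigma$-quasimodel. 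Its labelling restricted to $\Sigma$ satisfies $\ell_\mathcal Q([\Phi]) = \Phi \cap \Sigma$, so $\varphi \in \Sigma \setminus \ell_\mathcal Q([\Phi])$; that is, $\cqm\Sigma$ falsifies $\varphi$.

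Next, applying \Cref{second} to the finite, subformula-closed $\Sigma$ and the quasimodel $\cqm\Sigma$, we extract a bi-relational model $\lm_\varphi$ that falsifies $\varphi$. Thus $\varphi \notin \gtlrel$, and by \Cref{thm:equal} also $\varphi \notin \gtlreal$, so $\varphi$ is not valid in either semantics; this is the desired contradiction with the assumption that $\varphi$ is valid.

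The main obstacle has already been handled upstream, namely the verification that $R_\mathcal Q^+$ is $\omega$-sensible (\Cref{tempinc}), which requires the characteristic formulas $\chi^\pm_\Sigma$ together with the induction axioms \ref{ax11:ind:1}, \ref{ax12:ind:2}, and their past duals. For the theorem itself the only remaining subtlety is making sure that a consistent $\Sigma$-type really does extend to a consistent $\lanfull$-type (so that $[\Phi]$ lives in the domain of $\cqm\Sigma$) and that the restricted labelling of $\cqm\Sigma$ coincides with $\Phi \cap \Sigma$ at $[\Phi]$; both are immediate from the definitions and \Cref{LemmLind}, so the proof is essentially an assembly of the earlier results.
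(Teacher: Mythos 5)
Your proposal is correct and follows essentially the same route as the paper's proof: take the contrapositive, use the Lindenbaum lemma to find a consistent type omitting $\varphi$, observe that its class in the canonical quasimodel $\cqm\Sigma$ (a genuine quasimodel by \Cref{laststretch}) falsifies $\varphi$, and then apply \Cref{second} to obtain a falsifying bi-relational model. The paper states this more tersely but the decomposition and the lemmas invoked are the same.
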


\proof
We prove the contrapositive.
Suppose $\varphi$ is an unprovable formula and let $\Sigma$ be the set of subformulas of $\varphi$.
Since $\varphi$ is unprovable, there is $\Gamma\in \ptype{}$ with $\varphi\not\in\Gamma$. Then $[\Gamma] \in \ptype{} / {\sim}$ is a point in a quasimodel falsifying $\varphi$, so that by \Cref{second}, $\varphi$ is not valid.
\endproof

\section{PSPACE-complete complexity}\label{Sec:PSPACE}

The proof of \Cref{theorem:decide} yields only a \textsc{nexptime} upper bound for the validity problem.
In this section, we prove that this can be improved to \textsc{pspace}.

First, we recall that the validity problem for $\ltl$ is \textsc{pspace}-com\-plete~\cite[Theorem 4.1]{SK85}. Thus to prove \textsc{pspace}-hardness of the $\gtl$ validity problem, it suffices to give a reduction from $\ltl$ validity to $\gtl$ validity.
Consider the (negative) translation $(\,\cdot\,)^{\bullet}$~\cite{91986055} defined as follows:
 \begin{enumerate}[label=\arabic*)]
 	\item $p^{\bullet} = \neg \neg p$, for each propositional variable $p$, and
 	\item homomorphic for the operators.
\end{enumerate}

In what follows we may assume that $T=\mathbb Z$, equipped with the standard successor function.

To any given $\ltl$ model $(\mathbb{Z},S,V)$,
 we associate a `crisp' G\"odel model $(\mathbb{Z},S,V')$ where $V'(p, t) = 1$ if $t \in V(p)$ and $0$ otherwise. 
 
 \begin{proposition}\label{prop1} For any $\varphi \in \mathcal L$ and for all $t \in \mathbb{Z}$, 
 	\begin{enumerate}[label=\arabic*)]
 		\item if $(\mathbb{Z},S,V), t \models \varphi$ then $V'(\varphi,t) = 1$, and
 		\item if $(\mathbb{Z},S,V), t \not \models \varphi$ then $V'(\varphi,t) = 0$.
 	\end{enumerate}
 \end{proposition}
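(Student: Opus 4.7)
The plan is to prove the two statements simultaneously by structural induction on $\varphi$. The base case is immediate from the definition of $V'$: if $t \in V(p)$ then $(\mathbb Z,S,V),t\models p$ and $V'(p,t)=1$; otherwise both fail. For the induction, it is convenient to observe that the conjunction of (1) and (2) is equivalent to saying that $V'(\varphi,t)\in\{0,1\}$ and that this value is $1$ exactly when $(\mathbb Z,S,V),t\models\varphi$, so we can carry the stronger statement through the induction.

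For the propositional connectives, the arguments are routine case splits using the inductive hypothesis and \Cref{DefRSem}: for $\wedge$ and $\vee$ the $\min$ and $\max$ operations on $\{0,1\}$ reproduce classical conjunction and disjunction; for $\imp$, if classically $t\models\psi\imp\chi$ then either $V'(\psi,t)=0$ or $V'(\chi,t)=1$, so in either case $V'(\psi,t)\leq V'(\chi,t)$ and the G\"odel clause gives $V'(\psi\imp\chi,t)=1$; conversely if $t\not\models\psi\imp\chi$ then $V'(\psi,t)=1$ and $V'(\chi,t)=0$, so the G\"odel clause gives $V'(\psi\imp\chi,t)=0$. If $\dimp$ is considered part of $\mathcal L$, its case is analogous (using that $V'(\psi\dimp\chi,t)=1$ iff $V'(\psi,t)>V'(\chi,t)$ iff $V'(\psi,t)=1$ and $V'(\chi,t)=0$, which matches the classical reading $\psi\wedge\neg\chi$).

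The temporal connectives $\nx$ and $\y$ follow immediately, as $V'(\nx\varphi,t)=V'(\varphi,S(t))$ and similarly for $\y$. For $\nec\varphi$, classical truth at $t$ means $\varphi$ holds at every $S^n(t)$, so by the inductive hypothesis every $V'(\varphi,S^n(t))$ equals $1$, whence the infimum is $1$; if classical truth fails, some $V'(\varphi,S^n(t))$ equals $0$, so the infimum is $0$. The case of $\has$ is symmetric. For $\until$ and $\since$, note that since all intermediate values lie in $\{0,1\}$, the $\min$ inside the $\sup$ of \Cref{DefRSem} is $1$ iff every summand is $1$, which matches the classical `$\varphi$ holds until $\psi$' clause; thus $V'(\varphi\until\psi,t)=1$ when some witnessing $n$ exists and $0$ otherwise.

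The only mild subtlety, which I would flag explicitly once and then suppress, is that the two alternatives $V'(\varphi,t)=1$ and $V'(\varphi,t)=0$ are exhaustive \emph{because} the inductive hypothesis is being applied in its combined form; without this, one might worry that intermediate real values could sneak in through the $\imp$ or $\until$ clauses. With the strengthened inductive statement there is no real obstacle, and each case amounts to unwinding the G\"odel clause on the two-valued set $\{0,1\}$.
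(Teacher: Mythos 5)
Your proof is correct and takes the same route as the paper, which simply states ``by structural induction on $\varphi$'' and leaves the case analysis implicit; your strengthened induction hypothesis (that $V'(\varphi,t)\in\{0,1\}$ with value $1$ exactly at classically true points) is precisely the right way to carry that induction through. No further comment is needed.
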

 
 \begin{proof}
 By structural induction on $\varphi$.
 \end{proof}
 
 Conversely, to any given real-valued G\"odel temporal model $(\mathbb{Z},S,V')$, we associate the crisp G\"odel temporal model $(\mathbb{Z},S,V)$ by fixing $V(p, t) = V'(\neg \neg p, t) \in \lbrace 0,1 \rbrace$. 
 \begin{proposition}\label{prop2} For any $\varphi \in \mathcal L$ and any $t \in \mathbb{Z}$, we have $V'(\varphi^\bullet,t)=V(\varphi,t)$.\end{proposition}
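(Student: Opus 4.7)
The plan is to prove the proposition by structural induction on $\varphi$, simultaneously establishing the stronger statement that $V'(\varphi^\bullet, t) = V(\varphi, t)$ \emph{and} that this common value lies in $\{0,1\}$. The extra crispness invariant is essential for the inductive step, because it lets us identify each G\"odel truth function with its two-valued restriction.

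The base case $\varphi = p$ is immediate from the definitions: $V(p, t) = V'(\neg \neg p, t) = V'(p^\bullet, t)$, and $V'(\neg \neg p, t) \in \{0,1\}$ because $\neg \psi := \psi \imp \bot$ always takes values in $\{0,1\}$ under the G\"odel semantics, regardless of $\psi$.

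For the inductive step, note that the translation is homomorphic on every connective, so $\varphi^\bullet$ has the form $\psi^\bullet \circ \chi^\bullet$ (or the appropriate unary/temporal analogue). The G\"odel clauses of \Cref{DefRSem} then express $V'(\varphi^\bullet, t)$ as a fixed function of values $V'(\psi^\bullet, S^k(t))$ and $V'(\chi^\bullet, S^k(t))$ at various indices $k$. By the inductive hypothesis each such value equals $V(\psi, S^k(t))$ or $V(\chi, S^k(t))$ and belongs to $\{0,1\}$. A quick case check then shows that every G\"odel clause collapses to its classical counterpart on $\{0,1\}$-valued inputs: $\min$ and $\max$ give boolean $\wedge,\vee$; the clause for $\imp$ outputs $1$ unless the antecedent is $1$ and the consequent $0$; the clause for $\dimp$ outputs $1$ iff the left side is $1$ and the right side is $0$; and the infima/suprema defining $\nec, \has, \until, \since$ reduce to classical universal/existential quantifications over $\mathbb{Z}$. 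This shows that the computations of $V'(\varphi^\bullet, t)$ and of $V(\varphi, t)$ proceed in lockstep, so they coincide, and the common value remains in $\{0,1\}$.

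The argument contains no real obstacle: every G\"odel connective's semantics preserves $\{0,1\}$-valuedness and matches the classical interpretation on crisp inputs. The only mildly subtle point is the unbounded $\inf$/$\sup$ appearing in the clauses for $\nec, \has, \until, \since$, but an infimum of values from $\{0,1\}$ equals $0$ iff some value is $0$ (dually for suprema), so crispness survives these infinitary operations and the proof carries through uniformly.
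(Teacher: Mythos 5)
Your proof is correct and follows the same basic route as the paper: structural induction, with the base case immediate from the definition of $V$ and the inductive step driven by the homomorphicity of $(\,\cdot\,)^\bullet$. The one point worth flagging is your claim that the crispness invariant is \emph{essential} for the inductive step: it is not. Since $V$ is itself a real (G\"odel) valuation in the sense of \Cref{DefRSem}, the quantities $V'(\varphi^\bullet,t)$ and $V(\varphi,t)$ satisfy literally the same recursion for every compound connective (the clause for $\psi^\bullet \circ \chi^\bullet$ under $V'$ is the same function of the subformulas' values as the clause for $\psi \circ \chi$ under $V$), so equality follows from the induction hypothesis alone, with no case analysis and no need to know the values are Boolean. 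The collapse of the G\"odel clauses to their classical counterparts on $\{0,1\}$-valued inputs is exactly what is needed one step later, in the corollary, to reinterpret the crisp G\"odel model as an $\ltl$ model; folding it into this proposition does no harm, but it is extra work here rather than a prerequisite.
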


 \begin{proof} By structural induction. 
The case $\varphi\in\mathbb P$ follows from the definitions: $V'(p^\bullet,t)  = V'(\neg \neg p,\allowbreak t)\allowbreak = V(p, t)$. The other cases follow from $({\,\cdot\,})^\bullet$ being homomorphic.
 \end{proof}

 \begin{corollary} For any $\varphi \in \mathcal L$, we have $\ltl \models \varphi$ if and only if ${\gtl} \models \varphi^\bullet$. \end{corollary}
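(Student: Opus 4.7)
The plan is to derive both directions as immediate consequences of Propositions \ref{prop1} and \ref{prop2}, treating the two propositions as the two `bridges' between the classical $\ltl$ semantics on $(\mathbb Z, S)$ and the real-valued $\gtl$ semantics on the same flow. The guiding observation is that whenever a real-valued model $V'$ and a classical LTL model $V$ are paired under the constructions given before the propositions, one has the clean identity $V'(\varphi^\bullet, t) = V(\varphi, t)$ for every $\varphi \in \mathcal L$ and every $t \in \mathbb Z$.

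For the forward implication, I would argue by contrapositive. Assume $\gtl \not\models \varphi^\bullet$, and pick a real-valued G\"odel model $(\mathbb Z, S, V')$ and a time $t$ with $V'(\varphi^\bullet, t) < 1$. Form the associated classical model $V$ given by $V(p,t) = V'(\neg\neg p, t) \in \{0,1\}$, and invoke \Cref{prop2} to conclude $V(\varphi, t) = V'(\varphi^\bullet, t) < 1$. Because $V$ is two-valued, this forces $V(\varphi, t) = 0$, i.e.\ $(\mathbb Z, S, V), t \not\models \varphi$, and hence $\ltl \not\models \varphi$.

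For the converse direction I would similarly argue by contrapositive. Assume $\ltl \not\models \varphi$ and take an LTL countermodel $(\mathbb Z, S, V)$ at some $t$. Form the crisp G\"odel model $V'$ associated to $V$ as described just before \Cref{prop1}. The key remark is that for this crisp $V'$ we have $V'(\neg\neg p, t) = V'(p,t) = V(p,t)$, so the LTL model associated to $V'$ in the sense of \Cref{prop2} is $V$ itself. Hence \Cref{prop2} yields $V'(\varphi^\bullet, t) = V(\varphi, t) = 0$, witnessing $\gtl \not\models \varphi^\bullet$. (Alternatively, one could apply \Cref{prop1} directly to the formula $\varphi^\bullet$, using that $(\cdot)^\bullet$ is classically equivalent to the identity.)

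There is really no substantial obstacle here: the two propositions do all the inductive work, so the corollary reduces to two applications of \Cref{prop2} (or of \Cref{prop1}) combined with the trivial fact that classical LTL valuations take only the values $0$ and $1$. The only point that warrants a brief mention in the write-up is that the two associated-model constructions are mutually inverse on crisp data, which is a one-line computation using $V'(\neg\neg p,t) = V'(p,t)$ for $V'(p,t) \in \{0,1\}$.
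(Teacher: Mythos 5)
Your proof is correct and follows essentially the same route as the paper: both directions by contraposition, using the crisp-model correspondences and \Cref{prop1}/\Cref{prop2} as the bridges between classical and real-valued semantics. Your explicit remark that the two associated-model constructions are mutually inverse on crisp data (so that \Cref{prop2} applies cleanly in the converse direction) is a small detail the paper glosses over, and is a worthwhile addition.
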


 \begin{proof}
 	For the left-to-right direction, assume by contraposition that $\gtl\not \models \varphi^\bullet$. Therefore there exists a G\"odel temporal model $(\mathbb{Z},S,V')$ and $t \in \mathbb{Z}$ such that $V'(\varphi^\bullet,t) \not= 1$. By \Cref{prop2} there exists a crisp G\"odel temporal model $(\mathbb{Z},S,V)$ such that $V(\varphi,t) \not = 1$. This latter model can be viewed as an $\ltl$ model with $(\mathbb{Z},S,V), t \not \models \varphi$, since the real-valued semantics coincide with classical truth definitions when values are in $\{0,1\}$. Therefore $\ltl\not \models \varphi$.
 	
 	Conversely, assume by contraposition that $\ltl\not \models \varphi$. This means that there exists an $\ltl$ model $(\mathbb{Z},S,V)$ and $t \in \mathbb{Z}$ such that $(\mathbb{Z},S,V), t \not \models \varphi$. Then for the crisp G\"odel temporal model $(\mathbb{Z},S,V')$ as defined in \Cref{prop1} we have $V'(\varphi^\bullet,t) = 0$. As a consequence ${\gtl}\not \models \varphi^\bullet$.
 \end{proof}		 
 
 For the membership of \textsc{pspace}, we adapt the proof of $\ltl$ satisfiability from~\cite[Chapter~6]{degola16a} to the case of $\gtl$.
 Define an \define{ultimately periodic quasimodel} to be a quasimodel $\mathcal Q=(W ,{\leq},\ell,S)$ such that
 there is a `double lasso' graph $(T,R)$ with $T=\{-(i'+l'-1),\dots,i+l-1\}$, $ k \mathrel R (k+1)$ for $-(i'+l'-1) \leq k \leq i+l-2$, and also $(i+l)\mathrel R i$ and $-i'\mathrel R -(i'+l')$, and a `projection' function $\pi\colon W\to T$ such that each $\pi^{-1}(t)$ is an entire linear component of $W$, and $w \mathrel S v$ implies $\pi(w)\mathrel R\pi(v) $.

In other words, $\mathcal Q$ has an underlying nondeterministic `flow' $T$ consisting of a loop, followed by a middle segment (containing $0$), and finally a second loop, and each $t\in T$ is assigned a linear order $\pi^{-1}(t)$, which we may also write as $W_t$.
  
 \begin{theorem}[ultimately periodic quasimodel property]\label{aperiodic} Every falsifiable $\mathcal L$-formula is falsifiable in an ultimately periodic quasimodel of height bounded by $|\Sigma|+1$.
 \end{theorem}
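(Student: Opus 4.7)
The plan is to start with the finite quasimodel produced by the earlier machinery and to fold its time line into a double lasso via a pigeonhole / Ramsey argument applied at the level of linear components.

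First, by \Cref{falsifies} and \Cref{quasi:bound}, the falsifiable formula $\varphi$ is falsifiable in some finite $\Sigma$-quasimodel $\mathcal{Q}=(W,\leq,\ell,R)$ of height at most $|\Sigma|+1$. Let $\mathcal{C}$ denote the finite set of linear components of $(W,\leq)$, and let $C_0 \in \mathcal{C}$ be the component containing a world falsifying $\varphi$. Convexity of $R$-images together with full confluence of $R$ imply that the $R$-image of any $C \in \mathcal{C}$ lies entirely in a single component, and dually for preimages; in particular $R$ induces a well-defined relation $\hat R$ on $\mathcal{C}$.

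Next I would construct an infinite forward sequence $C_0, C_1, C_2, \dots$ in $\mathcal{C}$ with $C_i \mathrel{\hat R} C_{i+1}$ that is \emph{realizing}, in the sense that every $\until$-formula appearing in a label of $C_i$, and every $\nec$-formula absent from such a label, is eventually witnessed at some later $C_j$. This mirrors the defect-elimination procedure of \Cref{SecGen}, now operating at the level of components; the $\om$-sensibility of $R$ combined with full confluence supplies the required witnesses and extends them across each component. Since $\mathcal{C}$ is finite and only finitely many eventuality types can ever be pending, a pigeonhole argument applied to the product of $(C_i)_i$ with the finite set of pending obligations yields indices $i < i+l$ with $C_i = C_{i+l}$ and with every eventuality introduced at some position in $\{i,\dots,i+l-1\}$ already discharged within that segment. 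Adding the loop edge $(i+l) \mathrel R i$ and truncating produces the forward lasso.

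The past lasso is produced symmetrically via $R^{-1}$, with $\has$ and $\since$ in place of $\nec$ and $\until$, giving backward indices $-(i'+l'-1) < \dots < -i'$ together with a loop edge $(-i') \mathrel R (-(i'+l'))$. Gluing the two lassos at $C_0$ yields the time skeleton $T = \{-(i'+l'-1),\dots,i+l-1\}$, and the ultimately periodic quasimodel is $W' = \bigsqcup_{t \in T} C_t$ with $\pi$ the obvious projection, $\leq$ and $\ell$ inherited within each copy, and $R'$ transferred from $R$ consistently with the lasso identifications. Convexity, full confluence, bi-seriality and sensibility of $R'$ are inherited from $R$, while $\om$-sensibility holds because every eventuality in the periodic tail is discharged within its period. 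Each linear component of $W'$ is a copy of a linear component of $\mathcal{Q}$, so the height is still bounded by $|\Sigma|+1$. The main obstacle will be the Ramsey step---showing that a loop can be chosen so that every eventuality it introduces is discharged within it---and verifying that the edges inherited from $R$ still satisfy full confluence and convexity across the seam $C_{i+l} = C_i$; the latter relies on the edges being inherited directly from the original $R$, so the component-level choices line up with the world-level structure.
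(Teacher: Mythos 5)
Your proposal follows essentially the same route as the paper: unwind the finite quasimodel into a bi-infinite sequence of linearly ordered ``moments'' by the defect-elimination (priority) method of \Cref{SecGen}, use pigeonhole on the finitely many possible moments to find a repeat in each direction with all relevant defects realised in between, close the two loops, and verify $\om$-sensibility across the seam. Two points deserve correction, though neither is fatal. First, the $R$-image of a linear component need not lie in a single component: convexity of image sets does not preclude an image meeting several pairwise incomparable components. What saves you is full confluence: as soon as one world of $C$ has an $R$-successor in $C'$, the forth--down/forth--up (resp.\ back--down/back--up) conditions force the restriction of $R$ to $C\times C'$ to be bi-serial, and this restriction inherits confluence, convexity, and sensibility; so $\hat R$ is a genuine relation on components along which you can route your sequence, but not via the claim you state. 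Second, your loop condition --- every eventuality introduced anywhere in the segment $\{i,\dots,i+l-1\}$ is discharged within that segment --- is stronger than necessary and not obviously attainable (an obligation arising at position $i+l-1$ may need more room). The condition actually needed, and the one the paper uses, is only that every $\until$/$\nec$ defect of the repeated moment $W_i$ itself is realised before $W_{i+l}$; an obligation arising mid-segment either gets discharged on the way to $W_{i+l}=W_i$ or, by sensibility, persists into $W_i$ and is then discharged on the next traversal of the loop. With these adjustments your argument matches the paper's.
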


 \begin{proof}
 We sketch the construction.
 By \Cref{quasi:bound}, if $\varphi$ is falsifiable, it is falsifiable on some quasimodel $\mathcal Q' = (W',{\leq}'  ,\ell' ,S' )$ of height at most $\lgt \Sigma+1$.
 Choose $w_0 \in W' $ such that $\varphi\notin\ell_0(w_0)$, and let $W_0$ be the linear component of $w_0$ (i.e.~$W_0 = \{v\in W \mid v\leq w_0\text{ or }w_0\leq v\}$) and $\leq_0$ be the restriction of $\leq'$ to $W_0$.
 By a priority method similar to that of \Cref{SecGen}, we define a bi-infinite sequence $\dots, (W_{-1},{\leq}_{-1}),(W_0,{\leq}_0),(W_1,{\leq}_1),\dots$ and sensible relations $S_k\subseteq W_k\times W_{k+1}$, such that $\mathcal Q^{\infty} = (W^\infty,
 {\leq}^\infty,\ell^\infty,S^\infty) $ is a quasimodel, where $ W^\infty =\bigsqcup_{k\in\mathbb Z} W_k$ ($\bigsqcup$ denotes a disjoint union),  $ {\leq ^\infty} = {\bigsqcup_{k\in\mathbb Z} \leq_k}$, and so on.

 Note that there are at most $2^{\lgt \Sigma(\lgt\Sigma+1)}$ possible choices of $W_k$, since each $W_k$ consists of at most $\lgt\Sigma+1 $ types, and there are at most $2^{\lgt\Sigma}$ types.
 This in particular implies that some $W_i$ repeats infinitely often for arbitrarily large values of $i$.
 Let $l>0$ be such that $W_{i+l } = W_i$ and every $\until$ or $\nec$ defect of $W_i$ has been realised before $W_{i+l}$; such an $l$ exists because $W_i$ has finitely many defects.
 Similarly, some $W_{-i'}$ repeats for infinitely many values of $i'$, and we choose $l'>0$ such that $W_{-(i'+l ') } = W_{-i'}$ and all $\since$ or $\has$ defects of $W_{-i'}$ have been realised after $W_{-(i'+l')}$.
 We define $\mathcal Q=(W,{\leq},\ell,S)$ to be the restriction of $\mathcal Q^\infty$ to $\bigcup_{k=-(i'+l'-1)}^{i+l-1}W_k$, but with $S$ redefined on $W_{i+l-1 }$ so that it maps to $W_i$ and redefined on $W_{-(i'+l' -1)}$ so that it maps from $W_{i'}$.

It remains to check that $\mathcal Q$ is a quasimodel.
We only check that it is $\om$-sensible, as the other properties are easy to check.
Consider the case of $\varphi \until \psi\in \ell(w)$ (the cases for other temporal modalities are similar).
Then $w\in W_k$ for some $k$, which means that for some $j$ (namely, $j=i+l-k$), there is $v\in W_i$ such that $w\mathrel S^j v$.
If $\varphi\until \psi\notin \ell(v)$, then this defect must already have been realised.
Otherwise, by construction, there are some $j'$ and some $u$ such that $v\mathrel S^{j'} u$ and $\psi\in \ell(u)$.
Thus $w \mathrel S^{j+j'}  u   $ and $\psi \in \ell(u)$, as required.
 \end{proof}

Ultimately periodic models can be represented using sets of \emph{moments}.

\begin{definition}\label{definition:moments}
A \define{$\Sigma$-moment} is a sequence of the form $\mathfrak m = (\mathfrak m_0,\dots,\mathfrak m_{m} )$, where 
\begin{enumerate}
\item
each $\mathfrak m_i$ is a $\Sigma$-type, 
\item\label{two}
$\mathfrak m_i\supsetneq \mathfrak m_{i+1}$ for $i<m-1$, 
\item\label{three}
 for every $\varphi \imp \psi \in \Sigma\setminus \mathfrak m_{i} $ there is some $j\leq i$ with $\varphi \in \mathfrak m_{j}$ but $\psi \not\in \mathfrak m_{j}$,
 \item\label{four}
 for every $\varphi \dimp \psi \in  \mathfrak m_{i} $ there is some $j\geq i$ with $\varphi \in \mathfrak m_{j}$ but $\psi \not\in \mathfrak m_{j}$.
 \end{enumerate}
 We write $|\mathfrak M|$ for the set $\{\mathfrak m_0,\dots,\mathfrak m_m\}$.
The set of $\Sigma$-moments is denoted $M_\Sigma$.
\end{definition}


We define the labelled space $(\mathfrak m_0,\dots,\mathfrak m_{m} ) + (\mathfrak n_0,\dots,\mathfrak n_{n} )$ to be the parallel sum of the two linear posets $(\mathfrak m_0,\dots,\mathfrak m_{m} )$ and $(\mathfrak n_0,\dots,\mathfrak n_{n})$ with labelling given by the identity.

\begin{definition}\label{ts}
The moment $\mathfrak n$ is a \define{temporal successor} of $\mathfrak m$, denoted $\mathfrak m \mathrel S_\Sigma \mathfrak n$, if there exists a fully confluent convex sensible relation
$R\subseteq |\mathfrak m| \times |\mathfrak n|$ on the labelled space $\mathfrak m + \mathfrak n$.
\end{definition}


\begin{definition}
 We define
$\moments\Sigma=\<M_\Sigma, S_\Sigma\>$.
\end{definition}





Because of condition \eqref{two} in \Cref{definition:moments}, if $\Sigma$ is finite then so is $\moments\Sigma$.

 \begin{definition} A \define{small falsifiability 
  witness} for an $\mathcal L$-formu\-la $\varphi$ is a finite sequence of $\Sigma$-moments $\mathfrak{m}^{-i'-l'},\dots,\mathfrak{m}^{- i' },\dots, \mathfrak{m}^i, \dots, \mathfrak{m}^{i+l}$ with distinguished positions $-i',i$ and binary relations $\emptyset \neq S_k \subseteq |\mathfrak m^k| \times |\mathfrak m^{k+1}|$ for each $k$ with $-i'-l'\leq k<i+l$ such that 
 	\begin{enumerate}[label=(\Alph*)]

 		\item $\varphi \not\in \Phi$ for some $\Phi \in \mathfrak m^0$
 		
 		\item
 		$\mathfrak{m}^{-i'} = \mathfrak{m}^{-i'-l'}$ and $\mathfrak{m}^i = \mathfrak{m}^{i+l}$,

 		\item each $S_k$ is fully confluent, convex, and sensible,

 		\item if $\varphi\until \psi \in \mathfrak m^i_j$ then there are $r<l$ and a sequence $(j_k)_{k\leq r}$ with $j_0=j$ such that $\mathfrak m^{i+k}_{j_k} \mathrel S_{i+k} \mathfrak m^{i+k+1}_{j_{k+1}} $ and $\varphi \in \mathfrak m^{i+k}_{j_{k}}$ for all $k<r$, and $\psi\in \mathfrak m^{i+r}_{j_{r}}$,
 		
 		\item if $\nec \psi \in\Sigma\setminus \mathfrak m^i_j$ then there are $r<l$ and a sequence $(j_k)_{k\leq r}$ with $j_0=j$ such that $\mathfrak m^{i+k}_{j_k} \mathrel S_{i+k} \mathfrak m^{i+k+1}_{j_{k+1}} $ for all $k<r$ and $\psi\notin \mathfrak m^{i+r}_{j_{r}}$,
 		
 		 		\item if $\varphi\since \psi \in \mathfrak m^{i'}_j$ then there are $r<l'$ and a sequence $(j_k)_{k\leq r}$ with $j_0=j$ such that $\mathfrak m^{-i'-k-1}_{j_{k+1}} \mathrel S_{-i'-k} \mathfrak m^{-i'-k }_{j_{k}} $ and $\varphi \in \mathfrak m^{-i'-k}_{j_{k}}$ for all $k<r$, and $\psi\in \mathfrak m^{-(i'+r)}_{j_{r}}$,
 		
 		\item  if $\has \psi \in\Sigma\setminus \mathfrak m^{-i'}_j$ then there are $r<l'$ and a sequence $(j_k)_{k\leq r}$ with $j_0=j$ such that $\mathfrak m^{-i'-k-1}_{j_{k+1}} \mathrel S_{-i'-k} \mathfrak m^{-i'-k }_{j_{k}} $ for all $k<r$, and $\psi\notin \mathfrak m^{-i'-r}_{j_{r}}$.
 				
 	\end{enumerate}
 \end{definition}

\Cref{aperiodic} implies that if an $\mathcal L$-formula is falsifiable then it has a small falsifiability witness. Moreover, the converse is also true, because the small falsifiability witness can be viewed as a quasimodel. As a consequence, we obtain an equivalence between the existence of a possibly infinite structure (a model of $\varphi$) and the existence of a finite structure (a small falsifiability witness) for a given $\mathcal L$-formula $\varphi$.
 
 \begin{theorem}	\label{witness}
 	An $\mathcal L$-formula is falsifiable if and only if it has a small falsifiability witness.	
 \end{theorem}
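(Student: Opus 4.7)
The plan is to establish the two directions separately, leveraging the ultimately periodic quasimodel property from \Cref{aperiodic} and the unwinding from \Cref{second}.

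For the forward direction, suppose $\varphi$ is falsifiable. By \Cref{aperiodic}, there is an ultimately periodic quasimodel $\mathcal Q=(W,{\leq},\ell,S)$ of height at most $|\Sigma|+1$ falsifying $\varphi$ at some $w_0\in W$, with projection $\pi:W\to T$ where $T=\{-(i'+l'-1),\dots,i+l-1\}$ has the double-lasso structure described above the theorem. We may assume $\pi(w_0)=0$. For each $k\in T$, the linear component $W_k=\pi^{-1}(k)$ is a finite linearly ordered set; enumerating its elements with decreasing $\leq$-order and applying $\ell$ produces a tuple $\mathfrak m^k=(\mathfrak m^k_0,\dots,\mathfrak m^k_{m_k})$. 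The labelled-space axioms guarantee that this tuple is a $\Sigma$-moment in the sense of \Cref{definition:moments}: inverse monotonicity gives strict inclusions between consecutive types, and the clauses on $\imp$ and $\dimp$ are precisely conditions (3) and (4) of that definition. Let $S_k$ be the restriction of $S$ to $W_k\times W_{k+1}$, and extend the indexing so that $\mathfrak m^{-i'-l'}=\mathfrak m^{-i'}$ and $\mathfrak m^{i+l}=\mathfrak m^{i}$, mirroring the loops of $T$. Since $S$ is fully confluent, convex, and sensible, each $S_k$ inherits these properties.

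The remaining conditions on the witness are the defect-realisation clauses. For an $\until$-formula $\varphi\until\psi\in\mathfrak m^i_j$, the corresponding world $w\in W_i$ has $\varphi\until\psi\in \ell(w)$, so by $\omega$-sensibility of $S$, there is a finite $S$-path from $w$ realising $\psi$. By the explicit choice of $l$ in the proof of \Cref{aperiodic} (chosen so that all $\until$/$\nec$-defects of $W_i$ are realised before $W_{i+l}$), this path has length less than $l$, giving the required sequence $(j_k)_{k\leq r}$. The cases of $\nec$, $\since$, and $\has$ defects are handled symmetrically using the choice of $l$ and $l'$.

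For the backward direction, suppose a small falsifiability witness for $\varphi$ exists. I would re-construct an ultimately periodic quasimodel $\mathcal Q^\infty=(W^\infty,{\leq^\infty},\ell^\infty,S^\infty)$ over the flow $\mathbb Z$ by periodic unwinding: for $k\geq i$ set $\mathfrak n^k=\mathfrak m^{i+((k-i)\bmod l)}$, for $k\leq -i'$ set $\mathfrak n^k=\mathfrak m^{-i'-((-k-i')\bmod l')}$, and for $-i'\leq k\leq i$ set $\mathfrak n^k=\mathfrak m^k$; take $W^\infty=\bigsqcup_{k\in\mathbb Z}|\mathfrak n^k|$ with $\leq^\infty$ given on each component by the moment order, $\ell^\infty$ the identity, and $S^\infty$ assembled from appropriately-indexed copies of the $S_k$ from the witness (well-defined because the moments at positions $-i',-i'-l'$ and $i,i+l$ coincide). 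The structure is a labelled system by conditions (B) and (C) of the witness applied componentwise. For $\omega$-sensibility, any $\until$-defect at some position $k\in\mathbb Z$ lies in a moment equal to some $\mathfrak m^j$ from the original witness; condition (D) supplies a realising path within $l$ steps in the witness, which lifts via the unwinding to a realising path in $\mathcal Q^\infty$. Conditions (E), (F), (G) handle the other three temporal modalities analogously. Finally, condition (A) provides a world in $\mathcal Q^\infty$ whose label omits $\varphi$, so \Cref{second} yields a bi-relational model falsifying $\varphi$.

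The principal technical obstacle is the backward direction, specifically verifying $\omega$-sensibility of $S^\infty$ at positions far from the distinguished segment, where one must carefully track the correspondence between the periodic copies and the original witness indices, and confirm that defect-realisation in a single loop iteration suffices to discharge defects appearing arbitrarily far out in the unwinding. The forward direction is a more direct translation once \Cref{aperiodic} is invoked.
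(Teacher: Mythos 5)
Your proposal is correct and follows essentially the same route as the paper: the forward direction reads the moments off the linear components of the ultimately periodic quasimodel supplied by \Cref{aperiodic}, and the backward direction reassembles a quasimodel from the witness and invokes the soundness of quasimodels (\Cref{second}). The only inessential differences are that the paper keeps the reassembled quasimodel \emph{finite}, with $S$ wrapping around the two loops, rather than unwinding periodically over $\mathbb Z$ (either way, discharging defects at positions other than the distinguished ones requires first propagating them to $\mathfrak m^i$ or $\mathfrak m^{-i'}$ via sensibility of the $S_k$ before conditions (D)--(G) apply), and that the components should be enumerated in \emph{increasing} $\leq$-order, with repeated types deleted, so that the labels shrink as required by condition (2) of \Cref{definition:moments}.
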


 \begin{proof} 	
 		For the left-to-right direction, assume that the formula $\varphi$ is falsifiable.
 		By \Cref{aperiodic}, there exists an ultimately periodic quasimodel $\mathcal{Q} = (W,{\leq},\ell,S)$ such that $W=\bigcup_{k=-i'-l'}^{i+l} W_k$ and $\varphi \not\in \ell(w)$ for some $w\in W_0$.
 		For each $W_k$, let $W_k = \{v^k_0,\dots, v^k_{m_k}\}$ in increasing order, and let $\mathfrak m^k = ( \ell (v^k_0), \dots, \ell (v^k_{m_k}) )$ (deleting repeating types if needed).   	
 		It is easy to check that the sequence $\mathfrak m^{-i'-l'},\dots, \mathfrak m^{i'}, \dots, \mathfrak{m}^0,\dots,\mathfrak{m}^i, \dots, \mathfrak{m}^{i+l}$, with sensible relations $S_k$ defined in the obvious way, yields a small falsifiability witness.
 		 	
 	Conversely, we will show that if a formula has a small falsifiability witness $\mathfrak{m}^{-i'-l'}, \dots,\allowbreak \mathfrak{m}^{- i' }, \dots, \mathfrak{m}^i, \dots, \mathfrak{m}^{i+l}$ then it is falsifiable.
 	Write $\mathfrak{m}^{k} = (\mathfrak{m}^{k}_0,\dots,\mathfrak{m}^{k}_{m_k})$ and consider the labelled space $\mathcal{Q} = (W,{\leq}, \ell, S)$, where $W= \{ (\mathfrak m_s^k,k) \mid -(i'+l')<k<i+l\text{ and } 0 \leq s\leq m_k \}$ and ${\leq}$, $\ell$, and $S$ are defined in the obvious way. 
It is not hard to check that $\mathcal{Q} $ is a quasimodel falsifying $\varphi$. Hence by \Cref{sound}, $\varphi$ is falsifiable.
 \end{proof}	


 \begin{algorithm}[h!]
 \DontPrintSemicolon
 	\caption{$\gtl$ falsifiability algorithm}\label{gtlsat}
 	 input $\varphi$\;
 	 set $\Sigma$ to be the set of subformulas of $\varphi$\;
 	 guess three moments, $\mathfrak{m}^p$, $\mathfrak{m}$, and $\mathfrak m^f$, of heights $s,m,n\leq |\Sigma|+1$ such that $\varphi\notin \mathfrak m_m$\;
	 $\mathfrak{m}' \gets \mathfrak{m}$\;

	\While{$\mathfrak m\neq \mathfrak m^f$ \textup{or} $\mathfrak m'\neq \mathfrak m^p$}{
 		guess a moment $\mathfrak{n}$ of height at most $|\Sigma|+1$\;
	 	\uIf{$\mathfrak m\neq \mathfrak m^f$ \textup{and} $\mathfrak n$ \textup{is a temporal successor of} $\mathfrak m$}{$\mathfrak{m} \leftarrow \mathfrak{n}$}
		\uElseIf{$\mathfrak m'\neq \mathfrak m^p$ \textup{and} $\mathfrak n$ \textup{is a temporal predecessor of} $\mathfrak m'$}{ $\mathfrak{m}' \leftarrow \mathfrak{n}$}
 		\Else{reject}
 	}
\Comment{Initialise defects}
$\Delta \leftarrow {\{ (k,\ps \psi) \mid \varphi \until \psi \in  \mathfrak{m}_k,\; 0 \le k \leq n \} \cup \{ (k,\past \psi) \mid \varphi \since \psi \in  \mathfrak{m}'_k,\; 0 \leq k \leq s \}} \allowbreak{\cup \{ (k,\nec \psi) \mid \nec \psi \in \Sigma\setminus {\mathfrak{m}}_k,\; 0 \le k \leq n \} \cup \{ (k,\has \psi) \mid \has \psi \in \Sigma\setminus {\mathfrak{m}'}_k,\; 0 \le k \leq s \}}$\;
\Comment{Initialise cured defects}
$\Gamma \leftarrow {\lbrace (k,\ps \psi) \mid \psi \in  \mathfrak{m}_k,\; 0 \le k \leq n \rbrace \cup \lbrace (k,\past \psi) \mid \psi \in  \mathfrak{m}'_k,\; 0 \le k \leq s \rbrace}\allowbreak{ \cup \lbrace (k,\nec \psi) \mid \psi \in \Sigma\setminus{\mathfrak{m}}_k,\; 0 \le k \leq n \rbrace \cup\lbrace (k,\has \psi) \mid \psi \in \Sigma\setminus \mathfrak{m}'_k,\; 0 \le k \leq s \rbrace }$\;

\Comment{Initialise reachability relations}             
$S^* \leftarrow \lbrace (k,k)  \mid 0 \le k \leq n \rbrace$\;
$(S^{-1})^* \leftarrow \lbrace (k,k)  \mid 0 \le k \leq s \rbrace$\;

\While{$\mathfrak{m}^f \neq \mathfrak{m}$ \textup{or} $\mathfrak{m}^p \neq \mathfrak{m}'$ \textup{or} $\Delta \not \subseteq \Gamma$}{
	guess a moment $\mathfrak{n}$ of height at most $|\Sigma|+1$\;
 	\uIf{$\mathfrak m\neq \mathfrak m^f$ \textup{and} $\mathfrak n$ \textup{is a temporal successor of} $\mathfrak m$, \textup{witnessed by $R$}}{
 		$S^* \leftarrow \lbrace (k,z) \mid \exists y : (k,y)  \in S^* \hbox{ and } (\mathfrak{m}_y,\mathfrak{n}_z) \in R\rbrace$\;
		$\Gamma \leftarrow \Gamma \cup \lbrace  (k,\ps \psi) \mid \exists z : (k,z) \in S^* \hbox{ and } \psi \in  \mathfrak{n}_z\rbrace$\;
		$\Gamma \leftarrow \Gamma \cup \lbrace  (k,\nec \psi) \mid \exists z : (k,z) \in S^* \hbox{ and } \psi \in \Sigma \setminus {\mathfrak{n}_z}\rbrace$\;
		$\mathfrak{m} \leftarrow \mathfrak{n}$\;

 	}
	\uElseIf{$\mathfrak m'\neq \mathfrak m_p$ \textup{and} $\mathfrak n$ \textup{is a temporal predecessor of} $\mathfrak m'$, \textup{witnessed by $R$}}{

	$(S^{-1})^* \leftarrow \lbrace (z,k) \mid \exists y : (y,k)  \in (S^{-1})^* \hbox{ and } (\mathfrak{m}'_z,\mathfrak{n}_y) \in R^{-1}\rbrace$\;                     
	$\Gamma \leftarrow \Gamma \cup \lbrace  (k,\past \psi) \mid \exists z : (k,z) \in (S^{-1})^* \hbox{ and } \psi \in  \mathfrak{n}_z\rbrace$\;
	$\Gamma \leftarrow \Gamma \cup \lbrace  (k,\has \psi) \mid \exists z : (k,z) \in (S^{-1})^* \hbox{ and } \psi \in \Sigma \setminus {\mathfrak{n}_z}\rbrace$\;
	$\mathfrak{m}' \leftarrow \mathfrak{n}$\;	
	}
	\Else{reject}         
}
accept

 \end{algorithm}

 \afterpage{\clearpage}
 
 From \Cref{witness} we may define Algorithm~\ref{gtlsat}, which nondeterministically checks for falsifiability in \textsc{pspace}.
 According to Savitch's theorem~\cite{SAVITCH1970177}, nondeterministic polynomial space is equal to deterministic polynomial space; thus we obtain the following.
 
 \begin{theorem}
 There exists a deterministic algorithm for falsifiability checking of an $\mathcal L$-formula that is correct and works in space that is polynomial in the size of the input formula.
 \end{theorem}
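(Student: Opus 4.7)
The plan is to verify that Algorithm~\ref{gtlsat} is both correct and runs in nondeterministic polynomial space, from which the theorem follows by Savitch's theorem. Correctness will be established by matching accepting runs of the algorithm with small falsifiability witnesses in the sense of \Cref{witness}: an accepting run traces out, moment by moment, the sequence $\mathfrak m^{-i'-l'},\dots,\mathfrak m^0,\dots,\mathfrak m^{i+l}$ of a small falsifiability witness (together with the associated sensible relations), and the bookkeeping sets $\Delta$, $\Gamma$, $S^*$, and $(S^{-1})^*$ ensure that every $\until$-, $\since$-, $\nec$-, and $\has$-defect present in $\mathfrak m^i$ or $\mathfrak m^{-i'}$ is eventually cured before the respective loop closes. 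Conversely, any small falsifiability witness can be fed to the algorithm as a sequence of nondeterministic guesses, producing an accepting run.

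The main task is then the space analysis. I would show that each item stored by the algorithm occupies only polynomially many bits in $\lgt\Sigma$. A single moment is a strictly decreasing chain of $\Sigma$-types of length at most $\lgt\Sigma+1$, so it fits in $O(\lgt\Sigma^2)$ bits; the algorithm keeps only a constant number of moments alive at any time, namely $\mathfrak m^p$, $\mathfrak m^f$, the current $\mathfrak m$ and $\mathfrak m'$, and the freshly guessed $\mathfrak n$. The defect trackers $\Delta$ and $\Gamma$ are subsets of $\{0,\dots,\lgt\Sigma\}\times\Sigma$, hence of size $O(\lgt\Sigma^2)$; similarly, $S^*$ and $(S^{-1})^*$ are subsets of $\{0,\dots,\lgt\Sigma\}^2$, so polynomial in $\lgt\Sigma$. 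The updates performed at each loop iteration consist of standard set operations and reachability composition, which are easily implemented in polynomial space.

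The delicate step will be showing that the predicate ``$\mathfrak n$ is a temporal successor of $\mathfrak m$, witnessed by $R$'' (\Cref{ts}) is checkable in polynomial space: the witness $R \subseteq \lgt\mathfrak m\times\lgt\mathfrak n$ is itself of polynomial size, and the conditions of being fully confluent, convex, and sensible are all local ($\forall\exists$ statements over pairs in $R$ and points in $\mathfrak m + \mathfrak n$), so they can be verified in polynomial time once $R$ is guessed. The same applies to the temporal predecessor check. A second, more minor, issue is guaranteeing termination of the nondeterministic computation: since every configuration is determined by a polynomially bounded amount of data, the number of distinct configurations is at most exponential in $\lgt\Sigma$, so we may equip the algorithm with a binary counter that rejects after exponentially many steps (still using only polynomial space). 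This confines accepting runs to be of exponential length, matching the exponential-size bound on quasimodels from \Cref{quasi:bound} that underlies \Cref{witness}.

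Putting these ingredients together yields that falsifiability of $\varphi$ can be decided by a nondeterministic Turing machine using space polynomial in $\lgt\varphi$. Invoking Savitch's theorem~\cite{SAVITCH1970177} to convert NPSPACE to PSPACE produces the desired deterministic polynomial-space algorithm, completing the proof. I expect the write-up's most technically involved paragraph to be the verification that the temporal successor test fits in polynomial space and the formal argument that the stepwise updates of $\Delta$, $\Gamma$, $S^*$, and $(S^{-1})^*$ faithfully implement the defect-realisation requirements of a small falsifiability witness.
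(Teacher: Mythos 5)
Your proposal follows essentially the same route as the paper: establish that Algorithm~\ref{gtlsat} is a correct nondeterministic polynomial-space procedure by matching accepting runs with small falsifiability witnesses (\Cref{witness}), bound the space used by the moments, the defect trackers $\Delta,\Gamma$, the reachability relations, and the guessed witness relation $R$ for the temporal-successor test, and then invoke Savitch's theorem. Your additional remark about enforcing termination with an exponential step counter is a reasonable (and standard) supplement that the paper leaves implicit, but it does not change the argument.
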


\begin{proof} We argue that Algorithm~\ref{gtlsat} is a correct and complete nondeterministic polynomial-space algorithm for falsifiability. 
	If $\varphi$ is falsifiable, then by \Cref{witness}, $\varphi$ has a small falsifiability witness $\mathfrak m^{-i'-l'},\dots,\mathfrak m^{-i'} ,\dots, \mathfrak{m}^0 ,\dots, \mathfrak{m}^i ,\dots, \mathfrak{m}^{i + l}$. 
	We initialise both $\mathfrak m$ and $\mathfrak{m}'$ to $\mathfrak m^0$. With the first while loop, we initialise $\mathfrak m^s$ to $\mathfrak m^{i'}$, initialise  $\mathfrak m^f$ to $\mathfrak m^i$. During each iteration of the second while loop, we choose $\mathfrak n$ to be either the successor of $\mathfrak m$ or the predecessor of $\mathfrak{m}'$. 
    This yields an accepting computation of Algorithm~\ref{gtlsat}. In particular, since $\mathfrak m^{-i'-l'},\dots,\mathfrak m^{-i'} ,\dots, \mathfrak{m}^0 ,\dots, \mathfrak{m}^i ,\dots, \mathfrak{m}^{i + l}$ has no defects, for each element of $\Delta$, at some stage a `cure' is found witnessing it is not in fact a defect. That is, we eventually obtain $\Delta \subseteq \Gamma$.
	
	Conversely, if Algorithm~\ref{gtlsat} has an accepting computation, let $\mathfrak m^{-i'-l'},\dots,\mathfrak m^{-i'}, \dots, \mathfrak{m}^0, \allowbreak\dots,\allowbreak \mathfrak{m}^i, \dots, \mathfrak{m}^{i + l}$ enumerate the values taken by $\mathfrak{m}$ and $\mathfrak{m}'$ where $i'$ and $i$ are the least natural numbers such that $\mathfrak m^{i'}=\mathfrak m^s$ and $\mathfrak{m}^i = \mathfrak{m}^f$, respectively.
	It is not hard to check that this sequence yields a small falsifiability witness.
	
To check that the nondeterministic algorithm uses polynomial space, it is sufficient to observe that each subset of $\Sigma$ can be encoded by a polynomial number of bits.
	Since $\mathfrak{m}$, $\mathfrak{m}'$, $\mathfrak{m}^f$, $\mathfrak{m}^s$ and $\mathfrak{n}$ have height at most $\lvert \Sigma \rvert + 1$ we need $7 \lvert \Sigma \rvert + 7$ of those sets (at most $\lvert \Sigma \rvert + 1$ for each of $\mathfrak{m}$, $\mathfrak{m}'$, $\mathfrak{m}^s$, $\mathfrak{m}^f$, $\mathfrak{n}$, $\Delta$, and $\Gamma$).
	Checking that $\mathfrak n$ is a temporal successor (respectively predecessor) of $\mathfrak m$  (respectively $\mathfrak{m}'$) can be done nondeterministically by guessing a relation $R$ and checking that it is a sensible and bi-serial relation; but the size of $R$ is bounded by the product of the sizes of $\mathfrak m$ (respectively $\mathfrak{m}'$) and $\mathfrak n$.
	Similarly, both $S^*$ and $ (S^{-1})^* $ have at most $|\Sigma|^2$ elements, so also only require polynomial space.
\end{proof}

  With this we conclude that the validity problem is \textsc{pspace}-complete. 

 \begin{theorem}\label{thmPSPACE}
 	The validity problem for ${\gtl}$ is \textsc{pspace}-complete.
 \end{theorem}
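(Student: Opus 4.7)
The plan is to separate the argument into the two standard directions, \textsc{pspace}-hardness and membership in \textsc{pspace}. Both directions have essentially been prepared by the material immediately preceding the theorem, so the proof amounts to assembling these pieces.

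For hardness, I would invoke the negative translation $(\,\cdot\,)^\bullet$ together with the corollary already established, which says $\ltl \models \varphi$ if and only if $\gtl \models \varphi^\bullet$. Since the translation is homomorphic on connectives and merely prepends $\neg\neg$ to each propositional variable, it is computable in logarithmic (hence polynomial) time, giving a many-one reduction from $\ltl$ validity to $\gtl$ validity. Because $\ltl$ validity is already \textsc{pspace}-hard by Sistla and Clarke, \textsc{pspace}-hardness of $\gtl$ validity follows immediately.

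For the upper bound, I would argue via falsifiability, which is the complement of validity. By the theorem preceding the statement, Algorithm~\ref{gtlsat} is a correct and complete \emph{nondeterministic} polynomial-space procedure for deciding falsifiability of $\mathcal L$-formulas. Savitch's theorem gives $\mathsf{NPSPACE} = \mathsf{PSPACE}$, so falsifiability is in \textsc{pspace}; and since \textsc{pspace} is closed under complementation, $\gtl$ validity is in \textsc{pspace} as well.

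The heavy lifting for the final theorem has already been done upstream: the real difficulty lies in the ultimately periodic quasimodel property (\Cref{aperiodic}), which caps the height of the relevant quasimodel by $|\Sigma|+1$ and thereby controls the size of each guessed moment, and in the correctness analysis of Algorithm~\ref{gtlsat}, where defect-tracking must be carried out with only a polynomial amount of bookkeeping. Once those ingredients are in place, the theorem itself is a short combinatorial wrap-up, so my proposal is essentially a one-line citation of the hardness reduction together with the \textsc{npspace} algorithm and Savitch's theorem.
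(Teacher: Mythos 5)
Your proposal is correct and matches the paper's argument exactly: \textsc{pspace}-hardness via the polynomial-time negative translation reducing $\ltl$ validity to $\gtl$ validity, and membership via the nondeterministic polynomial-space falsifiability algorithm combined with Savitch's theorem and closure of \textsc{pspace} under complementation. No gaps.
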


 \section{Concluding remarks}\label{SecConc}

 We have defined a natural version of \emph{linear temporal logic} with a G\"odel--Dummett base, suitable for reasoning with vague or incomplete information, and shown that it may equivalently be characterised as a fuzzy logic or as a superintuitionistic logic using standard semantics in each case.
 Despite the lack of a finite model property for either of the two semantics, we have introduced a class of \emph{quasimodels} for which $\gtl$ \emph{does} satisfy a version of the finite model property, and moreover shown how these quasimodels can be used to adapt the classical proof that the validity problem is \textsc{pspace}-complete.

This \textsc{pspace}-complete complexity puts G\"odel temporal logic in sharp contrast to other fuzzy logics, whose transitive modal versions are undecidable~\cite{Vidal21}, or intuitionistic temporal logics, where systems are known to be decidable only with non-elementary upper bounds, if at all \cite{F-D18,BalbianiToCL}.
This places G\"odel--Dummett logic as \emph{the} premier base for computational applications of sub-classical modal and temporal logics, as far as complexity is concerned.

We have also provided a sound and complete calculus for the G\"odel temporal logic $\gtl$.
This result further cements $\gtl$ as a privileged logic for temporal reasoning with non-binary degrees of truth and paves the way for a proof-theoretic treatment of this logic.
Among the challenges in this direction is the design of cut-free or cyclic calculi.

In proving our main results, we have developed tools for the treatment of superintuitionistic temporal logics, specifically identifying the usefulness of  combining the presence of `henceforth' with \emph{co-implication}.
We believe that this insight will lead to completeness proofs for related logics, including intuitionistic $\ltl$, where complete calculi for `eventually' are available, but not for `henceforth'.
Regarding this, it should be remarked that the techniques of \cite{eventually} should lead to a sound and complete calculus for the logic with implication, `next', and `eventually' (but no co-implication or henceforth), although such a result does not follow immediately from the present work.
The techniques we have used to prove completeness and decidability are quite robust and we expect that they may be applicable to more expressive logics such as the linear-time $\mu$-calculus (see \cite{10.1145/73560.73582}) or the branching-time logics $\mathsf{PDL}$, $\mathsf{CTL}^*$, or even the full modal $\mu$-calculus.
This represents a milestone in the program pioneered by Caicedo et al.~\cite{CaicedoMRR17,CaicedoR10} of extending results from classical modal and temporal logics to their G\"odel counterparts.

Another subject that would be worth studying in the near future is  \emph{bisimulation} in G\"odel temporal logic. This tool has been used to determine that temporal operators are not interdefinable in the intuitionistic temporal setting~\cite{Balbiani2017,BalbianiToCL}.
For the class of temporal here-and-there models, `henceforth' is a basic operator that cannot be defined, while `eventually' becomes definable in terms of `henceforth', `next', and implication~\cite{Balbiani2017,BalbianiToCL}. When including co-implication, results on definability exist in the literature: for a combination of the logic of here-and-there, co-implication and the basic modal logic $\mathsf K$, it was proven in \cite{BD18} that the `possibly' and `necessarily' modalities become interdefinable.
We do not know if co-implication has a similar effect on our G\"odel temporal logic; a negative answer would require a suitable notion of bisimulation preserving both implications as well as the temporal operators. 

\subsection*{Acknowledgments} 
This work has been partially suppor\-ted by FWO--FWF grant G030620N/\allowbreak I4513N (Aguilera and Fern\'andez-Duque), FWO grant 3E017319 (Aguilera), the projects EL4HC and \'etoiles montantes CTASP at R\'egion Pays de la Loire, France (Di\'eguez), the COST action CA-17124 (Di\'eguez and Fern\'andez-Duque), and SNSF--FWO Lead Agen\-cy Grant 200021L\_196176/\allowbreak G0E2121N (McLean and Fern\'andez-Duque).



%
%
%

\begin{thebibliography}{10}
	\expandafter\ifx\csname url\endcsname\relax
	\def\url#1{\texttt{#1}}\fi
	\expandafter\ifx\csname urlprefix\endcsname\relax\def\urlprefix{URL }\fi
	\expandafter\ifx\csname href\endcsname\relax
	\def\href#1#2{#2} \def\path#1{#1}\fi
	
	\bibitem{degola16a}
	S.~Demri, V.~Goranko, M.~Lange, Temporal Logics in Computer Science:
	Finite-State Systems, Cambridge Tracts in Theoretical Computer Science,
	Cambridge University Press, 2016.
	
	\bibitem{364485}
	J.~Mendel, Fuzzy logic systems for engineering: a tutorial, Proceedings of the
	IEEE 83~(3) (1995) 345--377.
	\newblock \href {https://doi.org/10.1109/5.364485}
	{\path{doi:10.1109/5.364485}}.
	
	\bibitem{Kruse1991}
	R.~Kruse, E.~Schwecke, J.~Heinsohn, Uncertainty and Vagueness in Knowledge
	Based Systems: Numerical Methods, Springer Berlin Heidelberg, Berlin,
	Heidelberg, 1991, Ch. Vague Data, pp. 29--44.
	
	\bibitem{10.1007/978-3-642-16242-8_4}
	N.~Preining, G{\"o}del logics -- a survey, in: C.~G. Ferm{\"u}ller, A.~Voronkov
	(Eds.), Logic for Programming, Artificial Intelligence, and Reasoning,
	Springer Berlin Heidelberg, Berlin, Heidelberg, 2010, pp. 30--51.
	
	\bibitem{BPZ07}
	M.~Baaz, N.~Preining, R.~Zach, First-order {G}{\"o}del logics, Annals of Pure
	and Applied Logic 147 (2007) 23--47.
	
	\bibitem{Godel}
	K.~G\"odel, {Z}um intuitionistischen {A}ussagenkalk\"ul, Anzeiger der Akademie
	der Wissenschaften Wien, mathematisch, naturwissenschaftliche Klasse 69
	(1932) 65--66.
	
	\bibitem{10.2307/2964753}
	M.~Dummett, A propositional calculus with denumerable matrix, The Journal of
	Symbolic Logic 24~(2) (1959) 97--106.
	
	\bibitem{AlsinetCGSS08}
	T.~Alsinet, C.~I. Ches{\~{n}}evar, L.~Godo, S.~A. Sandri, G.~R. Simari,
	Formalizing argumentative reasoning in a possibilistic logic programming
	setting with fuzzy unification, Int. J. Approx. Reason. 48~(3) (2008)
	711--729.
	
	\bibitem{duboisLP91}
	D.~Dubois, J.~Lang, H.~Prade, Towards possibilistic logic programming, in:
	K.~Furukawa (Ed.), Logic Programming, Proceedings of the Eigth International
	Conference, Paris, France, June 24-28, 1991, {MIT} Press, 1991, pp. 581--595.
	
	\bibitem{pearce96}
	D.~Pearce, A new logical characterisation of stable models and answer sets, in:
	J.~Dix, L.~M. Pereira, T.~C. Przymusinski (Eds.), Non-Monotonic Extensions of
	Logic Programming, {NMELP} '96, Bad Honnef, Germany, September 5-6, 1996,
	Selected Papers, Vol. 1216 of Lecture Notes in Computer Science, Springer,
	1996, pp. 57--70.
	
	\bibitem{Pearce06}
	D.~Pearce, Equilibrium logic, Ann. Math. Artif. Intell. 47~(1-2) (2006) 3--41.
	
	\bibitem{Avron91}
	A.~Avron, Hypersequents, logical consequence and intermediate logics for
	concurrency, Ann. Math. Artif. Intell. 4 (1991) 225--248.
	
	\bibitem{aschieriCG17}
	F.~Aschieri, A.~Ciabattoni, F.~A. Genco, G{\"{o}}del logic: From natural
	deduction to parallel computation, in: 32nd Annual {ACM/IEEE} Symposium on
	Logic in Computer Science, {LICS} 2017, Reykjavik, Iceland, June 20-23, 2017,
	{IEEE} Computer Society, 2017, pp. 1--12.
	
	\bibitem{DuboisP89}
	D.~Dubois, H.~Prade, Processing fuzzy temporal knowledge, IEEE Transactions on
	Systems, Man, and Cybernetics 19~(4) (1989) 729--744.
	
	\bibitem{lamineK00}
	K.~B. Lamine, F.~Kabanza, History checking of temporal fuzzy logic formulas for
	monitoring behavior-based mobile robots, in: 12th {IEEE} International
	Conference on Tools with Artificial Intelligence {(ICTAI} 2000), 13-15
	November 2000, Vancouver, BC, Canada, {IEEE} Computer Society, 2000, pp.
	312--319.
	
	\bibitem{MUKHERJEE20131452}
	S.~Mukherjee, P.~Dasgupta, A fuzzy real-time temporal logic, International
	Journal of Approximate Reasoning 54~(9) (2013) 1452--1470.
	
	\bibitem{GERBER2008351}
	R.~Gerber, H.-H. Nagel, Representation of occurrences for road vehicle traffic,
	Artificial Intelligence 172~(4) (2008) 351--391.
	\newblock \href {https://doi.org/10.1016/j.artint.2007.07.001}
	{\path{doi:10.1016/j.artint.2007.07.001}}.
	
	\bibitem{4515884}
	J.~E. Naranjo, C.~Gonzalez, R.~Garcia, T.~de~Pedro, Lane-change fuzzy control
	in autonomous vehicles for the overtaking maneuver, IEEE Transactions on
	Intelligent Transportation Systems 9~(3) (2008) 438--450.
	\newblock \href {https://doi.org/10.1109/TITS.2008.922880}
	{\path{doi:10.1109/TITS.2008.922880}}.
	
	\bibitem{Shukla}
	S.~Shukla, M.~Tiwari, Fuzzy logic of speed and steering control system for
	three dimensional line following of an autonomous vehicle, International
	Journal of Computer Science and Information Security 7 (03 2010).
	
	\bibitem{pnueli}
	A.~Pnueli, The temporal logic of programs, in: Proceedings 18th {I}{E}{E}{E}
	Symposium on the Foundations of {C}{S}, 1977, pp. 46--57.
	
	\bibitem{taspa}
	F.~Aguado, P.~Cabalar, M.~Di{\'{e}}guez, G.~P{\'{e}}rez, T.~Schaub,
	A.~Schuhmann, C.~Vidal, Linear-time temporal answer set programming, Theory
	and Practice of Logic Programming 23~(1) (2021) 2--56.
	
	\bibitem{taspb}
	F.~Aguado, P.~Cabalar, M.~Di{\'{e}}guez, G.~P{\'{e}}rez, C.~Vidal, Temporal
	equilibrium logic: a survey, Journal of Applied Non-Classical Logics 23~(1-2)
	(2013) 2--24.
	
	\bibitem{Hey30}
	A.~Heyting, Die formalen Regeln der intuitionistischen Logik, Sitzungsberichte
	der Preussischen Akademie der Wissenschaften. Physikalisch-mathematische
	Klasse, De{\"u}tsche Akademie der Wissenschaften zu Berlin,
	Mathematisch-Naturwissenschaftliche Klasse, 1930.
	
	\bibitem{BalbianiDieguezJelia}
	P.~Balbiani, M.~Di\'eguez, Temporal here and there, in: M.~Loizos, A.~Kakas
	(Eds.), Logics in Artificial Intelligence, Springer, 2016, pp. 81--96.
	
	\bibitem{F-D18}
	D.~Fern{\'{a}}ndez{-}Duque, The intuitionistic temporal logic of dynamical
	systems, Log. Methods Comput. Sci. 14~(3) (2018).
	
	\bibitem{327373}
	H.~Thiele, S.~Kalenka, On fuzzy temporal logic, in: Proceedings of Second IEEE
	International Conference on Fuzzy Systems, 1993, pp. 1027--1032 vol.2.
	\newblock \href {https://doi.org/10.1109/FUZZY.1993.327373}
	{\path{doi:10.1109/FUZZY.1993.327373}}.
	
	\bibitem{10.1145/2629606}
	A.~Frigeri, L.~Pasquale, P.~Spoletini, Fuzzy time in linear temporal logic, ACM
	Transations on Computational Logic 15~(4) (2014).
	
	\bibitem{Lu2010}
	Z.~Lu, J.~Liu, J.~C. Augusto, H.~Wang, Computational Intelligence in Complex
	Decision Systems, Atlantis Press, Paris, 2010, Ch. A Many-Valued Temporal
	Logic and Reasoning Framework for Decision Making, pp. 125--146.
	
	\bibitem{10.1007/978-3-031-15298-6_2}
	J.~P. Aguilera, M.~Di{\'e}guez, D.~Fern{\'a}ndez-Duque, B.~McLean, Time and
	{G}{\"o}del: Fuzzy temporal reasoning in pspace, in: A.~Ciabattoni,
	E.~Pimentel, R.~J. G.~B. de~Queiroz (Eds.), Logic, Language, Information, and
	Computation, Springer International Publishing, Cham, 2022, pp. 18--35.
	
	\bibitem{gtlkr}
	J.~P. Aguilera, M.~Di{\'{e}}guez, D.~Fern{\'{a}}ndez{-}Duque, B.~McLean, A
	{G}\"odel calculus for linear temporal logic, in: {Proceedings of the 19th
		International Conference on Principles of Knowledge Representation and
		Reasoning (KR)}, IJCAI, 2022, pp. 2--11.
	\newblock \href {https://doi.org/10.24963/kr.2022/1}
	{\path{doi:10.24963/kr.2022/1}}.
	
	\bibitem{Vidal21}
	A.~Vidal, On transitive modal many-valued logics, Fuzzy Sets Syst. 407 (2021)
	97--114.
	
	\bibitem{BalbianiDF21}
	P.~Balbiani, M.~Di{\'{e}}guez, D.~Fern{\'{a}}ndez{-}Duque, Some constructive
	variants of {S4} with the finite model property, in: 36th Annual {ACM/IEEE}
	Symposium on Logic in Computer Science, {LICS} 2021, Rome, Italy, June 29 -
	July 2, 2021, {IEEE}, 2021, pp. 1--13.
	
	\bibitem{CecyliaRauszer1980}
	C.~Rauszer, An algebraic and Kripke-style approach to a certain extension of
	intuitionistic logic, Instytut Matematyczny Polskiej Akademi Nauk, Warsaw,
	1980.
	
	\bibitem{BilkovaFK21}
	M.~B{\'{\i}}lkov{\'{a}}, S.~Frittella, D.~Kozhemiachenko, Constraint tableaux
	for two-dimensional fuzzy logics, in: A.~Das, S.~Negri (Eds.), Automated
	Reasoning with Analytic Tableaux and Related Methods - 30th International
	Conference, {TABLEAUX} 2021, Birmingham, UK, September 6-9, 2021,
	Proceedings, Vol. 12842 of Lecture Notes in Computer Science, Springer, 2021,
	pp. 20--37.
	
	\bibitem{10.1109/LICS.2015.65}
	L.~Bozzelli, D.~Pearce, On the complexity of temporal equilibrium logic, in:
	Proceedings of the 2015 30th Annual ACM/IEEE Symposium on Logic in Computer
	Science (LICS), LICS '15, IEEE Computer Society, USA, 2015, p. 645–656.
	\newblock \href {https://doi.org/10.1109/LICS.2015.65}
	{\path{doi:10.1109/LICS.2015.65}}.
	
	\bibitem{10.1093/logcom/11.6.909}
	M.~Reynolds, M.~Zakharyaschev, {On the Products of Linear Modal Logics},
	Journal of Logic and Computation 11~(6) (2001) 909--931.
	\newblock \href {https://doi.org/10.1093/logcom/11.6.909}
	{\path{doi:10.1093/logcom/11.6.909}}.
	
	\bibitem{CaicedoMRR17}
	X.~Caicedo, G.~Metcalfe, R.~O. Rodr{\'{\i}}guez, J.~Rogger, Decidability of
	order-based modal logics, Journal of Computer and System Sciences 88 (2017)
	53--74.
	
	\bibitem{Fernandez09}
	D.~Fern{\'{a}}ndez{-}Duque, Non-deterministic semantics for dynamic topological
	logic, Ann. Pure Appl. Log. 157~(2-3) (2009) 110--121.
	
	\bibitem{dtlaxiom}
	D.~Fern\'{a}ndez-Duque, A sound and complete axiomatization for dynamic
	topological logic, Journal of Symbolic Logic 77~(3) (2012) 947--969.
	
	\bibitem{DemriGLBook}
	S.~Demri, V.~Goranko, M.~Lange, Temporal Logics in Computer Science:
	Finite-State Systems, Cambridge Tracts in Theoretical Computer Science,
	Cambridge University Press, 2016.
	\newblock \href {https://doi.org/10.1017/CBO9781139236119}
	{\path{doi:10.1017/CBO9781139236119}}.
	
	\bibitem{LPAR-21:Godel_logics_and_fully}
	M.~Baaz, N.~Preining, G\"odel logics and the fully boxed fragment of {LTL}, in:
	T.~Eiter, D.~Sands (Eds.), LPAR-21. 21st International Conference on Logic
	for Programming, Artificial Intelligence and Reasoning, Vol.~46 of EPiC
	Series in Computing, 2017, pp. 404--416.
	\newblock \href {https://doi.org/10.29007/bdbm} {\path{doi:10.29007/bdbm}}.
	
	\bibitem{10.1007/978-3-642-20895-9_9}
	F.~Aguado, P.~Cabalar, G.~P{\'e}rez, C.~Vidal, Loop formulas for splitable
	temporal logic programs, in: J.~P. Delgrande, W.~Faber (Eds.), Logic
	Programming and Nonmonotonic Reasoning, Springer Berlin Heidelberg, Berlin,
	Heidelberg, 2011, pp. 80--92.
	
	\bibitem{BADALONI2006872}
	S.~Badaloni, M.~Giacomin, The algebra {IA}fuz: a framework for qualitative
	fuzzy temporal reasoning, Artificial Intelligence 170~(10) (2006) 872--908.
	\newblock \href {https://doi.org/https://doi.org/10.1016/j.artint.2006.04.001}
	{\path{doi:https://doi.org/10.1016/j.artint.2006.04.001}}.
	
	\bibitem{SCHOCKAERT20081158}
	S.~Schockaert, M.~{De Cock}, Temporal reasoning about fuzzy intervals,
	Artificial Intelligence 172~(8) (2008) 1158--1193.
	\newblock \href {https://doi.org/https://doi.org/10.1016/j.artint.2008.01.001}
	{\path{doi:https://doi.org/10.1016/j.artint.2008.01.001}}.
	
	\bibitem{Wolter1998}
	F.~Wolter, On logics with coimplication, Journal of Philosophical Logic 27~(4)
	(1998) 353--387.
	
	\bibitem{temporal}
	O.~Lichtenstein, A.~Pnueli, Propositional temporal logics: Decidability and
	completeness, Logic Jounal of the IGPL 8~(1) (2000) 55--85.
	
	\bibitem{MintsInt}
	G.~Mints, A Short Introduction to Intuitionistic Logic, University Series in
	Mathematics, Springer, 2000.
	
	\bibitem{Rauszer74}
	C.~Rauszer, A formalization of the propositional calculus of {H}-{B} logic,
	Studia Logica 33~(1) (1974) 23--34.
	
	\bibitem{Boudou2017}
	J.~Boudou, M.~Di\'eguez, D.~Fern\'andez-Duque, A decidable intuitionistic
	temporal logic, in: 26th {EACSL} Annual Conference on Computer Science Logic,
	{CSL} 2017, August 20-24, 2017, Stockholm, Sweden, 2017, pp. 14:1--14:17.
	
	\bibitem{Balbiani2017}
	P.~{Balbiani}, J.~{Boudou}, M.~{Di{\'e}guez}, D.~{Fern{\'a}ndez-Duque},
	Bisimulations for intuitionistic temporal logics, Journal of Applied Logics -
	IfCoLoG Journal of Logics and their Applications 8~(8) (2021) 2265--2285.
	
	\bibitem{eventually}
	M.~Di{\'{e}}guez, D.~Fern{\'{a}}ndez{-}Duque, An intuitionistic axiomatization
	of `{E}ventually', in: G.~Bezhanishvili, G.~D'Agostino, G.~Metcalfe,
	T.~Studer (Eds.), Advances in Modal Logic 12, proceedings of the 12th
	conference on Advances in Modal Logic, held in Bern, Switzerland, August
	27--31, 2018, College Publications, 2018, pp. 199--218.
	
	\bibitem{SK85}
	A.~P. Sistla, E.~M. Clarke, The complexity of propositional linear temporal
	logics, Journal of the ACM 32~(3) (1985) 733–749.
	
	\bibitem{91986055}
	A.~S. Troelstra, D.~van Dalen, Constructivism in Mathematics: An Introduction
	Volume 1, Studies in Logic and the Foundations of Mathematics 121,
	North-Holland, 1988.
	
	\bibitem{SAVITCH1970177}
	W.~J. Savitch, Relationships between nondeterministic and deterministic tape
	complexities, Journal of Computer and System Sciences 4~(2) (1970) 177--192.
	
	\bibitem{BalbianiToCL}
	P.~Balbiani, J.~Boudou, M.~Di{\'{e}}guez, D.~Fern{\'{a}}ndez{-}Duque,
	Intuitionistic linear temporal logics, {ACM} Transactions on Computational
	Logic 21~(2) (2020) 14:1--14:32.
	
	\bibitem{10.1145/73560.73582}
	M.~Y. Vardi, A temporal fixpoint calculus, in: Proceedings of the 15th ACM
	SIGPLAN-SIGACT Symposium on Principles of Programming Languages, POPL '88,
	Association for Computing Machinery, New York, NY, USA, 1988, p. 250–259.
	\newblock \href {https://doi.org/10.1145/73560.73582}
	{\path{doi:10.1145/73560.73582}}.
	
	\bibitem{CaicedoR10}
	X.~Caicedo, R.~O. Rodr{\'{\i}}guez, Standard {G}ödel modal logics, Studia
	Logica 94~(2) (2010) 189--214.
	
	\bibitem{BD18}
	P.~Balbiani, M.~Di{\'{e}}guez, Here and there modal logic with dual
	implication, in: G.~Bezhanishvili, G.~D'Agostino, G.~Metcalfe, T.~Studer
	(Eds.), Advances in Modal Logic 12, proceedings of the 12th conference on
	Advances in Modal Logic, held in Bern, Switzerland, August 27--31, 2018,
	College Publications, 2018, pp. 63--82.
	
\end{thebibliography}

\end{document}